\def\cA{{\cal A}}
\def\cD{{\cal D}}
\def\cH{{\cal H}}
\def\cK{{\cal K}}
\def\cO{{\cal O}}
\def\cP{{\cal P}}
\def\cT{{\cal T}}
\def\bR{\mathbb{R}}
\def\bC{\mathbb{C}}
\def\bN{\mathbb{N}}
\def\ulp{\underline {p}}
\def\ulx{\underline {x}}
\def\ul0{\underline {0}}
\def\fF{{\mathfrak F}}
\def\sD{{\mathscr D}}
\def\sS{{\mathscr S}}
\DeclareMathOperator{\supp}{supp}
\newtheorem{theorem}{Theorem}[section]
\newtheorem{lemma}[theorem]{Lemma}
\newtheorem{proposition}[theorem]{Proposition}
\newtheorem{definition}[theorem]{Definition}
\newenvironment{proof}[1][Proof]{\begin{trivlist}
\item[\hskip \labelsep {\bfseries #1}]}{\end{trivlist}}
\newcommand{\eins}{ {1\hspace*{-3pt}{\rm l}}}
\newcommand{\qed}{\hfill$\square$}
\newlength{\fdagwidth}
\newlength{\diagupwidth}
\newlength{\stepback}
\newcommand{\fdag}[2][\diagup]
{\text{$#2$\settowidth{\fdagwidth}{$#2$}\settowidth{\diagupwidth}{$#1$}\setlength{\stepback}
{0.5\fdagwidth}\hspace{-\stepback}\hspace{-0.5\diagupwidth}$#1$\hspace{\stepback}\hspace{-0.5\diagupwidth}}}
\newcommand{\Larry}{\mbox{\Large $\upsilon$}}
\title{}
\author{} 
\date{\today}    
\renewcommand{\theenumi}{\alph{enumi}} 
\renewcommand{\labelenumi}{(\theenumi)}
\begin{document}

{\huge
 \begin{center}
 {\sf Dirac Field on Moyal-Minkowski Spacetime and Non-Commutative Potential
 Scattering}
 \end{center}
 }
 ${}$\\
{\Large
 \begin{center}
 \bf Markus Borris and Rainer Verch\\[22pt]
 {\normalsize \sf
  Institut f\"ur Theoretische Physik, Universit\"at Leipzig\\
  Postfach 100 920, D-04009 Leipzig, Germany
  \\[16pt]
  }
 \end{center}
 }
${}$ \\
${}$ \hfill {\sl Dedicated to Klaus Fredenhagen} \\
${}$ \hfill {\sl on the occasion of his 61st birthday}
\\[10pt]
\noindent
{\small {\bf Abstract.} \quad The quantized free Dirac field is considered on Minkowski spacetime (of general dimension).
The Dirac field is coupled to an external scalar potential whose support is finite in time and which acts by a
Moyal-deformed multiplication with respect to the spatial variables. The Moyal-deformed multiplication corresponds
to the product of the algebra of a Moyal plane described in the setting of spectral geometry.
It will be explained how this leads to an interpretation of the Dirac field as a quantum field theory on 
Moyal-deformed Minkowski spacetime (with commutative time) in a setting of Lorentzian spectral geometries of which
some basic aspects will be sketched. The scattering transformation will be shown to be unitarily implementable
in the canonical vacuum representation of the Dirac field. Furthermore, it will be indicated how the functional derivatives of
the ensuing unitary scattering operators with respect to the strength of the non-commutative potential induce, in
the spirit of Bogoliubov's formula, quantum field operators (corresponding to observables) depending on the elements of the non-commutative algebra of
Moyal-Minkowski spacetime. 
 }
\\[10pt]
\section{Introduction}

There are several theoretical arguments leading to the hypothesis that spacetime coordinates may,
at extremely short length scales, no longer be described by continuous, mutually commuting numbers,
but by non-commutative quantities, so that spacetime coordinates of events become subject to 
uncertainty relations characteristic of quantum physics. In a somewhat more technical sense, this
means that the commutative algebra containing the coordinate functions of a spacetime manifold
in the sense of general relativity is to be replaced by a non-commutative algebra. There is a fair number
of publications devoted to giving arguments to this effect and we shall make no attempt at reviewing
this issue; for the best motivation, in our opinion, see \cite{DFR}.

For more than two decades by now, various approaches to non-commutative spacetimes have been investigated.
In the context of physically motivated investigations, the procedure consists in replacing either
the numerical coordinate functions on spacetime by elements of a non-commutative algebra, together with
(more or less, physically motivated) commutation relations for a set of gene\-ra\-ting elements, or in replacing
the spacetime symmetry group by a suitable deformation (e.g., a Hopf algebra), or both. Some of these
non-commutative spacetime models correspond to Lorentzian spacetime structure, some others to Riemannian.
It is then attempted to formulate physical theories (of matter or of gauge fields) over such spacetimes,
patterned either after classical field theory, after quantum mechanics, or quantum field theory.
(We refrain from listing the quite extended literature that is available on this matter and refer
instead to the review \cite{Szabo} for comments and references.) 
Some of the results obtained along that route offer interesting and promising perspectives. 
However, it is quite difficult, at present, to compare all these various approaches
\cite{BuchSummers,BaDoFrPi2,GrosseLechner,GrWu2,RivVTWu}. Their interpretation is often
problematic, in particular when the underlying non-commutative spacetime geometry corresponds to 
Riemannian metric signature. It would be very much desirable to have a mathematical and conceptual framework
which allows to stage a discussion as to which of the various proposals for non-commutative spacetime
geometries appear more favourable than others. 

While it is very likely that such a discussion won't lead to definite conclusions in the absence of experimental
evidence for non-commutative spacetime structure, it would still be a valuable step to have a mathematical
and conceptual framework broad enough so that the various approaches can be systematically compared.
Concerning Riemannian non-commutative geometries, it appears that the general approach by spectral geometry,
due to Alain Connes, provides such a framework in principle
\cite{Connes,Connes2,Connes3,GVF}. It incorporates many of the known examples
of Riemannian non-commutative manifolds. Moreover, the structural results which have been obtained in this
approach --- among them, but not only, the result that a spectral geometry with a commutative {``function''}- algebra
actually corresponds to a Riemannian manifold --- lend further to its support. Furthermore, one can give in this
approach a certain reformulation of the current standard model of elementary particle physics
\cite{ConnesLott}. This reformulation
is, however, not a full quantum field theory, and cannot account for all processes of elementary particle physics
in the same way as a quantum field theory. The conclusion may be --- and we shall in fact take this point of
view which we share, amongst others, with \cite{DFR} --- that a combination of non-commutative spacetime geometry and
quantum field theory is a most promising candidate for describing processes at extremely short distances
and high energies, up to (and possibly including) Planck scale. 

Now, the closest connection to the physical geometry of spacetime in quantum field theory is seen when spacetime
has Lorentzian signature. In keeping with what we just mentioned, it would then be of interest to have a framework
of spectral geometry corresponding to Lorentzian metric structure which includes, at least in approximation, those of
the known non-commutative
examples of Lorentzian spacetimes for which a good physical interpretation can be given. 
This is by no means a humble request as it is not at all straightforward to generalize the setting of spectral geometry
from Riemannian to Lorentzian signature. While some proposals have been made \cite{Hawkins,KopfPaschke,Moretti,Strohmaier},
they still seem to lack certain important ingredients and results. What appears to be lacking is a concept
of covariance (see \cite{PaschkeVerch} for discussion), and structural results like in the Riemannian case.
Endeavour in this regard shall be brought to the fore elsewhere \cite{PRV}, where a new approach to Lorentzian
spectral geometry will be developed.  

Accepting for the time being that the framework to appear in \cite{PRV} yields a viable general
setting for non-commutative Lorentzian geometry, the next question is if there is a general method of constructing
quantum field theories over such Lorentzian spectral geometries. To wit, the ways of assigning
quantum field theories with non-commutative spacetime models proposed up to now are, by and large, ad hoc, and not based on a systematic
general method. It is therefore of importance to see if there is a systematic way to assign quantum field theories to
abstractly described Lorentzian spectral geometries, and to identify the meaning of their observables.

In this publication, we attempt some first steps along these lines. We shall present a very superficial sketch of 
some elements of the
setting of Lorentzian spectral geometry anticipated to be set out in detail in \cite{PRV}. We discuss an
abstract way of assigning a quantum field theory to any Lorentzian spectral geometry. In the case of a ``classical''
spacetime, with the usual commutative algebra of coordinate functions, this method amounts to assigning the quantized
linear Dirac field to that spacetime. Part of our discussion will address the construction and meaning of observables in this setting.
Much of this  will actually be carried out at a much more concrete level by means of an example: The Dirac field on
the Moyal-deformed non-commutative version of Minkowski spacetime.  
It will turn out that the spectral geometrical (``spectral triple'') data of Moyal-Minkowski spacetime agree with those
of ordinary Minkowski spacetime except that the commutative algebra of functions on Minkowski spacetime --- which may be taken
to be Schwartz-type functions, $\mathscr{S}(\mathbb{R}^n)$ --- is replaced by the algebra of Schwartz-type functions
$\mathscr{S}(\mathbb{R}^n)_\star$ where the pointwise multiplication is replaced by the Moyal product (with commutative time).
This indeed fits into the setting of spectral geometry as was demonstrated in detail in \cite{GGISV}. One can assign to
these spectral data abstractly a quantized Dirac field essentially by second quantization (or, more specifically, CAR-quantization)
of the other spectral data, given by a Hilbert space carrying a Dirac operator etc. These quantizations agree for both usual
Minkowski spacetime and Moyal-Minkowski spacetime since they depend on the same spectral data. A difference is only noticable when
the algebras of coordinate functions are brought into play. A very natural way to let them enter is via a scattering process.

Let us explain this at the level of the Dirac field on usual Minkowski spacetime. Consider the coupling of the Dirac field
to an external scalar potential $V = V(x^0,\underline{x})$ where $x^0$ is the time-coordinate and $\underline{x}$ denotes the
spatial coordinates with respect to some chosen Lorentz frame. Assume that the potential is of Schwartz type and has finite
support in time, i.e.\ it is different from zero only for
$x^0$-coordinates lying in some finite interval. Then the scattering of the Dirac field is described  
by a unitary S-matrix, $S_V$, in the vacuum representation of the free Dirac field \cite{Araki2,Thaller}. Re-writing the scalar potential $V$ as
$c(x^0,\underline{x}) = V(x^0,\underline{x})$, and accordingly, $S_c = S_V$, one can form the functional derivative
$\Phi(c) = -i\left.d/d\lambda\right|_{\lambda = 0} S_{\lambda c}$ of the scattering matrix with respect to the strength of
the scalar scattering potential. This is a special case of ``Bogo\-liu\-bov's formula'' \cite{Bogoliubov}, whose content is,
roughly speaking, the idea that observable quantum fields can be obtained from scattering matrices by functional differentiation
with respect to the interaction strength. In the case of the Dirac field on Minkowski spacetime, one finds
that $\Phi(c) = : \bm\psi^+\bm\psi:(c)$, i.e.\ the Wick-ordered operator standing for ``absolute square of field strength'', which
is an observable quantum field, as opposed to the quantized Dirac field operators $\bm\psi(f)$ labelled by spinor fields $f$,
which aren't directly observable.

For the Dirac field on Moyal-Minkowski spacetime, one can proceed in a similar fashion, but now replacing the ordinary scalar potentials
by ``non-commutative'' scalar potentials, where $V = V(x^0,\underline{x})$ is coupled to spinor fields not by pointwise multiplication
at each spacetime point, but by Moyal-multi\-pli\-ca\-tion. (By the nature of Moyal-multiplication, this yields a non-local interaction
of the Dirac field 
with the external potential.) We will show that, under certain, general conditions, there will then again be unitary S-matrices
$S_c^M =S^M_V$ ($V = V_c$) describing scattering by such non-commtuative potentials in the vacuum representation of the free Dirac field. Furthermore,
we also show that the corresponding functional derivatives $\Phi(c) = -i\left.d/d\lambda\right|_{\lambda = 0} S^M_{\lambda c}$ 
exist as (essentially) selfadjoint operators. These operators are now in a natural way labelled by the elements $c$ of the
non-commutative algebra $\mathscr{S}(\mathbb{R}^n)_\star$ of Schwartz functions endowed with Moyal-multiplication. In principle,
this method of assigning operators (to be interpreted as observables) to elements in the --- possibly non-commutative --- algebra of
spacetime coordinate functions could be carried over to other models of non-commutative spacetimes, whenever it is possible
to have a well-posed scattering process in the indicated sense (which seems to imply restrictions on the degree of non-commutativity
of time-like coordinates, at least asymptotically).    
 
Let us now describe the contents and organization of the present article in more detail.
In Sec.\ 2 we present the theory of the Dirac field. First, the classical Dirac field coupled to an
external scalar potential, together with the theory of
solutions, will be summarized, on $n = 1 + s$ dimensional Minkowski spacetime, where $n$ is even or fulfills
the relations $n = 3,9$ mod $8$. This then implies the existence of a ``self-dual'' charge-conjugation.
These considerations draw mainly on material in \cite{Dimock,Coquereaux2,GVF}. We also present the abstract CAR ($C^*$-algebraic)
quantization of the Dirac field
and summarize the connection between the scattering transformation induced by a scalar scattering potential and
the $C^*$-algebraic Bogoliubov transformations which they induce, following mainly Araki's works
\cite{Araki2,Araki}. The scattering transformations will be considered both in covariant form (following ideas in \cite{BFV}), and
in the Hamiltonian form at the level of Cauchy-data, since the interplay between both formulations will be useful later on.

Sec.\ 3 is a short section recapitulating the basics on the vacuum-repre\-sen\-ta\-tion of the Dirac field, both in covariant description
and in the Hamiltonian, or Cauchy-data, formulation; and citing results on the unitary implementability of the 
Bogoliubov transform describing scalar potential scattering from \cite{Palmer}. The latter is mainly included for comparison
with the non-commutative case treated later.

In Section 4 we discuss the non-commutative algebra $\mathscr{S}(\mathbb{R}^n)_\star$ of Schwartz functions with
the Moyal product (with commutative time). Our presentation draws heavily on \cite{GGISV} with some small alterations
adapted to our setting.

Sec.\ 5 contains the main conceptual considerations. In this longer section, we give a sketch of some of the ingredients
of the approach to Lorentzian spectral geometry expected to appear in \cite{PRV}, illustrating the
main points by the example of Moyal-deformed Minkowski spacetime with commutative time. We will discuss the general route
of associating to a Lorentzian spectral geometry a quantum field theory, where the observables depend on the elements of
the (non-commutative) ``function-''algebra in the spectral geometric data, via the procedure of abstract CAR quantization and Bogoliubov's formula, as indicated above, in some detail. (Incidentally,
a formally similar set-up appears in \cite{GVultra}, but
in this reference, the quantum field transformations are not
linked to any dynamical process like scattering.)
 Also some speculations about a possible general structure
of quantum field theories over Lorentzian spectral geometries will appear in Sec.\ 5.

In Section 6 we investigate the solution properties of the Dirac equation with a non-commutative scalar potential
(we investigate two such potentials obtained by Moyal-multiplying scalar functions with spinor fields). Since the 
time coordinate, in a chosen Lorentz frame,
 is still commutative, we can formulate the Cauchy-problem and establish its well-posedness. There are unique
advanced and retarded fundamental solutions with respect to the chosen Lorentzian frame. This is implied by (in fact, equivalent to) 
a uniquely solvable initial value problem in the Hamiltonian formulation of the Dirac equation, which we solve by constructing
the Dyson series for the time-dependent interaction Hamiltonians (at the one-particle level). Correspondingly,
we construct the one-particle scattering transformations, which induce Bogoliubov-transformations on the CAR-algebra
of the free Dirac field, describing the scattering of the field by the non-commutative potential. This discussion parallels
the discussion of the usual scalar potential scattering in many formal respects, but at several points, different
arguments are required due to the non-local character of the non-commutative potential with respect to spatial coordinates.

The main results will be presented in Sec.\ 7. It will be proved that the Bogoliubov-transformations describing non-commutative
potential scattering are unitarily implementable in the vacuum presentation of the free quantized Dirac field on Minkowski spacetime.
In order to prove this, we make significant use of an earlier result by Langmann and Mickelsson \cite{LangmannMickelsson} who developed a sufficient
criterion for unitary implementability that can be applied in the case considered here. We show that this criterion is fulfilled.
Furthermore, one can differentiate the Bogoliubov transformation with respect to the strength of the scattering potential as 
mentioned above. This leads to a derivation on the CAR-algebra, which we show to be induced by an essentially selfadjoint ope\-ra\-tor
$\Phi(c)$ in the vacuum-representation of the Dirac field. This is the precise form of the relation
$\Phi(c) = -i\left.d/d\lambda\right|_{\lambda = 0} S^M_{\lambda c}$.

Finally, in Sec.\ 8, we derive the action of the derivative of the commutative and non-commutative scattering
Bogoliubov transformations with respect to the potential strength on the generating elements of the Dirac field algebra (the field operators), drawing on the results of Sec.\ 6. Together with
the relation $\Phi(c) = : \bm\psi^+\bm\psi:(c)$, which will be proved in Appendix A, this result finally hints at
the operational meaning of $\Phi(c)$, and illustrates how an interpretation of quantum field observables on
a non-commutative spacetime may be reached at in more general situations.

There is a short conclusion and outlook in Sec.\ 9.

\section{The Dirac Field}\label{S_DiracField}

We start our discussion by summarizing some of the essentials on Dirac spinors 
and Dirac representations as far as
required for describing the quantized Dirac field on $n=1+s$ dimensional 
Minkowski spacetime. In doing so, we proceed
quite leisurely; most of our presentation relies on
\cite{Dimock}, \cite{Araki}, \cite{Araki2}, \cite{BGP}, \cite{Thaller}, \cite{GVF}, \cite{Coquereaux2}.
We refer to these references for proofs of the
statements appearing in this section.

Minkowski spacetime of dimension $n=1+s$ will be described as $\bR^n$ with the 
Minkowskian metric
\[
\eta=(\eta_{\mu\nu})_{\mu,\nu=0}^s=\text{diag}(1,-1,\ldots,-1)
\]
where the entry $-1$ appears $s$ times.
[The opposite signature convention would in some respects suit the NCG context 
better, but we find it convenient to stick
to the convention which is more common in QFT.]
For any given $n=1+s\in\bN$, $s\geq 1$, we set
\begin{equation}\label{E_1}
N=N(n)=\begin{cases}2^{n/2}&:n\text{ even}\\2^{(n-1)/2}&:n\text{ odd}\end{cases}.
\end{equation}
Then we refer to a collection $(\gamma_0,\gamma_1,\ldots,\gamma_s)$ of $N\times 
N$-matrices as a set of
{\it Dirac matrices} if the relations
\begin{eqnarray*}
\gamma_\mu\gamma_\nu+\gamma_\nu\gamma_\mu=2\eta_{\mu\nu}\eins\quad 
(\mu,\nu=0,1,\ldots,s)\nonumber\\
\gamma_0^*=\gamma_0,\quad\gamma_k^*=-\gamma_k\quad (k=1,\ldots,s)
\end{eqnarray*}
are fulfilled.
A set of Dirac matrices thus corresponds to an irreducible Dirac representation 
of the complexified Clifford algebra
$\bC l_{1,s}$; it exists for all $n\geq 2$.

We shall from now on restrict ourselves to dimensions
\begin{equation}\label{E_3}
n\text{ even or }n=3,9\text{ mod }8.
\end{equation}
For these values of $n$, it is possible to find a {\it charge conjugation 
operator} $C:\bC^N\rightarrow \bC^N$
for the Dirac matrices $(\gamma_0,\gamma_1,\ldots,\gamma_s)$; this means that 
$C$ is an antilinear involution ($C^2=\eins$)
satisfying
\begin{equation}\label{E_ChConjC}
C\gamma_\mu=-\gamma_\mu C,
\end{equation}
whence, restriction to dimensions $n$ with~(\ref{E_3}) has the advantage that 
one can quantize Dirac fields with quite
arbitrary (real) potentials in the ``self-dual formalism'' at the level of 
spinor fields only, without need to use a
``doubled'' system of spinor (and co-spinor) fields. The resulting 
simplification is convenient later when discussing
the quantized Dirac field on Moyal-deformed spacetime. 

Let $(\gamma_0,\gamma_1,\ldots,\gamma_s)$ be a set of Dirac matrices with charge 
conjugation $C$. Then we denote by
\begin{equation}\label{E_DiracOpWithmV}
D_V=(-i\fdag\partial+m)+V
\end{equation}
the Dirac operator (with fixed constant mass $m>0$) with potential term $V\in 
C^\infty(\bR^n,\bR)$. $D_V$ acts on
$f\in C^\infty(\bR^n,\bC^N)$ according to
\begin{equation*}
(D_Vf)(x)=(-i\fdag\partial+m)f(x)+V(x)f(x)\quad (x\in\bR^n),
\end{equation*}
i.e. $V$ acts as a (scalar) multiplication operator, and writing $f^A(x)$ for 
the components of $f(x)$ regarded as a
column vector, the explicit definition of $(-i\fdag\partial+m)$ is given by
\begin{equation}\label{E_DefDiracOpNoPot}
((-i\fdag\partial+m)f)^A(x)=-i\gamma^{\mu A}_{\ \ B}\frac{\partial}{\partial 
x^\mu}f^B(x)+mf^A(x)\quad (x\in\bR^n),
\end{equation}
where $\gamma^\mu=\eta^{\mu\nu}\gamma_\nu$ with $(\eta^{\mu\nu})=\text{diag}(1,-
1,\ldots,-1),$ and with
$\gamma^{\mu A}_{\ \ B}$ denoting the matrix entries of $\gamma^\mu$. We also 
make use of the summation convention so
that doubly appearing indices are understood as being summed over.

On $C^\infty_0(\bR^n,\bC^N)$ we can introduce the sesquilinear form
\begin{equation}\label{E_SesqLinFormLRangle}
\langle f,h \rangle=\int_{\bR^n}\gamma_{0AB}\bar f^B(x)h^A(x)d^nx\quad (f,h\in 
C^\infty_0(\bR^n,\bC^N)),
\end{equation}
where $\gamma_{0AB}$ are the matrix elements of $\gamma_0$.
The charge conjugation $C$ is a skew conjugation for this sesquilinear form, 
that is,
\begin{equation}\label{E_CSkewForSesqu}
\langle Cf,Ch \rangle=-\langle h,f \rangle\quad (f,h\in C^\infty_0(\bR^n,\bC^N)).
\end{equation}

Since we will also use the description of solutions to the Dirac equation in 
terms of their Cauchy data, we have cause
to introduce also the following objects. Let $t\in\bR$ and define the $x^0=t$ 
hyperplane
\begin{equation*}
\Sigma_t=\{x=(x^0,x^1,\ldots,x^s)\in\bR^n:x^0=t\}
\end{equation*}
in $n=1+s$ dimensional Minkowski spacetime. We introduce the Hilbert space 
$\cD_t=L^2(\Sigma_t,\bC^N)$ with canonical
scalar product
\begin{equation*}
(v,w)_\cD=\int_{\Sigma_t}\bar v^A(\ulx)\delta_{AB}w^B(\ulx)d^s\ulx\quad 
(v,w\in\cD_t).
\end{equation*}
Each $\cD_t$ is canonically isomorphic to $L^2(\bR^s,\bC^N)$. Note that the 
charge conjugation $C$ induces a
conjugation, denoted by the same symbol $C$, on each $\cD_t$, i.e. it holds that
\begin{equation*}
(Cv,Cw)_\cD=(w,v)_\cD\quad (v,w\in\cD_t).
\end{equation*}
For a subset $G$ of $n$ dimensional Minkowski spacetime we define, following 
usual convention, $J^\pm(G)$ as the causal
future($+$)/past($-$) set of $G$, defined as consis\-ting of all points that can 
be reached from $G$ by smooth
future/past directed causal curves. We say that an open subset $G$ of $n$ 
dimensional Minkowski spacetime is
{\it hyperbolic} if for each pair of points $x,y\in G$ the set $J^+(x)\cap 
J^-(y)$ is a subset of $G$.
Examples of hyperbolic subsets are neighbourhoods $G$ of $\Sigma_t$ of the form 
$G=\{(x^0,x^1,\ldots,x^s):t_+>x^0>t_-\}$
where $t_+>t$ and $t_-<t$, or sets $G$ of the form $G=\text{int}(J^+(x)\cap J^-
(y))$ where $y$ lies in the open interior
of $J^+(x)$.

Now we collect some well-known results (well-known mainly in the context of the 
quantized Dirac field on curved
spacetimes) on the existence and uniqueness of advanced and retarded fundamental 
solutions for the Dirac operator $D_V$.

\begin{proposition}[\cite{Dimock},\cite{BGP}]\label{Prop1}
\hfill
\begin{enumerate}
\item\label{I_a_Prop1}
$\langle D_Vf,h\rangle=\langle f,D_Vh\rangle\quad(f,h\in 
C^\infty_0(\bR^n,\bC^N))$
\item
There is a unique pair of linear maps
\[
R^\pm_V:C^\infty_0(\bR^n,\bC^N)\rightarrow C^\infty(\bR^n,\bC^N)
\]
having the properties
\begin{eqnarray*}
&&\quad\quad D_VR^\pm_Vf=f=R^\pm_VD_Vf\text{ and }\nonumber\\
&&\supp R^\pm_Vf \subset J^\pm(\supp f)\quad(f\in C^\infty_0(\bR^n,\bC^N)).
\end{eqnarray*}
$R^\pm_V$ is called {\em advanced}(+)/{\em retarded}(-) 
{\em fundamental solution} of $D_V$.
\item\label{I_c_Prop1}
$CR^\pm_V=R^\pm_VC$
\item
Writing $R_V=R^+_V-R^-_V$, the form
\begin{equation} \label{E_propagatorV}
(f,h)_V=\langle f,iR_Vh\rangle
\end{equation}
is a sesquilinear form on $C^\infty_0(\bR^n,\bC^N)$, and $C$ is a conjugation 
for this form:
\begin{equation*}
(Cf,Ch)_V=(h,f)_V=\overline{(f,h)_V},\quad (f,h\in C^\infty_0(\bR^n,\bC^N)).
\end{equation*}
\item\label{I_e_Prop1}
For each $t\in\bR$ it holds that
\begin{equation*}
(f,h)_V=(P_tR_Vf,P_tR_Vh)_{\cD},\quad (f,h\in C^\infty_0(\bR^n,\bC^N)),
\end{equation*}
where $P_t:C^\infty(\bR^n,\bC^N)\rightarrow C^\infty(\Sigma_t,\bC^N)$ is the map 
given by
\begin{equation*}
P_t:\varphi\mapsto \varphi(t,\cdot)
\end{equation*}
for $\varphi:(x^0,\ulx)\mapsto \varphi(x^0,\ulx)$ in $C^\infty(\bR^n,\bC^N)$, 
$\ulx=(x^1,\ldots,x^s)$. Hence, $(\cdot,\cdot)_V$
is positive-semidefinite on $C^\infty_0(\bR^n,\bC^N)$.
\item The Cauchy-problem for the Dirac-equation $D_V\varphi = 0$ is
well-posed: Given any Cauchy-hyperplane $\Sigma_t$ and Cauchy-data $w \in
\mathscr{S}(\Sigma_t,\mathbb{C}^N)$, there is a unique 
$\varphi \in C^\infty(\mathbb{R}^n,\mathbb{C}^N)$ such that
$$ D_V \varphi = 0 \quad \text{and} \quad P_t\varphi = \varphi|_{\Sigma_t} = w\,.$$
Furthermore, the solution $\varphi$ fulfills the causal propagation property in the sense
that
$$ {\rm supp} \, \varphi \subset J({\rm supp}\,w) \,.$$
\item\label{I_f_Prop1}
Let $E_V$ be the subspace of all $f\in C^\infty_0(\bR^n,\bC^N)$ so that 
$(f,f)_V=0$, and let $\cK_V$ be the Hilbert space
arising as completion of $C^\infty_0(\bR^n,\bC^N)/E_V$ with respect to the 
scalar product induced by $(\cdot,\cdot)_V$ (which
will be denoted by the same symbol). The quotient map 
$C^\infty_0(\bR^n,\bC^N)\rightarrow C^\infty_0(\bR^n,\bC^N)/E_V$ will
be written
\begin{equation*}
f\mapsto [f]_V.
\end{equation*}
Then for each $t\in\bR$, the map
\begin{equation}
Q_{V,t}:[f]_V\mapsto P_tR_Vf
\end{equation}
extends to a unitary map from $\cK_V$ onto $\cD_t$.
\item\label{I_g_Prop1}
Let $G$ be a hyperbolic subset of $n$ dimensional Minkowski spacetime, and 
suppose that $V_1$ and $V_2$ are real-valued,
$C^\infty$, and that $V_1=V_2$ on $G$. Then
\begin{equation}
R^\pm_{V_1}f=R^\pm_{V_2}f\text{ on }G\text{ for all }f\in C^\infty_0(G,\bC^N).
\end{equation}
\end{enumerate}
\end{proposition}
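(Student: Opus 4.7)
The plan is to treat the proposition as a cascade: one algebraic identity, then the analytic core (existence of fundamental solutions and well-posedness of the Cauchy problem), with the remaining items falling out as consequences. For (a), I would integrate by parts in $\langle D_V f, h\rangle$ using the identity $\gamma_0 \gamma^\mu = \gamma^{\mu\,*}\gamma_0$, which follows from $\gamma_0^* = \gamma_0$, $\gamma_k^* = -\gamma_k$ and the Clifford relations, together with the reality of $V$; compact supports of $f,h$ discard all boundary terms.

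The analytic core is (b) and (f). For (b), the standard route is to compose $D_V$ with a conjugate first-order operator $\tilde D_V$ (e.g.\ $\tilde D_V = (-i\fdag\partial - m) + V$, with appropriate signs) so that $\tilde D_V D_V$ becomes a normally hyperbolic second-order operator with principal part $-\Box + m^2$ and lower-order coefficients depending smoothly on $V$ and $\partial V$. The theory of \cite{BGP} then supplies unique advanced and retarded Green's operators $G^\pm$ for the normally hyperbolic operator, and $R^\pm_V := G^\pm \tilde D_V$ gives the Dirac fundamental solutions with the required support property; uniqueness of $R^\pm_V$ reduces to that of $G^\pm$. For (f), I would separate out the time derivative in $D_V\varphi = 0$ and recast it as a first-order evolution equation $\partial_0\varphi = H_V(x^0)\varphi$, then construct its propagator as a Dyson-type series in the interaction with $V$. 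This is exactly the construction carried out for the non-commutative version in Sec.\ 6, of which the commutative case is a specialization; causal propagation of supports follows from standard energy estimates.

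The remaining items fall out. For (c), $C$-antilinearity together with $C\gamma_\mu = -\gamma_\mu C$ gives $CD_V = D_V C$, and uniqueness in (b) promotes this to $CR^\pm_V = R^\pm_V C$. For (d), sesquilinearity is immediate, and Hermitian symmetry follows from (a) via the adjoint relation $\langle R^\pm_V f, h\rangle = \langle f, R^\mp_V h\rangle$, which one obtains by inserting $h = D_V R^\mp_V h$ and integrating by parts (boundary terms vanish by the support property of $R^\mp_V$). That $C$ is a conjugation for $(\cdot,\cdot)_V$ then combines (c) with \eqref{E_CSkewForSesqu}. For (e), I would apply Green's identity for $D_V$ on the half-slabs $\{\pm(x^0-t)\geq 0\}$ with $R^\pm_V f$ and $R^\pm_V h$, which are smooth and have spatially compact support on each time slice; only the $\Sigma_t$-boundary contributes, and absorbing the $\gamma_0$ factor into the $\cD_t$ inner product reproduces $(P_tR_V f, P_tR_V h)_\cD$. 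Positive semi-definiteness is then automatic.

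For (g), (e) identifies $E_V$ with $\{f : P_t R_V f = 0\}$, which by the uniqueness part of (f) applied to $R_V f$ coincides with $\ker R_V$; thus $Q_{V,t}$ is a well-defined isometry on $C^\infty_0/E_V$, and surjectivity onto a dense subspace of $\cD_t$ is supplied by the existence half of (f) (given a Schwartz datum $w$, pick a compactly supported $\chi$ with $P_t\chi = w$ and set $f = D_V\chi$, so that $P_tR_V f = w$). For (h), writing $\psi = R^\pm_{V_1}f - R^\pm_{V_2}f$ with $f\in C^\infty_0(G,\bC^N)$, hyperbolicity of $G$ puts $\supp\psi \subset J^\pm(\supp f)\subset G$; on $G$ the two potentials agree, so $\psi$ solves $D_{V_1}\psi = 0$ there with trivial past/future data, and uniqueness in (f) forces $\psi \equiv 0$ on $G$. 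The main obstacle in a self-contained proof is the analytic core (b) together with (f); given the citations to \cite{Dimock,BGP} and the explicit non-commutative construction in Sec.\ 6, my plan would be to take these as inputs and verify the remaining items directly.
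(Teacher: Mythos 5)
Your plan for (a)--(e) and (h) largely tracks the paper's own sketch (which outsources (a)--(e) to Dimock/B\"ar--Ginoux--Pf\"affle), and your route for the well-posedness item is a legitimate alternative: the paper cites Dimock's Theorems 2.3 and 2.4 for $C^\infty_0$ resp.\ distributional data and then upgrades to Schwartz data by causal propagation plus a partition of unity, whereas you propose to run the Dyson-series construction of Proposition~\ref{L_ExUniqInitValProbMoyal} directly (of which the commutative case is the specialization $p=0$). That is cleaner and unifies the presentation with Section~6.

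There is, however, a genuine gap in your surjectivity argument for $Q_{V,t}$ in item (\ref{I_f_Prop1}). You propose to ``pick a compactly supported $\chi$ with $P_t\chi = w$ and set $f = D_V\chi$, so that $P_t R_V f = w$.'' But for $\chi \in C^\infty_0(\bR^n,\bC^N)$ one has $R_V D_V \chi = R^+_V D_V\chi - R^-_V D_V\chi = \chi - \chi = 0$, so $P_t R_V f = 0$, not $w$; this is precisely the relation that defines the kernel of $R_V$, so every $f$ you produce this way lies in $E_V$. The correct argument (the one the paper gives) must start from the \emph{solution} $\varphi$ with $P_t\varphi = w$ and use a temporal partition of unity $\chi_+ + \chi_- = 1$ with $\chi_+ \equiv 1$ on $\{x^0 > t+1\}$ and $\chi_+ \equiv 0$ on $\{x^0 < t-1\}$. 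Then $\chi_+\varphi$ is \emph{not} compactly supported (it coincides with $\varphi$ in the entire future), but $f := D_V(\chi_+\varphi) = -D_V(\chi_-\varphi)$ is compactly supported because $\supp\varphi \subset J(\supp w)$ with $\supp w$ compact. One then checks that $R_V f$ and $\varphi$ agree on $\{x^0 > t+1\}$ and invokes uniqueness of the Cauchy problem to conclude $R_V f = \varphi$, hence $P_tR_Vf = w$. The non-compactness of $\chi_+$ is essential; with a compactly supported $\chi$ the construction collapses.

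A smaller slip occurs in (\ref{I_g_Prop1}): you assert $\supp\psi \subset J^\pm(\supp f) \subset G$, but $J^\pm(\supp f)$ is unbounded and is not contained in a hyperbolic subset $G$ in general; hyperbolicity only guarantees $J^+(x)\cap J^-(y)\subset G$ for $x,y\in G$. The rest of your reasoning for (\ref{I_g_Prop1}) can be repaired by working intrinsically on $G$ (as the paper does: $G$ is itself globally hyperbolic, and the restrictions of $R^\pm_{V_j}$ to $C^\infty_0(G,\bC^N)$ are advanced/retarded fundamental solutions of the intrinsic Dirac operator $D_{V|G}$, hence coincide by uniqueness), rather than by a false support inclusion.
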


\begin{proof}[Sketch of proof]
\hfill
\begin{enumerate}
\item
This is a straightforward calculation.
\item
This is proved using the same argument as for Theorem 2.1 in~\cite{Dimock}, 
which applies also in the presence of a
real scalar potential $V$, together with the existence and uniqueness result for 
fundamental solutions of hyperbolic wave
operators, which can be found (in far greater generality than needed here) 
in~\cite{BGP}.
\item
This is a consequence of the uniqueness of the $R^\pm_V$ together with 
$CD_V=D_VC$.
\item
The only non-obvious part $(Cf,Ch)_V=(h,f)_V$ of the claim follows easily from 
(\ref{I_c_Prop1}),
equation~(\ref{E_CSkewForSesqu}) and the relation
\[
\langle R_Vf,h\rangle=-\langle f,R_Vh\rangle,
\]
which is shown in the proof of Theorem 2.1 in~\cite{Dimock}.
\item
The argument is the same as for Proposition 2.4 (d) in~\cite{Dimock}.
\item
This statement is proved analogously to Thm.\ 2.3 in \cite{Dimock}. It is
proved there for the case that the Cauchy-data are $C^\infty_0$. However,
existence and uniqueness of a distributional solution is proved in Prop.\ 2.4
in \cite{Dimock} for distributional Cauchy-data. The smoothness of the solution
in case of Cauchy-data that are of Schwartz type can be proved by making use of the
causal propagation property of the solutions (i.e.\ ${\rm supp}\,\varphi 
\subset J({\rm supp}\,w)$) in combination with a partition of unity argument.
\item
In view of~(\ref{I_e_Prop1}), what remains to be checked is the surjectivity of 
$Q_{V,t}$. To see this, let $w \in C_0^\infty(\Sigma_t,\mathbb{R}^N)$, and let
$\varphi \in C^\infty(\mathbb{R}^n,\mathbb{C}^N)$ be the solution of $D_V\varphi=0$
having Cauchy-data $w$ on $\Sigma_t$, i.e.\ $P_t \varphi = w$. We will construct
some $f \in C_0^\infty(\mathbb{R}^n,\mathbb{R}^N)$ so that
$P_t R_V f = w$. To this end, we take two further Cauchy-hyperplanes, $\Sigma_\pm$,
with 
$$ \Sigma_\pm = \{ x = (x^0,\ldots,x^s) : x^0 = t \pm 1\} \,. $$
Then we can consider the open sets
$$ G^{\pm} = {\rm int}(J^\pm(\Sigma_\mp)) = \{ x = (x^0,\ldots,x^s) : \pm x^0 > t \mp 1\}\,.$$
The sets $G^\pm$ form an open covering of $\mathbb{R}^n$. Let $\chi_\pm$ be a $C^\infty$
partition of unity of $\mathbb{R}^n$ subordinate to the covering. It is easy to see that 
the functions $\chi_\pm$ can be chosen in such a way that they depend only on $x^0$, and
we will assume that this choice has been made (although this is not relevant at this point; see
however the proof of Prop.\ \ref{Prop1forMoyal} (g) later). Then one has 
$$ D_V(\chi_+ \varphi) = - D_V(\chi_- \varphi) \,,$$
and owing to the support properties of $\chi_\pm$, one concludes that both $D_V(\chi_\pm \varphi)$
have support contained in $\overline{G^+ \cap G^-} = \{ (x^0,\ldots,x^s): t + 1 \ge x^0 \ge t - 1\}$.
One the other hand, since we also have ${\rm supp}\,\varphi \subset J({\supp}\,w)$ and since ${\rm supp}\, w$ was
assumed to be compact, this implies that both $D_V(\chi_\pm \varphi)$ are $C_0^\infty$.
Setting now $f = D_V(\chi_+ \varphi)$, it holds that $f \in C_0^\infty(\mathbb{R}^n,\mathbb{C}^N)$,
and moreover, we see that $D_V(R_V f - \varphi) = 0$. However, we also have that
$R_V f = R_V^+f - R_V^-f$, and owing the support properties of $R_V^\pm$, on $\mathbb{R}^n \backslash G^- =
\{ (x^0,\ldots,x^s) : x^0 > t +1\}$ it holds that $R_Vf = R_V^+f = \chi_+ \varphi = \varphi$. This means
that $P_\tau(R_V f - \varphi) = 0$ for all real $\tau > t +1$ and hence, since $(R_Vf -\varphi)$ is a $C^\infty$ 
solution of the Dirac equation with $C_0^\infty$ Cauchy-data, one actually concludes that 
$ (R_V f - \varphi) = 0$ on all of $\mathbb{R}^n$. Hence we have shown that there is some $f \in C_0^\infty(\mathbb{R}^n,
\mathbb{C}^N)$ with $R_Vf = \varphi$, implying $P_t R_Vf = P_t\varphi$. This shows that the range of 
$Q_{V,t}$ is dense, and by its isometric property, $Q_{V,t}$ is actually surjective.    
\item
The spacetime region $G$, endowed with the Minkowski metric (and standard spin 
structure) is a globally hyperbolic spacetime. Given a smooth real-valued $V:G\rightarrow \bR$ as potential function, 
one can define the ``intrinsic''
Dirac operator of $G$, $D_{V|G}:C^\infty(G,\bC^N)\rightarrow C^\infty(G,\bC^N)$ 
by $D_{V|G}f=D_Vf$,
$f\in C^\infty(G,\bC^N)$, which is nothing but the canonical restriction of 
$D_V$ onto $G$. According to~\cite{Dimock}
(cf.\ also~\cite{BGP}), there are unique advanced/retarded fundamental solutions
$R^\pm_{V|G}:C^\infty_0(G,\bC^N)\rightarrow C^\infty(G,\bC^N)$ for $D_{V|G}$. 
Now, if $V_1=V_2=V$ on $G$, then the
appropriate restrictions of $R^\pm_{V_1}$ and $R^\pm_{V_2}$ onto 
$C^\infty_0(G,\bC^N)$ (more precisely, the maps
$f\mapsto \left.(R^\pm_{V_j})\right|_G$, $f\in C^\infty_0(G,\bC^N)$, $j=1,2$) 
have the same properties as the map
$R^\pm_{V|G}$. Hence, by the uniqueness statement, these restrictions must be 
equal to $R^\pm_{V|G}$.\qed
\end{enumerate}
\end{proof}

Starting from $(\cK_V,C)$, the Hilbert space $\cK_V$ with conjugation $C$, one 
can form, following~\cite{Araki}, the
corresponding self-dual CAR-algebra $\fF(\cK_V,C)$. It is defined as follows: 
One introduces a $*$-algebra generated by
symbols $B(\xi)=B_V(\xi)$, $\xi\in\cK_V$, subject to the relations
\begin{eqnarray*}
B(\xi)^*&=&B(C\xi),\nonumber\\
B(\xi_1)^*B(\xi_2)+B(\xi_2)B(\xi_1)^*&=&2(\xi_1,\xi_2)_V\eins,\nonumber\\
\xi&\mapsto &B(\xi)\text{ is complex linear},
\end{eqnarray*}
where $\eins$ is an algebraic unit. One can show that the resulting $*$-algebra 
admits a unique $C^*$-norm, and
$\fF(\cK_V,C)$ is the completion of that $*$-algebra with respect to the $C^*$-
norm. Therefore, $\fF(\cK_V,C)$ is a
$C^*$-algebra. Wri\-ting $\Psi(f)=\Psi_V(f)=B_V([f]_V)$ for $f\in 
C^\infty_0(\bR^n,\bC^N)$, $\fF(\cK_V,C)$ is generated
by ``abstract field operators'' $\Psi(f)$, which are $\bC$-linear and obey the 
relations
\begin{eqnarray*}
\Psi(f)^*&=&\Psi(Cf),\nonumber\\
\{\Psi(f)^*,\Psi(h)\}&=&2(f,h)_V\eins,\nonumber\\
\Psi(D_Vf)&=&0.
\end{eqnarray*}
The construction of $\fF(\cK_V,C)$ can also be carried out, in analogous manner, 
for ``local subspaces'' of $\cK_V$.
For this purpose, let $G$ be a hyperbolic subset of $n$ dimensional Minkowski 
spacetime. For $f\in C^\infty_0(G,\bC^N)$,
we introduce the equivalence class
\begin{equation*}
[f]^G_V=\{f+h:h\in C^\infty_0(G,\bC^N),R_Vh=0\}.
\end{equation*}
As before, the space of the $[f]^G_V$ carries a scalar product $(\cdot,\cdot)_V$ 
in the same fashion as
$C^\infty_0(\bR^n,\bC^N)/E_V$ (denoted by the same symbol as there is no danger 
of confusion). The resulting Hilbert-space
completion will be denoted by $\cK^G_V$. Again, the charge conjugation $C$ 
induces a conjugation on $\cK^G_V$ as well.
Whence, we can form the self-dual CAR-algebra $\fF(\cK^G_V,C)$, which is the 
$C^*$-algebra
ge\-nerated by symbols $B^G_V([f]^G_V)$, $[f]^G_V\in\cK^G_V$, obeying relations 
akin to those fulfilled by the $B([f]_V)$
above.

\begin{lemma}\label{Lemma2}
Suppose that $G$ is a hyperbolic neighbourhood of a Cauchy hyperplane in $n$ 
dimensional Minkowski spacetime. Moreover,
suppose that $V_1$ and $V_2$ are two smooth, real-valued potentials which 
coincide on the region $G$. Then
\begin{enumerate}
\item
The map
\begin{equation*}
u^G_{V_1,V_2}:[f]^G_{V_1}\mapsto [f]_{V_2},\quad f\in C^\infty_0(G,\bC^N)
\end{equation*}
extends to a unitary between $\cK^G_{V_1}$ and $\cK_{V_2}$ commuting with the 
charge conjugation $C$.
\item
There is a $*$-algebra isomorphism
\begin{equation*}
\alpha^G_{V_1,V_2}:\fF(\cK^G_{V_1},C)\rightarrow \fF(\cK_{V_2},C)
\end{equation*}
induced by
\begin{equation*}
\alpha^G_{V_1,V_2}\left(B^G_{V_1}([f]^G_{V_1})\right)=B_{V_2}([f]_{V_2}),\quad 
f\in C^\infty_0(G,\bC^N).
\end{equation*}
\end{enumerate}
\end{lemma}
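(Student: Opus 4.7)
The plan is to prove (a) in three steps — well-definedness, isometry, and density of range — and then read off (b) from the universal property of the self-dual CAR algebra.

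First, for well-definedness, suppose $f,f' \in C^\infty_0(G,\bC^N)$ represent the same class $[f]^G_{V_1} = [f']^G_{V_1}$, i.e.\ $g := f - f' \in C^\infty_0(G,\bC^N)$ satisfies $R_{V_1}g = 0$. I would show $R_{V_2}g = 0$, which then gives $[f]_{V_2} = [f']_{V_2}$ via Prop.\ \ref{Prop1} (e). The key ingredient is Prop.\ \ref{Prop1} (g), which on the hyperbolic set $G$ yields $R^\pm_{V_2}g = R^\pm_{V_1}g$ (so $R_{V_2}g = 0$) on $G$, and in particular on a Cauchy hyperplane $\Sigma_t \subset G$. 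Now $R_{V_2}g$ is a $C^\infty$ solution of $D_{V_2}u = 0$ on all of $\bR^n$, and its Cauchy data on $\Sigma_t$ vanish. By the well-posedness of the Cauchy problem (Prop.\ \ref{Prop1} (f)), $R_{V_2}g = 0$ globally.

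For isometry, note that for $f \in C^\infty_0(G,\bC^N)$ the sesquilinear form $(f,f)_V = \langle f, iR_V f\rangle$ only integrates $R_V f$ against $f$, and supp$\,f \subset G$, so the values of $R_V f$ outside $G$ are irrelevant. Together with $R_{V_1}f = R_{V_2}f$ on $G$ (Prop.\ \ref{Prop1} (g)), this gives $(f,f)_{V_1} = (f,f)_{V_2}$, hence $u^G_{V_1,V_2}$ is well-defined and isometric on the dense subspace of classes and extends to an isometry $\cK^G_{V_1} \to \cK_{V_2}$. Intertwining with $C$ is immediate from $C f \in C^\infty_0(G,\bC^N)$ and the definitions.

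The main obstacle is density of the range. I would use the unitary $Q_{V_2,t}:\cK_{V_2} \to \cD_t$ from Prop.\ \ref{Prop1} (f) to reduce the problem to showing that $\{P_t R_{V_2} f : f \in C^\infty_0(G,\bC^N)\}$ is dense in $\cD_t = L^2(\Sigma_t,\bC^N)$. I would rerun the construction in the proof of Prop.\ \ref{Prop1} (f): given $w \in C^\infty_0(\Sigma_t,\bC^N)$, solve $D_{V_2}\varphi = 0$ with $P_t\varphi = w$, and use a partition of unity $\chi_\pm$ subordinate to $G^\pm_\epsilon = \{\pm x^0 > t \mp \epsilon\}$ for $\epsilon > 0$ small. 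Then $f := D_{V_2}(\chi_+\varphi)$ has support contained in $J(\supp w) \cap \{|x^0 - t| \le \epsilon\}$. Because $\supp w$ is compact and $G$ is an open neighbourhood of $\Sigma_t$, this compact ``thin slab'' lies inside $G$ for $\epsilon$ small (depending on $\supp w$), so $f \in C^\infty_0(G,\bC^N)$ and $P_t R_{V_2}f = w$ by the argument already used in Prop.\ \ref{Prop1} (f). Since $C^\infty_0(\Sigma_t,\bC^N)$ is dense in $\cD_t$, the range of $u^G_{V_1,V_2}$ is dense, and being an isometry it is unitary.

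Part (b) is then essentially automatic: by Araki's universal characterization of the self-dual CAR algebra, any unitary between two self-dual one-particle spaces intertwining the conjugations induces a unique $C^*$-isomorphism of the associated CAR algebras, transporting generators as in the formula for $\alpha^G_{V_1,V_2}$. I would verify that the images $B_{V_2}([f]_{V_2})$ satisfy the defining relations of $\fF(\cK^G_{V_1},C)$ on the generators $B^G_{V_1}([f]^G_{V_1})$ — $\bC$-linearity, $B_{V_2}([f]_{V_2})^* = B_{V_2}([Cf]_{V_2})$, and the CAR with inner product $(\cdot,\cdot)_{V_2} = (\cdot,\cdot)_{V_1}$ on classes from $C^\infty_0(G,\bC^N)$ — which all follow from part (a), and invoke uniqueness of the $C^*$-norm to extend the resulting map to a $*$-isomorphism.
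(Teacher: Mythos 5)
Your proof is correct and follows essentially the same route as the paper: isometry (and hence well-definedness) comes from the agreement of the fundamental solutions on $G$ via Prop.~\ref{Prop1}, density of range comes from the fact that sources can be localized within the hyperbolic neighbourhood $G$ of a Cauchy hyperplane, and part (b) is read off from Araki's general construction of the self-dual CAR algebra from a Hilbert space with conjugation. The only difference is that the paper obtains density by directly citing \cite{BGP} for the existence of $f \in C^\infty_0(G,\bC^N)$ with $R_{V_2}(f-h)=0$, whereas you re-derive this fact by rerunning the partition-of-unity construction from the proof of Prop.~\ref{Prop1} with a thin time slab; both are standard and equivalent.
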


\begin{proof}
\hfill
\begin{enumerate}
\item
\sloppy In view of~(\ref{I_g_Prop1}) of Proposition~\ref{Prop1}, 
$R^G_{V_1}f=R_{V_2}f$ on $G$ for all
$f\in C^\infty_0(G,\bC^N)$. Using the definition of $(\cdot,\cdot)_V$, this 
implies that the map $u^G_{V_1,V_2}$ is
isometric. To show that the map is surjective, let $h\in 
C^\infty_0(\bR^n,\bC^N)$. Since $G$ is an open neighbourhood of a
Cauchy surface, there is some $f\in C^\infty_0(G,\bC^N)$ such that $R_{V_2}(f-
h)=0$ (\cite{BGP}) and hence
$[f]_{V_2}=[h]_{V_2}$.
\item
This is a straightforward consequence of the fact that $u^G_{V_1,V_2}$ is a 
unitary intertwining the action of $C$,
see~\cite{Araki2} (or~\cite{BratteliRobinson1},\cite{BratteliRobinson2}).\qed
\end{enumerate}
\end{proof}

We will now make use of the Lemma. Suppose that a smooth scalar (real) potential 
$V$ is given on $n$ dimensional
Minkowski spacetime, having support contained in the time-slice 
$\{(x^0,x^1,\ldots,x^s):\lambda_-<x^0<\lambda_+\}$ for
some real numbers $\lambda_-<\lambda_+$. Then one can consider the regions
\begin{eqnarray*}
G_+&=&\{(x^0,x^1,\ldots,x^s):x^0>\lambda_++\frac{1}{2}\}\text{ and}\nonumber\\
G_-&=&\{(x^0,x^1,\ldots,x^s):x^0<\lambda_--\frac{1}{2}\}.
\end{eqnarray*}
They form hyperbolic neighbourhoods of the Cauchy hyperplanes 
\begin{eqnarray*}
\Sigma_+&=&\{(x^0,x^1,\ldots,x^s):x^0=\lambda_++1\}\text{ and}\nonumber\\
\Sigma_-&=&\{(x^0,x^1,\ldots,x^s):x^0=\lambda_--1\},
\end{eqnarray*}
respectively. Lemma~\ref{Lemma2} then warrants the following $C^*$-algebraic 
isomorphisms:
\begin{eqnarray*}
\alpha_{0\pm}=\alpha^{G_\pm}_{0,0}:\fF(\cK^{G_\pm}_0,C)&\rightarrow 
&\fF(\cK_0,C)\nonumber\\
B^{G_\pm}_0([f]^{G_\pm}_0)&\mapsto &B_0([f]_0),\quad f\in 
C^\infty_0(G_\pm,\bC^N),
\end{eqnarray*}
\begin{eqnarray*}
\alpha_{V\pm}=\alpha^{G_\pm}_{0,V}:\fF(\cK^{G_\pm}_0,C)&\rightarrow 
&\fF(\cK_V,C)\nonumber\\
B^{G_\pm}_0([f]^{G_\pm}_0)&\mapsto &B_V([f]_V),\quad f\in 
C^\infty_0(G_\pm,\bC^N).
\end{eqnarray*}
Since these maps are isomorphisms, they can be combined into an automorphism
\begin{eqnarray}
&&\beta_V:\fF(\cK_0,C)\rightarrow \fF(\cK_0,C),\nonumber\\
&&\beta_V=\alpha_{0,-}\circ \alpha_{V,-}^{-1}\circ \alpha_{V,+}\circ 
\alpha_{0,+}^{-1}.\label{E_DefBetaV}
\end{eqnarray}
This isomorphism is reminiscent of a similar object defined in Section 4 
of~\cite{BFV}, and it has similar properties. Its
significance is that it describes the scattering of the quantized Dirac field by 
the classical potential $V$ at the level of
a $C^*$-algebraic Bogoliubov transformation. In order to see this more clearly, 
we will discuss how $\beta_V$ relates to the
perhaps more familiar scattering formalism in terms of time-evolution on the 
Cauchy data.

For this purpose, let us first revisit $\beta_V$. We have
\begin{equation}\label{E_betaVUVf0}
\beta_V(B_0([f]_0))=B_0(U_V[f]_0),
\end{equation}
where $U_V$ is the unitary given by
\begin{equation*}
U_V=u_{0,-}\circ u_{V,-}^{-1}\circ u_{V,+}\circ u_{0,+}^{-1}
\end{equation*}
and where, similarly as for the isomorphisms above, we have used the 
abbreviations
\begin{equation*}
u_{0,\pm}=u^{G_\pm}_{0,0},\quad u_{V,\pm}=u^{G_\pm}_{0,V}.
\end{equation*}
The action of the succession of unitaries on the right hand side of the defining 
equation of $U_V$ can be described as
follows:
\begin{equation}\label{E_fSuccUnitar}
\xymatrix{
[f]_0 \ar@{|->}[r]^{u_{0,+}^{-1}}& [f^{G_+}]^{G_+}_0 \ar@{|->}[r]^{u_{V,+}}& 
[f^{G_+}]_V
\ar@{|->}[r]^{u_{V,-}^{-1}}& [f^{G_-}]^{G_-}_0 \ar@{|->}[r]^{u_{0,-}}& [f^{G_-
}]_0
}
\end{equation}
In this chain of mappings, $f^{G_+}$ is any element in $C^\infty_0(G_+,\bC^N)$ 
such that $R_0(f-f^{G_+})=0$, and $f^{G_-}$
is any element in $C^\infty_0(G_-,\bC^N)$ such that $R_V(f^{G_+}-f^{G_-})=0$.\\

Turning to the description of the quantized Dirac field in terms of its Cauchy 
data, we recall that
$\cD_0=L^2(\Sigma_0,d^sx)$, where $\Sigma_0$ is the $x^0=0$ Cauchy hyperplane. 
We have also seen that the charge conjugation
$C$ acts as a complex conjugation on $\cD_0$. Hence one can associate to $\cD_0$ 
and $C$ the CAR-algebra $\fF(\cD_0,C)$ with
generators $B_{\cD_0}(v)$, $v\in\cD_0$, linear in $v$, and with the relations 
\begin{equation} \label{E_CAR_D_0}
B_{\cD_0}(v)^*=B_{\cD_0}(Cv)\,, \quad
\{B_{\cD_0}(v)^*,B_{\cD_0}(w)\}=2(v,w)_{\cD}\eins\,.
\end{equation}
 Writing $Q_0$ for $Q_{V,0}$ in the case of $V=0$,
$Q_0:\cK_0\rightarrow \cD_0$, $[f]_0\mapsto P_0R_0f$ is a unitary intertwining 
the actions of $C$ on the respective
Hilbert spaces. Consequently~(\cite{Araki}), there is a canonical isomorphism 
$\varrho:\fF(\cK_0,C)\rightarrow \fF(\cD_0,C)$ of
CAR-algebras induced by
\begin{equation}\label{E_varrhoDef}
\varrho(B_0([f]_0))=B_{\cD_0}(Q_0([f]_0)).
\end{equation}
On $\fF(\cD_0,C)$, we can introduce two types of time evolutions, one 
corresponding to a vanishing potential $V=0$ in the
Dirac equation (the ``free'' dynamics), and another corresponding to a non-
vanishing $C^\infty$ potential term $V$ in the
Dirac equation (the ``interacting'' dynamics). These dynamical evolutions will 
be defined on the Cauchy-data space $\cD_0$.
To this end, we define on $C^\infty_0(\bR^s,\bC^N)$ the operators
\begin{eqnarray}
(H_0f)(x^1,\ldots,x^s)&=&\left(i\gamma^0\gamma^k\frac{\partial}{\partial 
x^k}+\gamma^0m\right)f(x^1,\ldots,x^s)\label{E_DefH0}\\
(H_V(t)f)(x^1,\ldots,x^s)&=&\left(i\gamma^0\gamma^k\frac{\partial}{\partial 
x^k}+\gamma^0m+\gamma^0V(t)\right)f(x^1,\ldots,x^s),\nonumber
\end{eqnarray}
where $f(x^1,\ldots,x^s)$ is regarded as column vector on which the $\gamma$-
matrices act by matrix multiplication. These
operators are symmetric with respect to the scalar product $(\cdot,\cdot)_{\cD}$, 
and even essentially selfadjoint under very
general conditions on (the real-valued) $V(t)$ (e.g. see Theorem 1.1 
of~\cite{Thaller}, and Theorem X.69 of~\cite{ReedSimon2}).
Moreover, it is easy to check that the operators anti-commute with the charge 
conjugation $C$,
\begin{equation}\label{E_CH0HV}
CH_0=-H_0C,\quad CH_V(t)=-H_V(t)C.
\end{equation}
There is hence a continuous unitary group $T_t$, $t\in\bR$, on $\cD_0$ such that
\begin{equation*}
\left.\frac{1}{i}\frac{d}{dt}T_t\right|_{t=0}v=H_0v,\quad v\in 
C^\infty_0(\bR^s,\bC^N).
\end{equation*}
There is also a continuous family of unitarities $T^{(V)}_{s,t}$, $s,t\in\bR$, so 
that
\begin{eqnarray}
T^{(V)}_{r,s}\circ T^{(V)}_{s,t}&=&T^{(V)}_{r,t},\quad 
T^{(V)}_{t,t}=\eins\quad\text{and}\nonumber\\
\left.\frac{1}{i}\frac{d}{ds}\right|_{s=t}T^{(V)}_{s,t}v&=&H_V(t)v,\quad v\in 
C^\infty_0(\bR^s,\bC^N).\label{E_EvolEqTst}
\end{eqnarray}
Let us indicate that the existence of the family $T^{(V)}_{s,t}$ with the said 
properties is implied by the well-posedness of
the Cauchy problem for the Dirac equation: For each $v\in 
C^\infty_0(\bR^s,\bC^N)\subset \cD_t$ there is a unique solution
$\varphi\in C^\infty(\bR^n,\bC^N)$ to the Dirac equation
\begin{equation*}
D_V\varphi=0
\end{equation*}
having initial data $v$ on $\Sigma_t$, i.e.
\begin{equation*}
P_t\varphi=\varphi|_{\Sigma_t}=v.
\end{equation*}
The solution property is equivalent to
\begin{equation}\label{E_PtvarphiHV}
\frac{1}{i}\frac{d}{dt}P_t\varphi=H_V(t)P_t\varphi.
\end{equation}
On the other hand, the uniqueness statement implies that there is a map 
$T^{(V)}_{t,t'}:P_{t'}\varphi\mapsto P_t\varphi$
with the properties $T^{(V)}_{t,t'}\circ T^{(V)}_{t',t''}=T^{(V)}_{t,t''}$ and 
$T^{(V)}_{t,t}=\eins$. And
$\left.\frac{1}{i}\frac{d}{ds}\right|_{s=t}T^{(V)}_{s,t}=H_V(t)$ on 
$C^\infty_0(\bR^s,\bC^N)$
then follows from~(\ref{E_PtvarphiHV}). The unitarity of $T^{(V)}_{t,t'}$ is 
implied by Proposition~\ref{Prop1}~(\ref{I_e_Prop1}).
We note also that
\begin{equation}\label{E_CT_ttCommute}
CT_t=T_tC\text{ and }CT^{(V)}_{t,t'}=T^{(V)}_{t,t'}C
\end{equation}
on account of~(\ref{E_CH0HV}). Therefore, $T_t$ and $T^{(V)}_{t,t'}$ give rise 
to CAR-algebra automorphisms $\tau_t$ and
$\tau^{(V)}_{t,t'}$ of $\fF(\cD_0,C)$ induced by
\begin{eqnarray*}
\tau_t(B_{\cD_0}(v))&=&B_{\cD_0}(T_tv),\nonumber\\
\tau^{(V)}_{t,t'}(B_{\cD_0}(v))&=&B_{\cD_0}(T^{(V)}_{t,t'}v).
\end{eqnarray*}
As before, we will now assume that the potential $V\in C^\infty(\bR^n,\bR)$ has 
support contained in the set
$\{x^0,x^1,\ldots,x^s):\lambda_-<x^0<\lambda_+\}$ for some real numbers 
$\lambda_-<\lambda_+$. The
{\em scattering operator} for the Dirac equation at the level of Cauchy 
data on $\Sigma_0$ is the operator
\begin{equation}\label{E_TscLim}
T^{(V)}_{sc}=\lim_{\stackrel{t'\rightarrow \infty}{\scriptscriptstyle 
t\rightarrow -\infty}}T_t^{-1}
\circ T^{(V)}_{t,t'}\circ T_{t'}
\end{equation}
on $\cD_0$. The restriction on the time-support of $V$ implies that the 
limit~(\ref{E_TscLim}) is reached as soon as
$t'>\lambda_+$ and $t<\lambda_-$, so that
\begin{equation}\label{E_Tsclambdapm}
T^{(V)}_{sc}=T_t^{-1}\circ T^{(V)}_{t,t'}\circ 
T_{t'},\quad\text{for }t'>\lambda_+,t<\lambda_-.
\end{equation}
We denote by $\tau^{(V)}_{sc}$ the corresponding {\em scattering 
morphism} on $\fF(\cD_0,C)$ given by
\begin{equation}\label{E_DeftauScMor}
\tau^{(V)}_{sc}(B_{\cD_0}(v))=B_{\cD_0}(T^{(V)}_{sc}v).
\end{equation}
Now we want to demonstrate that
\begin{equation}\label{E_GammaBetaVTauSc}
\varrho\circ \beta_V=\tau^{(V)}_\text{sc}\circ \varrho.
\end{equation}
Thus we aim at showing
\begin{equation}\label{E_Q0UV}
Q_0\circ U_V=T^{(V)}_{sc}\circ Q_0.
\end{equation}
To prove this, we write the action of $Q_0^{-1}\circ T^{(V)}_\text{sc}\circ Q_0$ 
on an element $[f]_0\in\cK_0$ in the
following form:
\begin{equation}\label{E_fSuccUnitar2}
\xymatrix{
[f]_0 \ar@{|->}[r]^{Q_0}& P_0R_0f \ar@{|->}[r]^{T_{t'}}& P_{t'}R_0f \ar@{|->}[r]^{(*1)}& P_{t'}R_0h^{G_+}\\
\ \ \ \ \ar@{|->}[r]^{(*2)}& P_{t'}R_Vh^{G_+}
\ar@{|->}[r]^{T^{(V)}_{t,t'}}& P_tR_Vh^{G_+}
\ar@{|->}[r]^{(*3)}& P_tR_Vh^{G_-}\\
\ \ \ \ \ar@{|->}[r]^{(*4)}& P_tR_0h^{G_-} \ar@{|->}[r]^{T_t^{-1}}& P_0R_0h^{G_-}
\ar@{|->}[r]^{Q_0^{-1}}& [h^{G_-}]_0
}
\end{equation}
In this succession of maps, at $(*1)$ an element $h^{G_+}\in 
C^\infty_0(G_+,\bC^N)$ is chosen so that $R_0h^{G_+}=R_0f$.
At $(*2)$, it is used that $R_Vh^{G_+}=R_0h^{G_+}$ on $G_+$ because of the 
support properties of the functions $V$ and
$h^{G_+}$, cf. Proposition~\ref{Prop1}~(\ref{I_g_Prop1}). At $(*3)$, an element 
$h^{G_-}\in C^\infty_0(G_-,\bC^N)$ is
chosen so that $R_Vh^{G_+}=R_Vh^{G_-}$. Then at $(*4)$, it is again used that 
$R_0h^{G_-}=R_Vh^{G_-}$ on $G_-$ owing to
the support properties of $V$ and $h^{G_-}$. Comparing~(\ref{E_fSuccUnitar}) 
and~(\ref{E_fSuccUnitar2}), one can see that
the specifications of $f^{G_\pm}$ and $h^{G_\pm}$ are such that one can may even 
choose (starting from the same given $f$)
$f^{G_\pm}=h^{G_\pm}$, and this then proves the relation~(\ref{E_Q0UV}). 
Summarizing, we have proved

\begin{lemma}\label{L_betaV_tauscV_intertw}
The morphism $\beta_V$ of $\fF(\cK_0,C)$ defined in~(\ref{E_DefBetaV}) and the 
scattering morphism $\tau^{(V)}_{sc}$
describing the potential scatte\-ring of the quantized Dirac field at the level 
of the Cauchy-data CAR-algebra
$\fF(\cD_0,C)$ are intertwined by the CAR-algebra isomorphism 
$\varrho:\fF(\cK_0,C)\rightarrow \fF(\cD_0,C)$ defined
in~(\ref{E_varrhoDef}), i.e. it holds that
\begin{equation*}
\varrho\circ \beta_V=\tau^{(V)}_\text{sc}\circ \varrho.
\end{equation*}
\end{lemma}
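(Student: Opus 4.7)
The plan is to reduce the claimed intertwining at the level of the CAR algebras to a one-particle identity, namely
\[
Q_0 \circ U_V = T^{(V)}_{sc}\circ Q_0 \quad \text{on } \cK_0,
\]
and then verify this by chasing a generic element $[f]_0$ around both compositions, using that the unitaries $u_{0,\pm}$, $u_{V,\pm}$ and the one-particle evolutions $T_t$, $T^{(V)}_{t,t'}$ are determined by the same Dirac-type Cauchy problems with coinciding data on the overlap regions.

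First I would observe that $\beta_V$ is by construction the Bogoliubov transform implemented by $U_V$ (equation (\ref{E_betaVUVf0})), that $\tau^{(V)}_{sc}$ is the Bogoliubov transform implemented by $T^{(V)}_{sc}$ (equation (\ref{E_DeftauScMor})), and that $\varrho$ is the CAR isomorphism induced by the one-particle unitary $Q_0$ (equation (\ref{E_varrhoDef})). Since all three morphisms are $*$-algebra morphisms defined on generators, evaluating both sides of $\varrho\circ\beta_V=\tau^{(V)}_{sc}\circ\varrho$ on an arbitrary generator $B_0([f]_0)$ reduces the lemma to the one-particle equation displayed above. (All objects intertwine $C$, so positivity/consistency with the self-dual structure is automatic.)

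Next I would unfold $U_V[f]_0$ along the chain (\ref{E_fSuccUnitar}): choose $f^{G_+}\in C_0^\infty(G_+,\bC^N)$ with $R_0(f-f^{G_+})=0$ and $f^{G_-}\in C_0^\infty(G_-,\bC^N)$ with $R_V(f^{G_+}-f^{G_-})=0$, so that $U_V[f]_0=[f^{G_-}]_0$ and hence $Q_0U_V[f]_0=P_0R_0f^{G_-}$. Then I would compute $T^{(V)}_{sc}Q_0[f]_0$ using (\ref{E_Tsclambdapm}) for some $t'>\lambda_+$, $t<\lambda_-$, following the diagram (\ref{E_fSuccUnitar2}): start from $P_0R_0f$, propagate freely to $P_{t'}R_0f$, rewrite this as $P_{t'}R_Vh^{G_+}$ using Proposition~\ref{Prop1}~(\ref{I_g_Prop1}) together with the support properties of $V$ and $h^{G_+}$ (so $R_V h^{G_+}=R_0 h^{G_+}=R_0 f$ on $G_+$, which contains $\Sigma_{t'}$), propagate with $T^{(V)}_{t,t'}$ (which by construction sends Cauchy data at $\Sigma_{t'}$ of a $D_V$-solution to its Cauchy data at $\Sigma_t$, and $R_V h^{G_+}$ is such a solution away from $\supp h^{G_+}$), rewrite the result as $P_t R_V h^{G_-}$, and finally convert to $P_t R_0 h^{G_-}$ and back to $\Sigma_0$ using again that $V$ vanishes on $G_-$.

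The main point, and what I expect to be the only nonroutine step, is to see that the auxiliary elements $f^{G_\pm}$ and $h^{G_\pm}$ appearing in the two chains can be chosen to coincide. Given the defining relations $R_0(f-f^{G_+})=0$ and $R_V(f^{G_+}-f^{G_-})=0$, the corresponding $h^{G_\pm}$ in the second chain must satisfy $R_0 h^{G_+}=R_0 f$ and $R_V h^{G_-}=R_V h^{G_+}$. These are the same conditions, and the existence of such elements is guaranteed by Proposition~\ref{Prop1}~(\ref{I_g_Prop1}) combined with the surjectivity part of Lemma~\ref{Lemma2}~(a). Choosing $f^{G_\pm}=h^{G_\pm}$, both endpoints of the diagrams become $P_0 R_0 h^{G_-}$, proving the one-particle identity and hence the lemma.
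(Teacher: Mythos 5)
Your proof is correct and takes essentially the same route as the paper: reduce the algebra-level identity on CAR generators to the one-particle relation $Q_0 \circ U_V = T^{(V)}_{sc} \circ Q_0$, unfold both sides via the chains of intermediate maps through $G_\pm$, and observe that the defining conditions on the auxiliary elements $f^{G_\pm}$ and $h^{G_\pm}$ coincide so that one can pick $f^{G_\pm}=h^{G_\pm}$. The only cosmetic inaccuracy is your parenthetical "which contains $\Sigma_{t'}$" — for $t'>\lambda_+$ the relevant region on which Proposition~\ref{Prop1}~(\ref{I_g_Prop1}) is applied to conclude $R_V h^{G_+}=R_0 h^{G_+}$ on $\Sigma_{t'}$ is the slab $\{x^0>\lambda_+\}$ where $V$ vanishes, not $G_+$ itself (unless one also takes $t'>\lambda_++\tfrac12$).
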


One advantage of working with $\beta_V$ is that it can be associated to 
localization in spacetime: It acts trivially
outside of $J(\supp V)$. This is our next assertion.

\begin{proposition}
Let $f\in C^\infty_0(\bR^n,\bC^N)$ have support causally disjoint from $\supp V$, 
i.e. $\supp f\cap J(\supp V)=\emptyset$.
Then
\begin{equation*}
\beta_V(\Psi_0(f))=\Psi_0(f).
\end{equation*}
\end{proposition}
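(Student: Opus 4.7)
The plan is to pass to the Cauchy-data formulation via Lemma \ref{L_betaV_tauscV_intertw}, where the claim becomes transparent. Unwinding $\varrho(\Psi_0(f)) = B_{\cD_0}(Q_0 [f]_0) = B_{\cD_0}(P_0 R_0 f)$ and using the intertwining relation $\varrho \circ \beta_V = \tau^{(V)}_{sc} \circ \varrho$, the identity $\beta_V(\Psi_0(f)) = \Psi_0(f)$ is equivalent, since $\varrho$ is an isomorphism, to
\[
T_{sc}^{(V)}(P_0 R_0 f) = P_0 R_0 f.
\]
Accordingly I would introduce $\varphi := R_0 f \in C^\infty(\bR^n, \bC^N)$, a smooth $D_0$-solution whose support, by Proposition~\ref{Prop1}(b), is contained in $J(\supp f)$.

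The key observation is that the hypothesis $\supp f \cap J(\supp V) = \emptyset$ is symmetric in $f$ and $V$, so equivalently $\supp V \cap J(\supp f) = \emptyset$. Combined with $\supp \varphi \subset J(\supp f)$, this forces $V\varphi \equiv 0$ on all of $\bR^n$, and hence
\[
D_V \varphi = D_0 \varphi + V\varphi = 0.
\]
Thus $\varphi$ is simultaneously a $D_0$-solution and a $D_V$-solution, and its Cauchy data on any hyperplane $\Sigma_s$ provide initial values for both dynamics.

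Using formula (\ref{E_Tsclambdapm}), I would then pick any $t' > \lambda_+$ and $t < \lambda_-$ and trace the three evolutions composing $T_{sc}^{(V)}$. Since $\varphi$ solves $D_0 \varphi = 0$, the free evolution gives $T_{t'}(P_0 \varphi) = P_{t'}\varphi$. Since $\varphi$ also solves $D_V\varphi = 0$, the family $s \mapsto P_s \varphi$ satisfies the evolution equation (\ref{E_PtvarphiHV}) with generator $H_V(s)$, so the uniqueness of $T^{(V)}_{t,t'}$ as the solution of (\ref{E_EvolEqTst}) (equivalently, the well-posedness statement of Proposition~\ref{Prop1}(f)) yields $T_{t,t'}^{(V)}(P_{t'}\varphi) = P_t\varphi$. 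A final application of the inverse free evolution gives $T_t^{-1}(P_t\varphi) = P_0\varphi$, so $T_{sc}^{(V)}(P_0\varphi) = P_0\varphi$, as required.

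The argument is structural rather than computational, so I do not foresee any serious obstacle. The only point requiring care is the support accounting: one must invoke $\supp R_0 f \subset J(\supp f)$ from Proposition~\ref{Prop1}(b) to conclude that $\varphi$ vanishes on a neighbourhood of $\supp V$, which then makes the pointwise product $V\varphi$ identically zero. Once this is in place, the uniqueness of the Cauchy problem for $D_V$ absorbs the entire scattering process and the proof closes without any further analytic input.
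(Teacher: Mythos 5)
Your proof is correct, and it takes a genuinely different route from the paper. Both arguments hinge on the same geometric fact — namely, that $\varphi = R_0 f$ vanishes identically on a neighbourhood of $\supp V$ (because $\supp\varphi \subset J(\supp f)$ and $J(\supp f) \cap \supp V = \emptyset$ is equivalent to the hypothesis), so $V\varphi = 0$ and $\varphi$ is simultaneously a $D_0$- and a $D_V$-solution. Beyond that shared core, the strategies diverge. The paper works entirely in the covariant picture: it invokes Proposition \ref{Prop1}(\ref{I_g_Prop1}) to get $R_0 f = R_V f$ on the hyperbolic region $\mathbb{R}^n \setminus J(\supp V)$, promotes this to a global identity $R_0 f = R_V f$ by the observation above, and then chases the chain of unitaries $u_{0,\pm}$, $u_{V,\pm}$ through the representatives $f^{G_\pm}$ in the definition of $U_V$ to deduce $U_V[f]_0 = [f]_0$. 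You instead push the problem through the intertwiner $\varrho$ of Lemma \ref{L_betaV_tauscV_intertw} to reduce it to the fixed-point statement $T^{(V)}_{sc}(P_0 R_0 f) = P_0 R_0 f$, and then verify this directly by tracking the Cauchy data of $\varphi$ through the three factors of formula \eqref{E_Tsclambdapm}, using uniqueness of the Cauchy problem for both $D_0$ and $D_V$. Your route is somewhat cleaner because it avoids having to manipulate the representatives $f^{G_\pm}$ and the hyperbolicity of $\mathbb{R}^n\setminus J(\supp V)$; the paper's route has the advantage of staying inside the covariant formulation, never needing the time-slice assumption on $\supp V$ beyond what is already built into the definition of $\beta_V$, and of exhibiting $R_0 f = R_V f$ as an identity of propagators, which is a reusable fact in its own right.
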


\begin{proof}
According to Proposition~\ref{Prop1}~(\ref{I_g_Prop1}), if $\supp f\cap J(\supp 
V)=\emptyset$, then $R_0f=R_Vf$ on
$\bR^n\setminus J(\supp V)$, since $\bR^n\setminus J(\supp V)$ is a hyperbolic 
region in $n$ dimensional Minkowski spacetime.
On the other hand, $\supp f\cap J(\supp V)=\emptyset$ is equivalent to $J(\supp 
f)\cap \supp V=\emptyset$. Now, $R_Vf$ is a
solution to $(D+V)R_Vf=0$, and $\supp R_Vf\subset J(\supp f)$, thus $\supp 
R_Vf\cap\supp V=\emptyset$, implying that $DR_Vf=0$.
Consequently, $R_0f$ and $R_Vf$ are both solutions to the Dirac equation with 
vanishing potential $V=0$, and coincide in the
neighbourhood of a Cauchy surface for $n$ dimensional Minkowski spacetime (which 
is implied by $R_0f=R_Vf$ on
$\bR^n\setminus J(\supp V)$ and $\supp R_0f\cup \supp R_Vf\subset J(\supp f)$). 
This implies that $R_0f=R_Vf$ on all of $n$
dimensional Minkowski spacetime. Now consider the map
\begin{equation}\label{E_fSuccUnitar_1}
\xymatrix{
U_V:[f]_0 \ar@{|->}[r]& [f^{G_+}]^{G_+}_0 \ar@{|->}[r]& [f^{G_+}]_V \ar@{|->}[r]& [f^{G_-}]^{G_-}_0 \ar@{|->}[r]& [f^{G_-}]_0
}.
\end{equation}
In this succession of mappings, $f^{G_+}$ is any element in 
$C^\infty_0(G_+,\bC^N)$ so that $R_0(f-f^{G_+})=0$, and $f^{G_-}$
is any element in $C^\infty_0(G_-,\bC^N)$ so that $R_V(f^{G_+}-f^{G_-})=0$. 
However, since $R_0f=R_Vf$, it holds that
$R_V(f-f^{G_+})=R_0f-R_Vf^{G_+}=R_0f-R_0f^{G_+}=0$ on $G_+$, hence $R_V(f-
f^{G_+})=0$ on $G_+$, and hence $R_V(f-f^{G_+})=0$
everywhere on $n$ dimensional Minkowski spacetime. Furthermore, $R_Vf^{G_-
}=R_Vf^{G_+}$, from which $R_V(f-f^{G_-})=0$ obtains.
On the other hand, we also have $R_Vf^{G_-}=R_0f^{G_-}$ on $G_-$, and 
$R_Vf=R_0f$, thus $R_0f=R_0f^{G_-}$ on $G_-$, and hence
everywhere on $n$ dimensional Minkowski spacetime. This shows that 
$[f]_0=[f^{G_-}]_0$ and therefore, $U_V[f]_0=[f]_0$. In view
of~(\ref{E_betaVUVf0}), this yields the claimed proposition.\qed
\end{proof}

\section{Scattering of the Dirac field in the vacuum representation and 
implementability of the scattering transformation}
\label{SectScattMink}

\sloppy The Hamilton operator $H_0$ defined in~(\ref{E_DefH0}) is essentially 
selfadjoint on $C^\infty_0(\bR^s,\bC^N)
\subset L^2(\bR^s,\bC^N)$ (see Theorem 1.1 of~\cite{Thaller}). Therefore, its 
selfadjoint extension, again denoted by $H_0$,
possesses a spectral decomposition, and we denote by $p_+$ the spectral 
projection of $H_0$ corresponding to the spectral
interval $(0,\infty)$. Since the mass term $m$ in the Dirac equation has been 
assumed to be strictly greater than $0$, $p_+$
projects in fact on the spectral values in $[m,\infty)$ and the orthogonal 
projector $p_-=\eins-p_+$ coincides with the spectral
projector of the spectral interval $(-\infty,-m]$. Owing to $CT_t=T_tC$ for all 
$t\in\bR$, it holds that
\begin{equation*}
Cp_+=p_-C.
\end{equation*}
Thus, $p_+$ is a basis projection in the sense of~\cite{Araki}. To this basis 
projection one can associate a pure, quasifree
state $\omega^{p_+}$ on $\fF(\cD_0,C)$ whose two-point function is given by
\begin{equation*}
\omega_2^{p_+}(B_{\cD_0}(u)^*B_{\cD_0}(w))=(u,p_+w)_\cD,\quad u,w\in\cD_0.
\end{equation*}
The state can be pulled back by $\varrho$ to a pure, quasifree state
\begin{equation*}
\omega^\text{vac}=\omega^{p_+}\circ \varrho
\end{equation*}
on $\fF(\cK_0,C)$. This state is actually just the usual ($\tilde\cP_+^\uparrow 
(n)$-invariant) vacuum state on $\fF(\cK_0,C)$.
Writing
\begin{equation*}
e_+=Q_0^{-1}p_+Q_0,
\end{equation*}
its GNS-representation $(\cH^\text{vac},\pi^\text{vac},\Omega^\text{vac})$ can 
be realized as follows:
\begin{equation*}
\cH^\text{vac}=\mathcal{F}_+(e_+(\cK_0)),
\end{equation*}
is the Fermionic Fock space over the one-particle Hilbert space $e_+(\cK_0)$ 
($e_+$ projects on the ``positive frequency''
solutions of the Dirac equation), $\Omega^\text{vac}=(1,0,0,\ldots)$ the Fock 
vacuum vector,
\begin{equation*}
\pi^\text{vac}(\Psi_0(f))= 
A(e_+C[f]_0)+A^+(e_+[f]_0)
,
\end{equation*}
where $A(\chi)$ and $A^+(\chi)$ denote, respectively, the Fermionic annihilation 
and creation operators of a $\chi$ in the
one-particle Hilbert space. We will sometimes use the notation
\begin{equation*}
{\bm\psi}(f)=\pi^\text{vac}(\Psi_0(f)),\quad f\in C^\infty_0(\bR^n,\bC^N),
\end{equation*}
for the field operators of the quantized Dirac field in the vacuum 
representation.

For several reasons, it is important to investigate the question of unitary 
implementability of the scattering transformation
in the vacuum representation. In the situation at hand, this is the question if 
there exists a unitary operator
$S_V:\cH^\text{vac}\rightarrow \cH^\text{vac}$ such that
\begin{equation}\label{E_ExUnitarSV}
S_V\pi^\text{vac}(\Psi_0(f))S_V^{-1}=\pi^\text{vac}(\beta_V(\Psi_0(f))),\quad 
f\in C^\infty_0(\bR^n,\bC^N).
\end{equation}
This issue has been investigated for the Dirac field on Minkowski spacetime by 
several authors in various publications that
have appeared over the last decades. The result is that there is such an 
operator, or ``S-matrix'', provided that the
potential $V$ is sufficiently regular and sufficiently fast decaying. A 
sufficient condition to this end, which is convenient
for comparison with developments presented later in this article, is the 
following

\begin{proposition}
If $V$ is in $\sS(\bR^n,\bR)$ (the class of Schwartz functions) and if $V$ has 
compact support with respect to the
time-coordinate $x^0$, then there is a unitary operator $S_V$ on 
$\cH^\text{vac}$ implementing the potential scattering
morphism $\beta_V$ in the vacuum representation, i.e. 
relation~(\ref{E_ExUnitarSV}) holds.
\end{proposition}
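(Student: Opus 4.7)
The plan is to reduce the statement to the Shale--Stinespring--Ruijsenaars criterion applied to the one-particle scattering operator $T^{(V)}_{sc}$, and then to verify that criterion using the Dyson expansion under the Schwartz/compact time-support hypothesis on $V$.

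First I would use Lemma \ref{L_betaV_tauscV_intertw} to transport the problem to the Cauchy-data description. Since $\varrho:\fF(\cK_0,C)\to\fF(\cD_0,C)$ is a $C^*$-isomorphism intertwining $\beta_V$ and $\tau^{(V)}_{sc}$, and since $\omega^{\text{vac}}=\omega^{p_+}\circ\varrho$, unitary implementation of $\beta_V$ in $(\cH^{\text{vac}},\pi^{\text{vac}},\Omega^{\text{vac}})$ is equivalent to unitary implementation of $\tau^{(V)}_{sc}$ in the GNS representation of $\omega^{p_+}$ on $\fF(\cD_0,C)$. Because $\omega^{p_+}$ is a pure quasifree state associated with the basis projection $p_+$, this GNS representation is a Fock representation over $p_+(\cD_0)$, and the standard Shale--Stinespring criterion (in the self-dual CAR formulation due to Araki, and specialised to basis projections by Ruijsenaars) applies.

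Next I would invoke that criterion: a Bogoliubov automorphism of $\fF(\cD_0,C)$ induced by a $C$-commuting unitary $U$ on $\cD_0$ is unitarily implemented in the Fock representation associated with $p_+$ if and only if the off-diagonal parts $p_\mp U p_\pm$ are Hilbert--Schmidt on $\cD_0$. Setting $U=T^{(V)}_{sc}$, since $CT^{(V)}_{t,t'}=T^{(V)}_{t,t'}C$ and $CT_t=T_tC$ (cf.~(\ref{E_CT_ttCommute})), the scattering operator commutes with $C$, so the task is reduced to showing
\[
p_-\,T^{(V)}_{sc}\,p_+ \in \mathcal{L}^2(\cD_0).
\]

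Now I would expand $T^{(V)}_{sc}$ in the interaction picture. Using (\ref{E_Tsclambdapm}) and (\ref{E_EvolEqTst}), the Dyson series reads
\[
T^{(V)}_{sc}=\eins+\sum_{k\geq 1}(-i)^k\!\!\int_{\lambda_-<t_1<\cdots<t_k<\lambda_+}\!\!T_{-t_k}\gamma^0 V(t_k,\cdot)T_{t_k-t_{k-1}}\cdots\gamma^0 V(t_1,\cdot)T_{t_1}\,dt_1\cdots dt_k,
\]
where each $V(t,\cdot)$ acts as a multiplication operator. The free evolution $T_t$ is diagonal in the momentum picture and the projections $p_\pm$ correspond to the positive/negative energy subspaces; hence at each order the kernel of the off-diagonal component $p_- (\cdots) p_+$ in momentum space involves products of Fourier transforms $\widetilde{V}(t_j,\underline{p}-\underline{q})$ together with oscillating exponentials in $t_j$ and smooth spinor factors bounded in momentum. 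Because $V\in\sS(\bR^n,\bR)$ the spatial Fourier transforms $\widetilde{V}(t_j,\cdot)$ are Schwartz, and the compact $t$-support localises the time integrals to $[\lambda_-,\lambda_+]$.

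The main obstacle is to turn these qualitative observations into a genuine Hilbert--Schmidt estimate and to sum the series. At first order, after evaluating the spinor projectors in momentum coordinates, one gets a kernel whose modulus is bounded by $C\,|\widetilde{V}(t,\underline{p}-\underline{q})|$ times momentum-space factors decaying like $\langle p\rangle^{-1}\langle q\rangle^{-1}$ (the usual mass-gap mechanism that forces energy exchange in any $p_+\to p_-$ transition); this is square-integrable on $\bR^s\times\bR^s$ because $\widetilde{V}$ is Schwartz. Higher-order terms produce iterated convolutions of Schwartz kernels with uniformly bounded intermediate propagators, so their Hilbert--Schmidt norms are controlled by powers of $\|V\|_{\sS}(\lambda_+-\lambda_-)$ and the series converges in Hilbert--Schmidt norm. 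This is exactly the content of the sufficient criterion proved by Palmer \cite{Palmer}, which I would quote once the hypotheses (Schwartz in space, compact in time, $C$-commuting, mass gap) have been identified. The Shale--Stinespring criterion is then satisfied, yielding the desired unitary $S_V$ on $\cH^{\text{vac}}$ intertwining $\pi^{\text{vac}}$ and $\pi^{\text{vac}}\circ\beta_V$.
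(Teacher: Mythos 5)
Your proof follows essentially the paper's route: the paper likewise passes to the Cauchy-data picture via Lemma~\ref{L_betaV_tauscV_intertw}, uses the Shale--Stinespring/Araki criterion that $[p_\pm,T^{(V)}_{sc}]$ be Hilbert--Schmidt, and then simply quotes Palmer's theorem~\cite{Palmer}, noting that $V\in\sS(\bR^n,\bR)$ with compact time-support visibly satisfies Palmer's integrability hypotheses on $\|\partial^\alpha_t \hat V(t,\cdot)\|_{L^q}$. Your intermediate Dyson-series heuristic is not actually load-bearing (you fall back on Palmer for the rigorous estimate) and is imprecisely stated: the decay $\langle p\rangle^{-1}\langle q\rangle^{-1}$ attributed to a ``mass-gap mechanism'' is not what actually occurs and would not by itself give square-integrability in general $s$; the genuine mechanism is that the compactly time-supported Schwartz $V$ has a rapidly decaying joint time-space Fourier transform, evaluated (after the $t$-integration) at frequency $|\widehat{H}(\underline{p})|+|\widehat{H}(\underline{q})|\geq |\underline{p}|+|\underline{q}|$, which forces rapid joint decay in $(\underline{p},\underline{q})$ and hence the Hilbert--Schmidt property. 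Nothing is lost by removing that sketch and going directly to Palmer, as the paper does.
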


This is, however, a very specialized version of results which have been obtained 
previously. We make no attempt to review
these results here, but mention the following. It is quite obvious that one may 
generalize the result by dropping the compact
support of $V$ in time, relaxing the smoothness requirement and replacing the 
rapid decay conditions by suitable conditions
of integrability. Furthermore, one can generalize $V$ to a matrix-valued 
function as long as the resulting Hamilton operator
$H_V(t)$ remains essentially selfadjoint and still fulfills 
\begin{equation*}
CH_V(t)=-H_V(t)C.
\end{equation*}
Generalizations of this type have been considered by Palmer~\cite{Palmer}, and 
he has found that the S-matrix $S_V$ implementing
the scattering transformation exists, if $\|\partial^\alpha_t\hat 
V(t,\cdot)\|_{L^q(\bR^s)}$ is integrable over $t\in\bR$ for
all $1\leq q<2+\varepsilon$ and for all $0\leq \alpha<s/2+\varepsilon$. $\hat V$ 
denotes the Fourier transform of $V$ with
respect to the spatial variables $x^1,\ldots,x^s$. We refer to~\cite{Palmer} for 
further details, and also for references to
related, earlier work.

\section{Moyal Minkowski spacetime}\label{S_MoyalMinkST}

As it is usually introduced, $n=1+s$ dimensional Minkowski spacetime gets Moyal-deformed
if one postulates the following
commutation relations between the coordinates:
\begin{equation}\label{E_xmuxnucomm1}
[x_\mu,x_\nu]=i\theta_{\mu\nu}\quad(\mu,\nu=0,\ldots,s)
\end{equation}
with $\theta$ being some antisymmetric, real $(n\times n)$-matrix. Of course, 
this stems from the idea of generalizing
the behaviour of the quantum mechanical position operators $x_\mu$ originating 
in motivations like~\cite{DFR}
(restricting event localization by incorporating the uncertainty principle in 
general relativity)
and~\cite{Szabo} (string theory). Alternatively one can implement the 
relations~(\ref{E_xmuxnucomm1}) by
changing the product structure on the spacetime manifold such that
\begin{equation*}
[x_\mu,x_\nu]_\star=x_\mu\star x_\nu-x_\nu\star 
x_\mu=i\theta_{\mu\nu}\quad(\mu,\nu=0,\ldots,s)
\end{equation*}
is fulfilled between the coordinate chart functions $x_\mu$ of the manifold. 
Thereby $\star$ is the non-commutative
Moyal product. But let us make this last point more precise now.

Let $q,p\in\bN_0$, with $p=2l$ for $l\in\bN_0$, and let $\theta>0$. Then we 
define the $(q+p)\times(q+p)$-matrix
\begin{equation*}
M=M_\theta=\frac{\theta}{2}
\left[ 
\begin{array}{ccc|ccc}
&&&&&\\
&0_{q\times q}&&&0_{q\times p}& \\
&&&&&\\
\hline 
&&&0_{l\times l}&&\eins_{l\times l}\\ 
&0_{p\times q}&&&&\\
&&&-\eins_{l\times l}&&0_{l\times l}\\ 
\end{array}
\right] 
\end{equation*}
having the $2l\times 2l$-dimensional standard symplectic matrix in the lower 
right corner, and zeros everywhere else.
With this notation, we introduce the Moyal product
\begin{equation*}
c\star_{(q,p)}g(x)=\frac{1}{(2\pi)^{q+p}}\iint c(x-Mu)g(x+v)e^{-iu\cdot 
v}d^{q+p}ud^{q+p}v,\quad x\in\bR^{q+p},
\end{equation*}
for (complex-valued) Schwartz functions $c,g\in\sS(\bR^{q+p})$. By $u\cdot v$ we 
denote the standard Euclidean scalar product of vectors
$u,v\in\bR^{q+p}$. One can show, either directly or by adapting the arguments 
of~\cite{GGISV}, that $c\star_{(q,p)}g$ is
again in $\sS(\bR^{q+p})$ and that the product $c\star_{(q,p)}g$ is jointly 
continuous in $c$ and $g$ with respect to the
usual test-function topology on $\sS(\bR^{q+p})$.\\
In the case that $q=0$, $M=M_\theta$ is invertible, and then one has
\begin{equation*}
c\star_{(0,p)}g(x)=\frac{1}{(\pi\theta)^p}\iint c(x-u)g(x+v)e^{-iu\cdot M^{-
1}v}d^pud^pv,
\end{equation*}
which is the usual Moyal product investigated in several references 
(see~\cite{GGISV},\cite{GV}). In the other extreme
case, $p=0$, one finds
\begin{equation*}
c\star_{(q,0)}g(x)=\frac{1}{(2\pi)^q}\iint c(x)g(x+v)e^{-iu\cdot 
v}d^qud^qv=c(x)g(x),
\end{equation*}
i.e.\ the product $c\star_{(q,0)}g$ coincides with the usual pointwise product 
of functions.\\
In the general case, it is straightforward to check that
\begin{equation}\label{E_TensProdqp}
(c\otimes \varphi)\star_{(q,p)}(g\otimes 
\xi)=(c\star_{(q,0)}g)\otimes(\varphi\star_{(0,p)}\xi)
\end{equation}
for $c,g\in\sS(\bR^q)$ and $\varphi,\xi\in\sS(\bR^p)$. Together with the 
continuity of $\cdot\star_{(q,p)}\cdot$ in
both entries and the fact that $\sS(\bR^{q+p})=\sS(\bR^q)\otimes\sS(\bR^p)$ 
topologically, this shows that the product
$\star_{(q,p)}$ is associative and furnishes an algebra product on 
$\sS(\bR^{q+p})$, because these properties are known
for $\star_{(q,0)}$ and $\star_{(0,p)}$. Furthermore, the standard complex 
conjugation induces a $*$-involution on
$\sS(\bR^{q+p})$ with respect to the product $\star_{(q,p)}$. We denote this by 
$c\mapsto c^*=\bar c$. As a $*$-involution,
it has the property
\begin{equation*}
c^*\star_{(q,p)}g^*=(g\star_{(q,p)}c)^*.
\end{equation*}
With the algebra product $\star_{(q,p)}$ and the complex conjugation as a $*$-
involution, $\sS(\bR^{q+p})$ is turned into a
$*$-algebra which we denote by $\sS^M_{\star_{(q,p)}}$. By~(\ref{E_TensProdqp}), 
we have
\begin{equation}\label{E_qpMoyalTensProd}
\sS^M_{\star_{(q,p)}}=\sS^M_{\star_{(q,0)}}\otimes \sS^M_{\star_{(0,p)}},
\end{equation}
which holds also in the topological sense.\\
One can adapt the arguments in~\cite{GGISV} to observe that the product 
$\star_{(q,p)}$ can be extended to much larger spaces of
functions and even distributions. An important case is that one factor in 
$c\star_{(q,p)}g$ is in $\sS(\bR^{q+p})$ and the other
is in $L^2(\bR^{q+p})$. Again we consider this situation first for $q=0$. Using 
Lemma 2.12 of~\cite{GGISV} (resp. reference [43]
therein, which is~\cite{GV} here), it holds that $c\star_{(0,p)}g$ is in 
$L^2(\bR^p)$ if both $c$ and $g$ are in $L^2(\bR^p)$.
One can thus also define the operator of left Moyal multiplication on 
$L^2(\bR^p)$,
\begin{equation*}
{\rm L}_c:g\mapsto c\star_{(0,p)}g,\quad g\in L^2(\bR^p),
\end{equation*}
for $c\in L^2(\bR^p)$.  It is proved in~\cite{GV} that this operator is bounded, 
more precisely, that
\begin{equation}\label{E_LRMoyalBounded}
\|{\rm L}_c g\|_{L^2}\leq \frac{1}{(2\pi\theta)^{p/2}}\|c\|_{L^2}\|g\|_{L^2}.
\end{equation}
The same estimate holds then also for the operator of right multiplication by 
$c\in L^2(\bR^p)$ given by
\begin{equation*}
{\rm R}_c:g\mapsto g\star_{(0,p)}c,\quad g\in L^2(\bR^p),
\end{equation*}
since $\|\overline{{\rm R}_c g}\|_{L^2}=\|{\rm L}_{\bar c}\bar g\|_{L^2}$ and 
$\|\bar g\|_{L^2}=\|g\|_{L^2}$, where the overlining
denotes complex conjugation. For $p=0$, as $c\star_{(q,0)}g=c\cdot g=g\cdot 
c=g\star_{(q,0)}c$ is just the usual pointwise product
of functions, one has
\begin{equation*}
\|c\star_{(q,0)}g\|_{L^2}=\|g\star_{(q,0)}c\|_{L^2}\leq \|c\|_\infty\|g\|_{L^2},
\end{equation*}
where $\|\cdot\|_\infty$ is the supremum norm. This entails that for 
$c=c_q\otimes c_p$ with $c_q\in \sS(\bR^q)$ and
$c_p\in \sS(\bR^p)$, the operators
\begin{equation*}
{\rm L}_c:g\mapsto c\star_{(q,p)}g,\text{ and }{\rm R}_c:g\mapsto 
g\star_{(q,p)}c,\quad g\in L^2(\bR^{q+p}),
\end{equation*}
are bounded operators whose operator norms are not greater than 
$\frac{1}{(2\pi\theta)^{p/2}}\|c_q\|_\infty\|c_p\|_{L^2}$.
Since each $c\in\sS(\bR^{q+p})$ can be approximated by a sequence 
$\sum_{j=1}^Nc_{q,j}\otimes c_{p,j}$ as $N\rightarrow \infty$,
so that for all of the Schwartz norms $\|\cdot\|_s$ there holds 
$\sum_{j=1}^\infty \|c_{q,j}\otimes c_{p,j}\|_s<\infty$, it
follows that ${\rm L}_c$ and ${\rm R}_c$ are bounded operators on 
$L^2(\bR^{q+p})$ for all $c\in\sS(\bR^{q+p})$. Furthermore, we
put on record here the following hermiticity property of ${\rm L}_c$ and ${\rm 
R}_c$.

\begin{lemma}\label{Lemma_HermPropMoyal}
Let $c\in\sS^M_{\star_{(q,p)}}$ and let $\varphi,\psi\in L^2(\bR^{q+p})$. Then
\begin{eqnarray}
(c\star_{(q,p)}\varphi,\psi)_{L^2}&=&(\varphi,c^*\star_{(q,p)}\psi)_{L^2}\label{E_LemmaLRHerm_L}\\
(\varphi\star_{(q,p)}c,\psi)_{L^2}&=&(\varphi,\psi\star_{(q,p)}c^*)_{L^2}\label{E_LemmaLRHerm_R}.
\end{eqnarray}
\end{lemma}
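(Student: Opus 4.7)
The plan is to reduce both identities to three ingredients: associativity of $\star_{(q,p)}$, the involution relation $\overline{f \star_{(q,p)} g} = \bar g \star_{(q,p)} \bar f$ recalled just before the lemma, and the tracial identity
\[
\int_{\bR^{q+p}} (f \star_{(q,p)} g)(x)\, d^{q+p}x = \int_{\bR^{q+p}} f(x)\, g(x)\, d^{q+p}x
\]
for Schwartz $f, g$. The tracial identity is standard for the $(0,p)$ case (see \cite{GGISV,GV}), is tautological for $(q,0)$ since $\star_{(q,0)}$ is pointwise multiplication, and extends to the mixed setting via the factorization~(\ref{E_TensProdqp}) together with Fubini.

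First, applying the tracial identity to the pairs $(f \star g,\, h)$ and $(f,\, g \star h)$ and invoking associativity would yield the cyclic rearrangement
\[
\int (f \star_{(q,p)} g)(x)\, h(x)\, dx = \int f(x)\, (g \star_{(q,p)} h)(x)\, dx
\]
for Schwartz $f, g, h$. Combining this with the involution, we would obtain~(\ref{E_LemmaLRHerm_L}) for Schwartz $c, \varphi, \psi$ via the chain
\[
(c \star_{(q,p)} \varphi, \psi)_{L^2} = \int (\bar\varphi \star_{(q,p)} \bar c)(x)\, \psi(x)\, dx = \int \bar\varphi(x)\, (\bar c \star_{(q,p)} \psi)(x)\, dx = (\varphi, c^{*} \star_{(q,p)} \psi)_{L^2},
\]
where the first equality uses $\overline{c \star \varphi} = \bar\varphi \star \bar c$ and the second is the cyclic rearrangement. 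Identity~(\ref{E_LemmaLRHerm_R}) then follows by the mirrored manipulation: start from $\overline{\varphi \star c} = \bar c \star \bar\varphi$ and apply cyclic rearrangement twice, together with the commutativity of pointwise multiplication, to shift $c^{*}$ from the left of the product to the right.

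To pass from Schwartz $\varphi, \psi$ to arbitrary $\varphi, \psi \in L^2(\bR^{q+p})$, we observe that for fixed $c \in \sS^M_{\star_{(q,p)}}$ both sides of each equation define sesquilinear forms on $L^2 \times L^2$ that are separately continuous: the left-hand sides are continuous by the operator-norm estimate~(\ref{E_LRMoyalBounded}) and the remarks immediately following it (establishing boundedness of $\mathrm{L}_c$ and $\mathrm{R}_c$ on $L^2$), and the right-hand sides are of the same form with $c$ replaced by $c^{*}$. Density of $\sS(\bR^{q+p})$ in $L^2(\bR^{q+p})$ then promotes each identity from Schwartz to $L^2$.

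The only step requiring real care is the verification of the tracial identity in the mixed $(q,p)$ setting with the concrete normalization of $\star_{(q,p)}$ used in this paper, and the justification of Fubini in the factorization argument; once that is in hand, the rest is formal algebra combined with a density argument, and no genuine analytical obstacle is anticipated.
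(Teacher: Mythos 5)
Your proposal is correct but takes a genuinely different route than the paper. The paper's proof is a direct computation: for $q=0$ it writes out both sides of~(\ref{E_LemmaLRHerm_L}) as explicit triple integrals (eqns.~(\ref{E_LemmaLRHerm_1}),~(\ref{E_LemmaLRHerm_2})), performs the substitution $(w,v,x)\mapsto(y,x,z)$, and uses antisymmetry of $M^{-1}$ to show the oscillatory phases match; the mixed $(q,p)$ case is then obtained via the tensor-product decomposition~(\ref{E_TensProdqp}). You instead isolate the algebraic content — the cyclic property $\int (f\star g)\,h = \int f\,(g\star h)$ derived from the tracial identity and associativity, combined with the involution $\overline{g\star c}=\bar c\star\bar g$ — and then supply the analysis separately, via boundedness of $\mathrm{L}_c$, $\mathrm{R}_c$ from~(\ref{E_LRMoyalBounded}) plus a density argument to pass from Schwartz $\varphi,\psi$ to $L^2$. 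One advantage of your route is that the density step handles the extension to $\varphi,\psi\in L^2$ explicitly; the paper's integral manipulation is strictly speaking only absolutely convergent for Schwartz $\varphi,\psi$ and leaves the passage to $L^2$ implicit. One cost is that your proof is not self-contained at this point in the paper: it imports the tracial identity, which the paper itself only invokes (with reference to Lemma 2.1(v) of \cite{GGISV}) two sections later. As long as the tracial identity is cited or proved first — and it does hold for the paper's normalization, as you correctly indicate via the $\star_{(q,0)}\otimes\star_{(0,p)}$ factorization — your argument is sound and arguably more transparent about which structures are responsible for the hermiticity.
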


\begin{proof}
Consider first the case $q=0$. Then
\begin{eqnarray}
(c\star_{(q,p)}\varphi,\psi)_{L^2}&=\frac{1}{(\pi\theta)^p}\int 
\overline{c(w)\varphi(v)}
e^{-i(x-w)\cdot M^{-1}(x-v)}\psi(x)d^pwd^pvd^px,\label{E_LemmaLRHerm_1}\\
(\varphi,c^*\star_{(q,p)}\psi)_{L^2}&\!\!\!\!\!\!\!\!=\frac{1}{(\pi\theta)^p}\int 
\overline{\varphi(x)c(y)}\psi(z)
e^{i(x-y)\cdot M^{-1}(x-z)}d^pzd^pyd^px.\label{E_LemmaLRHerm_2}
\end{eqnarray}
Carrying out the substitution $(w,v,x)\mapsto (y,x,z)$, the right hand side 
of~(\ref{E_LemmaLRHerm_1}) becomes
\begin{equation}\label{E_LemmaLRHerm_3}
\frac{1}{(\pi\theta)^p}\int \overline{c(y)\varphi(x)}e^{-i(z-y)\cdot M^{-1}(z-
x)}\psi(z)d^pyd^pxd^pz.
\end{equation}
Thus one can see that~(\ref{E_LemmaLRHerm_3}) coincides 
with~(\ref{E_LemmaLRHerm_2}) upon noticing that, using the anti-symmetry
of $M^{-1}$,
\[
(z-x)\cdot M^{-1}(z-y)=-x\cdot M^{-1}z+x\cdot M^{-1}y-z\cdot M^{-1}y
\]
coincides with
\[
(x-y)\cdot M^{-1}(x-z)=-x\cdot M^{-1}z+y\cdot M^{-1}z-y\cdot M^{-1}x.
\]
This proves~(\ref{E_LemmaLRHerm_L}) in the case $q=0$, 
and~(\ref{E_LemmaLRHerm_R}) is proved analogously. Then we notice
that~(\ref{E_LemmaLRHerm_L}) and~(\ref{E_LemmaLRHerm_R}) are obviously correct 
for $p=0$. Therefore we obtain, using the tensor
product decomposition of $\star_{(q,p)}$ as in~(\ref{E_TensProdqp}),
\begin{eqnarray*}
\lefteqn{\left(\varphi_q\otimes \varphi_p,(c_q\otimes 
c_p)^*\star_{(q,p)}(\psi_q\otimes \psi_p)\right)_{L^2}}\\
&=&\left(\varphi_q\otimes \varphi_p,(c_q^*\star_{(q,0)}\psi_q)\otimes 
(c_p^*\star_{(0,p)}\psi_p)\right)_{L^2}\\
&=&\left((c_q\star_{(q,0)}\varphi_q)\otimes 
(c_p\star_{(0,p)}\varphi_p),\psi_q\otimes \psi_p\right)_{L^2}\\
&=&\left((c_q\otimes c_p)\star_{(q,p)}(\varphi_q\otimes \varphi_p),\psi_q\otimes 
\psi_p\right)_{L^2},
\end{eqnarray*}
whenever $c_q\in\sS^M_{\star_{(q,0)}}$, $c_p\in\sS^M_{\star_{(0,p)}}$ and 
$\varphi_q,\psi_q\in L^2(\bR^q)$,
$\varphi_p,\psi_p\in L^2(\bR^p)$. This implies~(\ref{E_LemmaLRHerm_L}). 
Relation~(\ref{E_LemmaLRHerm_R}) is proved
analogously.\qed
\end{proof}

\section{The Dirac field on Moyal-deformed Minkowski spacetime as a Lorentzian 
spectral geometry --- general discussion}\label{S_DiracMoyalGeneral}

We will now embark on a --- rather informal --- discussion on the setting in which 
we wish to view the quantized Dirac field on
Moyal-deformed Minkowski spacetime.

Assume that $n\geq 2$, $n=1+s$, and assume the restrictions on $n$ made before 
in~(\ref{E_3}). Let $q+p=n$ where $p$ is even.
Let $C^\infty(\bR^n,\bC^N)$, $N=N(n)$ as in~(\ref{E_1}), denote the space of 
smooth spinor fields on flat Minkowski spacetime
$\bR^n=\bR^{1+s}$ as introduced in section~\ref{S_DiracField}. We can introduce 
a scalar product on the spinors given by
\begin{equation}\label{E_ScProdOnSpinorH}
(\psi,\eta)=\int_{\bR^n}\bar \psi^A(x)\delta_{AB}\eta^B(x)d^nx
\end{equation}
for $\psi=(\psi^A)_{A=1}^N$, $\eta=(\eta^A)_{A=1}^N$ in $L^2(\bR^n)\otimes 
\bC^N$. Let $\cH=\cH_n$ denote the Hilbert space of
square-integrable spinors $L^2(\bR^n)\otimes \bC^N$, carrying the scalar 
product~(\ref{E_ScProdOnSpinorH}). Then
$\sS(\bR^n,\bC^N)\cong \sS(\bR^n)\otimes \bC^N$ is a dense subspace of $\cH$. 
The algebra $\sS^M_{\star_{(q,p)}}$ can act from
the left or the right on $\cH$; an explicit representation of the left action is
\begin{equation}\label{E_rhoReprAonH}
({\sf L}_c\psi)^A=c\star_{(q,p)}\psi^A
\end{equation}
for $\psi=(\psi^A)_{A=1}^N$ in $\cH$. We denote by $\cA^M$ the represented 
algebra ${\sf L}_{\sS^M_{\star_{(q,p)}}}$. Thus, we have
a $*$-algebra of bounded linear operators, $\cA^M$, acting on $\cH$, (cf. last 
section), and if $p\neq 0$, then this algebra is
non-commutative. Furthermore, we have the usual Dirac operator $D$ defined 
in~(\ref{E_DiracOpWithmV}), whereas we set $D=D_0$ for
potential $V=0$ here, acting on a dense domain in $\cH$; for convenience, we 
shall take this domain to be $C_0^\infty(\mathbb{R})\otimes\sS(\bR^s) \otimes\bC^N$.

The said data $\cA^M,\cH,D$ are reminiscent of the data of a spectral triple in 
the spectral triple approach to non-commutative
geometry by Connes~\cite{Connes},\cite{Connes2}, and in fact, this is how we 
would like to think of them. There are, however, a few technical
obstructions to doing so, since the original spectral geometry approach 
generalizes compact Riemannian spin geometries, while in
our case $\cA^M$ is a non-commutative deformation of an algebra of functions 
over the non-compact $\bR^n$, and $D$ is the Dirac
operator of a metric of Lorentzian signature. This means that a modified 
structure needs to be provided in order to attain a
spectral geometry generalization of non-compact Lorentzian spin geometries of 
comparable strength as in the compact, Riemannian
case. This endeavour will be carried out elsewhere~\cite{PRV}, we report here 
only about some of the important ingredients in a
rather non-technical manner, and largely tailored to our Moyal spacetime case at 
hand.

We begin by noting that structural elements in addition to $\cA^M,\cH,D$ are 
needed already in the Riemannian spectral geometry
framework. What is required is an anti-unitary involution $C$ on $\cH$, playing 
the role of a charge conjugation, and in our
Moyal-setting, $C$ will in fact be defined as in~(\ref{E_ChConjC}). Additionally, 
one needs an operator $\bm\gamma$ on
$\cH$ which induces an orientation, and in our Moyal-case at hand, 
$\bm\gamma=\gamma_0\gamma_1\cdots\gamma_s$ is the product of
Dirac matrices, acting on $L^2$-spinors in $\cH$ by matrix multiplication from 
the left.

Supposing for a moment (for the purpose of comparison) that 
$\cA^M,\cH,D,C,\bm\gamma$ were describing a compact (non-commutative)
Riemannian spin geometry in the framework of spectral geometry --- which actually 
is not the case --- then the just listed items
would be required to fulfill important structural properties, such as:
\renewcommand{\theenumi}{\roman{enumi}} 
\renewcommand{\labelenumi}{(\theenumi)}
\begin{enumerate}
\item\label{I_SpecGeomCompRiem_1}
$\cA^M$ is a unital pre-$C^*$-algebra
\item\label{I_SpecGeomCompRiem_2}
$D$ is hermitean and elliptic, and $(D-\lambda\eins)^{-n}$ is in a suitable 
Schatten class for $\lambda \not\in\text{spec }D$
\item
a series of (anti-)commutation relations between $\cA^M,D,C$ and $\bm\gamma$
\item
certain ``regularity'' conditions on $\cA^M$ and $D$ (including domain 
conditions)
\end{enumerate}
\renewcommand{\theenumi}{\alph{enumi}} 
\renewcommand{\labelenumi}{(\theenumi)}
(See~\cite{GVF} for a detailed exposition of the required properties.)

Now in the present case, where $\cA^M,\cH,D,C,\bm\gamma$ actually derive from 
Moyal-deformed Minkowski spacetime, several of these
properties, in particular~(\ref{I_SpecGeomCompRiem_1}) 
and~(\ref{I_SpecGeomCompRiem_2}), no longer hold, but need to be replaced
by suitable generalizations. We won't discuss here the appropriateness of the 
generalizations envisaged (see~\cite{PRV}), but only
give a few indications of their nature. $\cA^M$ is not a unital algebra, so one 
needs, as a further datum, a unitization
$\cA^{M,I}\supset \cA^M$, where $\cA^{M,I}$ is a unital pre-$C^*$-algebra. The 
work~\cite{GGISV} contains an extended
discussion on the best choice of $\cA^{M,I}$ in the Riemannian Moyal-algebra 
case (actually, for $q=0$), and since this
discussion concerns mainly topological aspects of the non-commutative space as 
opposed to its metric structure, the results of this
apply here as well.

In~\cite{GGISV}, $\cA^{M,I}$ is constructed as follows. Let $c$ be a $C^\infty$ 
function on $\bR^p$ which is bounded together
with all of its derivatives. Then define the operator (cf.~(\ref{E_rhoReprAonH}))
\begin{equation*}
{\sf L}_c : \psi \mapsto {\sf L}_c\psi
\end{equation*}
for all $\psi\in\cH=L^2(\bR^p)\otimes \bC^N$. This is a bounded operator with 
respect to the operator norm on
$L^2(\bR^p)\otimes \bC^N$. The $*$-algebra generated by these operators is taken 
as $\cA^{M,I}$. Note that
\[
{\sf L}_{c_1}{\sf L}_{c_2}\psi = 
{\sf L}_{c_1 \star_{(0,p)} c_2}\psi
\]
when $c_1\star_{(0,p)}c_2$ is defined, and likewise ${\sf L}_c^* = {\sf 
L}_{c^*}$. One can opt for this choice of $\cA^{M,I}$ also in the case of 
$\star_{(q,p)}$.

Another modification is needed for~(\ref{I_SpecGeomCompRiem_2}). Already in the 
non-compact Riemannian case, $(D-\lambda\eins)^{-n}$
is not compact for resolvent values $\lambda$ of $D$, but this can be remedied 
by requiring that $a(D-\lambda\eins)^{-n}$ is in a
suitable Schatten class for $a\in\cA^M$. However, in the Lorentzian case, $D$ is 
not elliptic, and thus $a(D-\lambda\eins)^{-n}$ is
non-compact. Moreover, $D$ is not hermitean with respect to the $L^2$ scalar 
product.

A way to get around this difficulty is to introduce another element of structure 
in the form of a further linear, bounded operator
$\fdag\beta:\cH\rightarrow \cH$. This operator carries the information of a 
``time-like'' direction and thereby encodes the
Lorentzian metric signature; in our case, $\fdag\beta=\gamma_0$, acting as 
(matrix) multiplication operator on the spinors. The
characteristic properties of $\fdag\beta$, besides $\fdag\beta^2=1$ and suitable 
Clifford relations with $C$ and $\bm\gamma$, are
\begin{equation*}
\fdag\beta D=D^*\fdag\beta\text{ on the $C^\infty$-domain of }D,
\end{equation*}
and that
\begin{equation*}
\langle D\rangle=\sqrt{\frac{1}{2}(D^*D+DD^*)}
\end{equation*}
is an elliptic operator so that $a(\langle D\rangle-\lambda\eins)^{-n}$ is in a 
suitable Schatten class for resolvent values
$\lambda$ of $\langle D\rangle$ and $a\in\cA^M$. (The adjoint $D^*$ is defined 
with respect to the scalar product of $\cH$.)
Whence, the collection of objects
\[
\cA^{M,I}\supset \cA^M,\cH,D,\fdag\beta,C,\bm\gamma
\]
in combination with a list of relations and conditions that will be discussed in 
detail in~\cite{PRV}, can be viewed as a
``Lorentzian spectral triple'' (LOST), i.e. the generalization of spectral 
geometry from Riemannian to Lorentzian signature. As we
have outlined, Moyal-deformed Minkowski spacetime can be fit into this setting.

If one now contends that non-commutative Lorentzian spacetimes are described in 
terms of LOSTs with data
$\cA^{M,I}\supset \cA^M,\cH,D,\fdag\beta,C,\bm\gamma$, one is faced with the 
question as to what a quantum field theory on a LOST
should be, and how such quantum field theories can, on one hand, be constructed, 
and on the other hand, be interpreted. A fairly
immediate idea is this: Since a Hilbert space $\cH$ with a Dirac-operator $D$ 
and a charge conjugation $C$ acting in it are part of
the data describing a LOST, one may define the Dirac field on a LOST as an 
abstract CAR algebra corresponding to these data.

One must remember, however, that the Hilbert space $\cH$ does not play the role 
of the Hilbert space $\cK$ ($=\cK_V$,
$V=0$) in Proposition~\ref{Prop1}, describing the space of equivalence classes 
of smooth, compactly supported elements
in $L^2(\bR^n)\otimes\bC^N$ modulo the kernel of the operator $R=R^+-R^-$ (where 
$R^\pm$ are the advanced/retarded
fundamental solutions of $D$). Nevertheless, the Hilbert space structure of 
$\cH=L^2(\bR^n)\otimes\bC^N$ is used to obtain
a Hilbert space structure on the set of equivalence classes.

In the case of a general LOST, it is at present not clear how to characterize 
advanced and retarded fundamental solutions
of $D$. One of the difficulties is caused by the circumstance that ``advanced'' 
and ``retarded'' refer to localization
properties which are notoriously difficult to capture in non-commutative 
geometry. This notwithstanding let us, for the
time being, suppose that we have a LOST where advanced and retarded fundamental 
solutions of $D$ are given as quadratic
forms on a suitable domain $\sD$ contained in the joint $C^\infty$-domain of $D$ 
and $D^*$.
Abusing notation, we will denote these quadratic forms by
\begin{equation*}
f,h\mapsto (f,R^\pm h),\quad f,h\in\sD.
\end{equation*}
The fundamental solution property amounts to the condition
\begin{equation*}
(D^*f,R^\pm h)=(f,h)=(f,R^\pm Dh)\text{ for all }f,h\in\sD.
\end{equation*}
Guided by the example of the Dirac field on commutative Minkowski spacetime, one 
is led to the assumption that
\begin{equation*}
(f,h)_{(R)}=e^{i\delta}\left[(\fdag\beta f,R^+h)-(\fdag\beta f,R^-h)\right]
\end{equation*}
defines, upon choice of a suitable phase $\delta$, a scalar product on $\sD/\ker 
(\cdot,\cdot)_{(R)}$.
[At present it is not clear if such a property can actually be proved under 
suitable additional ``regularity'' conditions
on LOSTS, or if this is genuinely an extra assumption; but in our Moyal 
spacetime example in the next section we will
see that this property is fulfilled.]
With this assumption, one can define the Hilbert space $\cK_{(R)}$ as the 
completion of $\sD/\ker (\cdot,\cdot)_{(R)}$
with respect to $(\cdot,\cdot)_{(R)}$. Under these circumstances, the 
conjugation $C$ on $\sD$ induces a conjugation $C$
on $\cK_{(R)}$ via $C[f]_{(R)}=[Cf]_{(R)}$.
Thus, one has a Hilbert space $\cK_{(R)}$ with a conjugation $C$ on it. One can 
therefore define the associated
CAR-algebra $\fF(\cK_{(R)},C)$ in a manner completely analogous to the example 
of the free Dirac field on Minkowski
spacetime, cf.\ Section~\ref{S_DiracField}. That is, $\fF(\cK_{(R)},C)$ is 
generated by $B([f]_{(R)})$, $f\in\sD$, which
are linear in $[f]_{(R)}$, and subject to the relations
\begin{eqnarray*}
B([f]_{(R)})^*&=&B(C[f]_{(R)}),\nonumber\\
\{B([f]_{(R)})^*,B([h]_{(R)})\}&=&2([f]_{(R)},[h]_{(R)})_{(R)}\eins,\nonumber\\
B([Df]_{(R)})&=&0.
\end{eqnarray*}
At this stage, one has constructed abstractly a quantum field theory on a non-
commutative geometry described by a LOST
and some additional structure. The quantum field theory was then essentially 
obtained by second quantization. The
question arises how such a quantum field theory should be interpreted.

Regarding this point, let us specialize to the case that $\cA^M$ is the Moyal-
deformed algebra of functions on
Minkowski spacetime, and $\cH=L^2(\bR^n)\otimes\bC^N$, $\sD=C_0^\infty(\mathbb{R})\otimes\sS(\bR^s) \otimes\bC^N$, 
with the Dirac operator as
in~(\ref{E_DefDiracOpNoPot}). This means that $\cH$ and $D$ are the same as in the case 
of commutative, ``undeformed'' Minkowski
spacetime, just the domain $\sD$ has changed, but this does not lead to a 
significant modification.
As will be explained in the next section, there will again be uniquely 
determined advanced and retarded fundamental
solutions $R^\pm$ of $D$. The CAR-algebra $\fF(\cK_{(R)},C)$ one obtains in this 
case coincides
with $\fF(\cK,C)$ defined in Section~\ref{S_DiracField}, except that $\cK_{(R)}$ is larger than $\cK$
owing to the fact that $\mathscr{D}$ is taken larger than it was in the case of
commutative spacetime. This difference would, however, disappear in the vacuum 
representation
of the Dirac field (defined with respect to the time-translations) upon passing 
to von Neumann
algebras in that representation. Thus, the von Neumann algebras of the CAR-
algebras of
the Dirac field, in vacuum representation, constructed either for classical 
Minkowski spacetime,
or for Moyal-Minkowski spacetime, both coincide.

It is therefore worth contemplating if the sketched way of ``abstract'' 
quantization of the
LOST corresponding to Moyal-deformed Minkowski spacetime leads to anything 
different from
the usual quantized Dirac field on usual Minkowski spacetime. We argue that this 
is indeed
the case. One must remember that, in operational terms,
 a quantum field theory --- on a classical spacetime ---
is described by an assignment of observables to spacetime regions and that the 
physical content
of the theory lies mainly in the localization properties of the observables (and 
their
algebraic relations) relative to each other, see~\cite{Haag},~\cite{HaagKastler}
and discussion further below. We 
must, in the
case of Moyal-Minkowski spacetime, specify the observables of the quantum field 
theory
we have defined, and study their localization properties in connection with the 
algebraic
structure of the Moyal-Minkowski-algebra $\cA^M$.

In the vacuum representation $(\cH^{\rm vac},\pi^{\rm vac},\Omega^{\rm vac})$ of 
$\fF(\cK_0,C)$, we have defined the field
operators
$$ \bm\psi(f) = \pi^{\rm vac}(\Psi_0(f))\,,
\quad f \in C^\infty_0(\mathbb{R}^n,\mathbb{C}^N)\,. $$
These operators do not correspond directly to observable quantities since they 
fulfill anticommutativity upon spacelike separation of the test-spinors $f$.
Therefore, one needs to build operators corresponding to observables from the
$\bm\psi(f)$. A 
common choice is to take operators of the form $\bm\psi(f_1)^*\bm\psi(f_2)$ as building blocks 
for observables.
Then $\bm\psi(f_1)^*\bm\psi(f_2)$ commutes with $\bm\psi(h_1)^*\bm\psi(h_2)$ if the
supports of $f_1$ and $f_2$ are spacelike separated from the supports of $h_1$ and $h_2$.

\label{p_wickcomm} Certain operators arising as limits of linear combinations of such
operators have interesting properties.
Among them is the Wick-product $:\bm\psi^+\bm\psi:(c)$
which is indexed by scalar testing
functions $c \in C^\infty_0(\bR^n,\mathbb{R})$. 
One may define
$: \bm\psi^+\bm\psi:(c)$ as follows. 
Take two finite families of spinors, $e_\mu$ and $\eta_\mu$ $(\mu = 1,\ldots, L)$
in $\mathbb{C}^N$, with the property that $\sum_{\mu =1}^L \overline{e^A}_{\mu}\eta^{B}_{\mu} = \frac{1}{4}\gamma_{0}{}^{AB}$
(the matrix entries of $\gamma_0$). Then define, for
$q_1$ and $q_2$ in $C_0^\infty(\mathbb{R}^n,\mathbb{R})$, the operator 
$$ \bm\psi^+\bm\psi (q_1 \otimes q_2) = \sum_{\mu = 1}^L \bm\psi(q_1 e_{\mu})^*\bm\psi(q_2 \eta_{\mu})\,.$$
The map $q_1 \otimes q_2 \mapsto \bm\psi^+\bm\psi(q_1 \otimes q_2)$ defines a real-linear operator-valued
distribution and thus extends to $C_0^\infty(\mathbb{R}^n \times \mathbb{R}^n,\mathbb{R})$. 
Let $j_\epsilon$ be a family of
real-valued functions in $C^\infty_0(\mathbb{R}^n)$
approaching the $\delta$-measure peaked at $0$
for $\epsilon \to 0$, and set, for $q_1,q_2 \in C_0^\infty(\mathbb{R}^n,\mathbb{R})$,
$$ F_\epsilon(x,y) = q_1(x)q_2(y)j_\epsilon(x-y) \quad (x,y \in \mathbb{R}^n)\,. $$
Moreover, denote by $\mathcal{W} \subset \mathcal{H}^{\rm vac}$ the dense subspace generated
by $P \Omega^{\rm vac}$ where $P$ ranges over all polynomials in the $\bm\psi(f)$
with $f \in C_0^\infty(\mathbb{R}^n,\mathbb{C}^N)$
(including the case that $P$ has degree zero, i.e.\ $P$ is a multiple of $\eins$).
With these conventions, we define
$$ :\bm\psi^+\bm\psi:(c)\chi =
\lim_{\epsilon \to 0} \, \bm\psi^+\bm\psi(F_\epsilon)\chi - (\Omega^{\rm 
vac},\bm\psi^+ \bm\psi(F_\epsilon)\Omega^{\rm vac}) \chi$$
for all $\chi \in \mathcal{W}$ and $c(x) = q_1(x)q_2(x)$ $(x \in \mathbb{R}^n)$.
It turns out (see Sec.\ 7 and Appendix A) that $:\bm\psi^+\bm\psi:(c)$ is an essentially selfadjoint
operator on $\mathcal{W}$ which furthermore turns out to be independent of the
choices made for $e_\mu$ and $\eta_\mu$ $(\mu = 1,\ldots,L)$.
The $:\bm\psi^+\bm\psi:(c)$ are local operators in the sense
that $:\bm\psi^+\bm\psi:(c_1)$ commutes with $:\bm\psi^+\bm\psi:(c_2)$ if
the supports of $c_1$ and $c_2$
are spacelike separated.
For $c \ge 0$, $:\bm\psi^+\bm\psi:(c)$ can be interpreted as the observable of 
(squared) field strength density weighted with the function $c$.

An interesting property of $:\bm\psi^+\bm\psi:(c)$, proven
in Appendix A, is
\begin{equation} \label{E_Wickcom}
[:\bm\psi^+\bm\psi:(c),\bm\psi(f)] =
 -i\bm\psi(c R_0f)
\end{equation}
for all $c \in C^\infty_0(\mathbb{R}^n)$ and
all $f \in C^\infty_0(\mathbb{R}) \otimes \mathscr{S}(\mathbb{R}^s) \otimes \mathbb{C}^N$.
On the other hand, we will also show in Sec.~\ref{S_BogForm} that, identifying $c$ with the 
scalar potential in the discussion of potential scattering in Sec.~\ref{S_DiracMoyalSpecial}, there holds
\begin{equation} \label{E_Scatder}
\left. \frac{d}{d\lambda} \right|_{\lambda = 0}
\pi^{\rm vac}(\beta_{\lambda c}(\Psi_0(f)))
 = \left. \frac{d}{d\lambda} \right|_{\lambda = 0}
 S_{\lambda c} \bm\psi(f) S_{\lambda c}^{-1}
 =  \bm\psi(c R_0 f)
\end{equation}
for all $c \in C^\infty_0(\mathbb{R}^n)$ and
$f \in C^\infty_0(\mathbb{R}^n,\mathbb{C}^N)$.

In view of \eqref{E_Wickcom} and \eqref{E_Scatder},
the observables $:\bm\psi^*\bm\psi:(c)$ are identified as
$-i\left.  \frac{d}{d \lambda}\right|_{\lambda = 0} S_{\lambda c}$ where $S_c$ 
is the scattering matrix corresponding to the localized scattering potential $c$. 
This connection between localized observables
and the derivative of the scattering matrix of a localized interaction with 
respect to the interaction strength is, of course, long known, especially in the 
context of perturbative interacting quantum field theory,
and often goes by the name ``Bogoliubov's formula''~\cite{Bogoliubov}.

We now wish to point out that one can obtain in a similar manner observables for 
the quantized Dirac field on Moyal-deformed Minkowski spacetime employing 
Bogoliubov's formula. The precise mathematical discussion
of the considerations we present here will be given in the next section.
In the case of the Dirac field on usual Minkowski spacetime, the scattering 
matrix $S_V \equiv S_c$ was 
constructed for the Dirac operator
$D_V = D + V$ where the potential term was
$Vf = cf$, $cf$ meaning the usual pointwise (and component-wise) multiplication 
of a scalar function $c$ with a spinor-field $f$.
We should now recall that classical Minkowski spacetime is
also described by the structure of a LOST. 
The data for the LOST corresponding to classical Minkowski spacetime coincide 
with the data for the LOST of Moyal-Minkowski spacetime, except that instead of 
the non-commutative algebra $\cA^M = \sS^M_{\star_{(q,p)}}$
we have the commutative algebra $\cA^{\rm Min}=
C^\infty_0(\mathbb{R}^n)$ of scalar functions on
spacetime. The map $c \mapsto c \varphi$,
$\varphi \in \cH = L^2(\mathbb{R}^n,\mathbb{C}^N)$
produces a faithful representation of $\cA^{\rm Min}$
on the Hilbert-space of square-integrable 
spinor fields. For the case of Moyal-Minkowski spacetime, one can regard the 
potential term
$V$ in a similar light, and define, for $\varphi
\in \cH$, for instance
\begin{equation} \label{E_ncpot}
 V \varphi = {\sf L}_c\varphi +
 {\sf R}_c \varphi =
c \star \varphi + \varphi \star c 
\end{equation}
with real-valued $c$ in $\cA^M = \sS^M_{\star_{(q,p)}}$.  
In the next section we will show that in the case
$q=1,p=2l>0$, i.e.\ when the Moyal-deformed Minkowski spacetime has no non-
trivial commutation relations between time- and space-coordinates,
there is a Bogoliubov-transformation $\beta^M_V$
on the CAR-algebra $\fF(\cK = \cK_{(R_0)},C)$ 
describing scattering by the non-commutative potential
$V$ given in \eqref{E_ncpot}. (This needs mild further assumptions on $c$, see 
Sec.\ 6 for details.)  Furthermore, we will
show that this scattering transformation is unitarily implementable in the 
vacuum-representation $(\cH^{\rm vac},\pi^{\rm vac},\Omega^{\rm vac})$, so that 
there is a unitary operator $S_V^M$ with
$$ S_V^M \pi^{\rm vac}(\Psi_0(f))(S_V^M)^{-1}
= \pi^{\rm vac}(\beta_V^M(\Psi_0(f))) $$
for all $f \in C_0^\infty(\mathbb{R}) \otimes \mathscr{S}(\mathbb{R}^s) \otimes \mathbb{C}^N$.
Consequently, one can formally define the derivative
\begin{equation} \label{E_Phic}
 \Phi(c) = -i\left.\frac{d}{d\lambda}\right|_{\lambda = 0} S_{\lambda V}^M
\end{equation}
which, following the ideas underlying Bogoliubov's
formula alluded to just before, would correspond
to an observable quantity. In Sec.\ 7 we will in fact show that
\begin{equation} \label{E_derivationM}
 \left.\frac{d}{d\lambda}\right |_{\lambda = 0} S^M_{\lambda V} \bm\psi(f)
 S^M_{\lambda V}{}^{-1} = [i\Phi(c),\bm\psi(f)] = \bm \psi(VR_0f) 
\end{equation}
for all $f \in C_0^\infty(\mathbb{R}) \otimes \mathscr{S}(\mathbb{R}^s) \otimes \mathbb{C}^N$
with an essentially selfadjoint operator $\Phi(c)$ on $\mathcal{W}$. One may therefore
identify $\Phi(c)$ with the derivative $-i\left.\frac{d}{d\lambda}\right|_{\lambda = 0} S_{\lambda V}^M$,
in the sense that \eqref{E_derivationM} holds.

In the case of usual Minkowski spacetime, the assignment $c \mapsto :\bm\psi^+\bm\psi:(c)$, where
$c$ is a scalar $C_0^\infty$ test-function on spacetime, has the typical properties of an observable
quantum field of Wightman type \cite{StreaterWightman}. The support of the
test-function $c$ limits the localization of the observable $:\bm\psi^+\bm\psi:(c)$, which is reflected
by the relations \eqref{E_Wickcom} and \eqref{E_Scatder} and the fact that the changes of states
which $\beta_{\lambda c}$ induces are localized in the support of $c$. In the algebraic approach to
quantum field theory \cite{Haag,HaagKastler}, one therefore considers the $*$-algebras $\mathfrak{R}(O)$ generated by all
observable quantum field operators $:\bm\psi^+\bm\psi:(c)$ where the support of $c$ is contained in
the spacetime region $O$.\footnote{Two things should be noted here. (1) Actually, $\mathfrak{R}(O)$
would have to be defined as algebraically generated by {\it all} observable quantum field operators
smeared with test-functions supported in $O$; we use $:\bm\psi^+\bm\psi:$ as a placeholder for any
observable quantum field at this point. (2) In the algebraic approach to quantum field theory it is
customary to define $\mathfrak{R}(O)$ as algebraically generated by the bounded functions of quantum
field operators smeared with test-functions supported in $O$; here, our $\mathfrak{R}(O)$ are
algebras of unbounded operators.}
Then one obtains an assignment $O \mapsto \mathfrak{R}(O)$ of spacetime regions to operator algebras
with the two characteristic properties of
\begin{eqnarray*}
    \text{Isotony:} & & O_1 \subset O_2\ \ \Rightarrow\ \ \mathfrak{R}(O_1) \subset \mathfrak{R}(O_2) \\
    \text{Locality:} & & O_1 \perp O_2\ \ \Rightarrow\ \ [F_1,F_2] = 0\quad \text{for} \ \ 
                        F_j \in \mathfrak{R}(O_j) \quad (j =1,2)
\end{eqnarray*}
where $O_1 \perp O_2$ means that the spacetime regions are causally separated, i.e.\ there is no
causal curve joining them.

According to the algebraic approach to quantum field theory, a quantum field theoretical model is
basically characterized by a map $O \mapsto \mathfrak{R}(O)$ with
these properties (see \cite{HaagKastler,Haag,Roberts}), describing especially the localization of observables of the quantum system under consideration
on a ``classical'' spacetime with commutative coordinate functions.    
Let us now discuss some, however vague, ideas how this may be generalized to quantum field theories on
non-commutative spacetimes, where again we stay at the level of the Dirac field on Moyal-Minkowski spacetime.
The scattering by a non-commutative potential furnishes the assignment $c \mapsto \Phi(c)$ of \eqref{E_Phic}.
We interpret $\Phi(c)$ as an observable, and hence we have an assignment of elements $c$ in the
non-commutative algebra $\mathcal{A}^M$ to (unbounded) operators in $\mathcal{H}^{\rm vac}$. The $c$ 
now carries the information about the spacetime localization of $\Phi(c)$, but due to the non-commutativity
of $\mathcal{A}^M$,
this is subject to uncertainties. In particular, in general $\Phi(c_1)$ and $\Phi(c_2)$ won't commute
anymore if the supports of $c_1$ and $c_2$, viewed as test-functions, are causally separated. Therefore,
if one defines the algebras $\mathfrak{R}^M(O)$ as being generated by the $\Phi(c)$ where $c$ has support
in $O$, then the assignment $O \mapsto \mathfrak{R}^M(O)$ is clearly different from 
$O \mapsto \mathfrak{R}(O)$ as defined above for usual Minkowski spacetime, and thus we see that we derive
indeed a different system of observables from the scattering morphisms via Bogoliubov's formula in the
non-commutative case, without an obvious locality structure.  

Nevertheless, one may attempt to mimic the algebraic approach to quantum field theory in a generalized form,
upon forming algebras of observables $\mathfrak{R}^M(\mathcal{P})$ labelled by subsets
$\mathcal{P}$ of $\mathcal{A}^M$, understanding that
$\mathfrak{R}^M(\mathcal{P})$ be generated by
the $\Phi(c)$ with $c \in \mathcal{P}$.
 It is not clear at this stage what structure these subsets should have,
e.g.\ if they should be subalgebras of $\mathcal{A}^M$. In comparison to the classical case, what seems
to be required is a partial ordering on the collection of chosen $\mathcal{P}$, and a concept of causal
separation \cite{Roberts}. An idea could be to choose the $\mathcal{P}$ as sets of (approximate) projections, inspired by
the situation on classical spacetime, where a subset $O$ may be identified with its characteristic function,
which is a projection in the commutative algebra of coordinate functions. The ordering relation may then
be taken as operator ordering. It is more difficult to capture
the concept of causal separation. In the case of classical Minkowski spacetime, the supports of two
$C^\infty_0$ test-functions $c_1$ and $c_2$ are causally separated if and only if 
$i \langle c_1 f, R_m c_2 h \rangle = 0$ for all spinor fields $f$ and $h$, where $R_m$ is the causal
propagator for the Dirac equation 
for any mass term $m$ (corresponding to $R_V$ for $V = m$, cf.\ eqn.\ \eqref{E_propagatorV}).
By assumption, the causal propagator is available in our non-commutative setup as the quadratic form
$(\,.\,,\,.\,)_{(R)}$ on the domain $\mathscr{D} \subset \mathcal{H}$, and thus one may characterize
the causal disjointness of two subsets $\mathcal{P}_1$ and $\mathcal{P}_2$ of $\mathcal{A}^M$
with the help of this quadratic form for any mass term. Ideally, one might want to define $\mathcal{P}_1 \perp \mathcal{P}_2$
to be equivalent to $(c_1 f, c_2 h)_{(R)} = 0$ for all $c_j \in \mathcal{P}_j$ and all $f,h \in
\mathscr{D}$, provided this is compatible with the ordering relation. If that cannot be had, the second best
option would be to define $\mathcal{P}_1 \perp \mathcal{P}_2$ as meaning that 
$(c_1 \,.\,,c_2\,.\,)_{(R)}$ is, in a suitable sense, ``small'' compared to $(\,.\,,\,.\,)_{(R)}$ --- a sort
of ``infinitesimal'' quantity in the sense of spectral geometry. 

Supposing that suitable forms of a partial ordering relation $\mathcal{P}_1 \le \mathcal{P}_2$ and a causal separation
relation $\mathcal{P}_1 \perp \mathcal{P}_2$ have been found for suitably chosen subsets $\mathcal{P}$ of $\mathcal{A}^M$,
it seems well possible that the generalized version of a quantum field theory on non-commutative spacetime
in the operator algebraic setting may take the shape of an assignment $\mathcal{P} \mapsto \mathfrak{R}^M(\mathcal{P})$,
where the $\mathfrak{R}^M(\mathcal{P})$ are operator algebras, subject to the relations of
\begin{eqnarray*}
    \text{Isotony:} & & \mathcal{P}_1 \le {P}_2\ \ \Rightarrow\ \ \mathfrak{R}^M(\mathcal{P}_1) \subset \mathfrak{R}^M(\mathcal{P}_2) \\
    \text{Locality:} & & \mathcal{P}_1 \perp \mathcal{P}_2\ \ \Rightarrow\ \ [F_1,F_2] = 0\quad \text{for} \ \ 
                        F_j \in \mathfrak{R}^M(\mathcal{P}_j) \quad (j =1,2) \,.
\end{eqnarray*}                          
Actually, it could happen that the condition of locality ought to be relaxed requiring only that 
$[F_1,F_2]$ is in a suitable sense ``small'' compared to $F_1$ and $F_2$ if $\mathcal{P}_1 \perp \mathcal{P}_2$,
similar in spirit to the possibly generalized condition of causal separation. Admittedly, this is at present all
speculation, and a careful study of examples is required before a clear picture of the basic structure of
quantum field theory on (Lorentzian) non-commutative spacetime will emerge.  

\section{The Dirac field on Moyal-deformed Minkowski spacetime -- the 
model}\label{S_DiracMoyalSpecial}

Now our intention is to follow the lines of Section~\ref{S_DiracField} under the 
modifications of using $n=1+s=q+p$ dimensional
Moyal-deformed Minkowski spacetime and a suitable different potential term for 
the Dirac operator.

The spacetime of interest (with dimension $n=1+s=q+p$) will be described as in 
section~\ref{S_MoyalMinkST}, with the exception
that we restrict ourselves to Moyal matrices $M$ of the more specialized form
\begin{equation} \label{E_Moyalmatrix}
M=
\left[ 
\begin{array}{c|cc}
0&\cdots&0\\
\hline 
\vdots&M_{(q+p-1)\times (q+p-1)}&\\
0&&\\
\end{array}
\right]_{(q+p)\times(q+p)}
\end{equation}
i.e. the first row and the first column shall vanish.

Nothing is changed (cf.\ Section~\ref{S_DiracField}) in the manner of how we 
define the algebra of Dirac matrices
$(\gamma_0,\gamma_1,\ldots,\gamma_s)$ and the charge conjugation $C$. Again the 
Dirac operator ($m>0$ constant)
acting on $C^\infty_0(\bR)\otimes \sS(\bR^s)\otimes \bC^N$ is denoted by
\begin{equation*}
D_V=(-i\fdag\partial+m)+V.
\end{equation*}
But now the ``potential term'' operator $V$ acting on $f\in 
C^\infty_0(\bR)\otimes \sS(\bR^s)\otimes \bC^N$ is not just
the multiplication operator multiplying $f$ with a scalar function, but one of 
the following operators:
\begin{eqnarray} 
\text{(i) }(Vf)^A(x)&=&(V_\text{(i)}f)^A(x)=(c\star_{(q,p)}f^A)(x)+(f^A\star_{(q
,p)}c)(x)\label{E_DefVMoyal}\\
\text{(ii) }(Vf)^A(x)&=&(V_\text{(ii)}f)^A(x)=(c\star_{(q,p)}f^A\star_{(q,p)}c)(
x),\label{E_DefVMoyalAlt}
\end{eqnarray}
where $c\in C^\infty_0(\bR,\bR)\otimes \sS(\bR^s,\bR)$ is a function of the form
\begin{equation}\label{E_DefcMoyal}
c(x)=a(t)b(\ulx),
\end{equation}
with $a\in C^\infty_0(\bR,\bR)$, $b\in\sS(\bR^s,\bR)$, $t=x^0$, 
$\ulx=(x^1,\ldots,x^s)$.
We aim at presenting an analogue of Proposition~\ref{Prop1} for the potential 
operators $V=V_\text{(i)}$ or
$V=V_\text{(ii)}$ which describe ``scattering by a time-dependent, spatially 
non-commutative potential''.

It is useful, at this point, to consider first the Cauchy-data version of the 
dynamical problem. As in~(\ref{E_DefH0}),
we have the free Hamiltonian
\begin{equation} \label{E_Hzero}
(H_0f)(\ulx)=\left(i\gamma^0\gamma^k\frac{\partial}{\partial 
x^k}+\gamma^0m\right)f(\ulx)
\end{equation}
and the Hamiltonian with time-dependent interaction term,
\begin{equation} \label{E_Hvee}
(H_V(t)f)(\ulx)=\left(i\gamma^0\gamma^k\frac{\partial}{\partial 
x^k}+\gamma^0m+\gamma^0V(t)\right)f(\ulx),
\end{equation}
acting on $f\in\sS(\bR^s)\otimes \bC^N$. Here $V(t)$ stands for the operators
\begin{eqnarray}
V_\text{(i)}(t)&:&f\mapsto V_\text{(i)}(t)f,\quad f\in\sS(\bR^s)\otimes \bC^N,\\
\label{E_Veeone}
(V_\text{(i)}(t)f)^A(\ulx)&=&a(t)(b\star_{(q-1,p)}f^A(\ulx)+f^A\star_{(q-
1,p)}b(\ulx))
\end{eqnarray}
or
\begin{eqnarray}
V_\text{(ii)}(t)&:&f\mapsto V_\text{(ii)}(t)f,\quad f\in\sS(\bR^s)\otimes 
\bC^N,\nonumber\\
\label{E_Veetwo}
(V_\text{(ii)}(t)f)^A(\ulx)&=&a(t)^2(b\star_{(q-1,p)}f^A\star_{(q-1,p)}b(\ulx)).
\end{eqnarray}
By the assumptions made on $a$ and $b$ above, $V(t)=V_\text{(i)}(t)$ and 
$V(t)=V_\text{(ii)}(t)$ are bounded operators
on $L^2(\bR^s,\bC^N)$. As in the case of a scalar potential, we have again that 
$CH_V(t)=-H_V(t)C$, which is a consequence
of the easily checked equation
\begin{equation}\label{E_LeftRightMoyalAndC}
C{\sf R}_c={\sf L}_cC,
\end{equation}
obviously in the case of $V_\text{(i)}$ and under additional usage of the 
associativity of
the Moyal product in the case of $V_\text{(ii)}$. And $H_V(t)$ is a symmetric 
(in fact essentially selfadjoint) operator
on $\sS(\bR^s,\bC^N)\subset L^2(\bR^s,\bC^N)$. As in the case considered before 
in Section~\ref{S_DiracField}, a smooth
function $\varphi\in C^\infty(\bR^s,\bC^N)$ is a solution of
\begin{equation}\label{E_DiracEqOnesMore}
D_V\varphi=0
\end{equation}
if and only if
\begin{equation*}
\frac{1}{i}\frac{d}{dt}P_t\varphi=H_V(t)P_t\varphi.
\end{equation*}
with
\begin{equation*}
P_t\varphi=\varphi|_{\Sigma_t}.
\end{equation*}
Establishing existence and uniqueness of the Cauchy problem for the Dirac 
equation~(\ref{E_DiracEqOnesMore}) is therefore
equivalent to proving existence and uniqueness of solutions for the initial 
value problem
\begin{equation*}
\frac{1}{i}\frac{d}{dt}v_t=H_V(t)v_t,\quad \left.v_t\right|_{t=0}=w.
\end{equation*}
This will be our next auxiliary result.

\begin{proposition}\label{L_ExUniqInitValProbMoyal}
\begin{itemize}
\item[{\rm (a)}]
There is a unique family of unitaries
$T_{t,t'}^{(V)}$ on $L^2(\mathbb{R}^s,\mathbb{C}^N)$, strongly continuous in $t$ and $t'$,
so that 
\begin{equation}
T^{(V)}_{t,t'}\circ T^{(V)}_{t',s}=T^{(V)}_{t,s},\quad T^{(V)}_{t,t}=\eins
\end{equation}
and
\begin{equation}
\frac{1}{i} \frac{d}{dt} T_{t,0}^{(V)}w = H_V(t)w
\end{equation}
for all $w \in L^2(\mathbb{R}^s,\mathbb{C}^N)$.
Moreover, $T_{t,t'}^{(V)}$ maps $\mathscr{S}(\mathbb{R}^s,\mathbb{C}^N)$ into itself.
\item[{\rm (b)}]
Given $w \in \mathscr{S}(\mathbb{R}^s,\mathbb{C}^N)$, the map
$$ (t,t',\underline{x}) \mapsto T_{t,t'}^{(V)}w(\underline{x}) \in \mathbb{C}^N
\quad ((t,t',\underline{x}) \in \mathbb{R} \times \mathbb{R} \times \mathbb{R}^s) $$
is jointly $C^\infty$ in all variables.
\item[{ \rm (c)}]
The Cauchy-problem for the Dirac-equation $D_V\varphi = 0$ with the potential term
$V = V_{\rm (i)}$ or $V = V_{\rm (ii)}$ is well-posed in the following sense. For any given
$w \in \mathscr{S}(\mathbb{R}^s,\mathbb{C}^N)$ and $t' \in \mathbb{R}$ there is a unique $\varphi \in C^\infty(\mathbb{R})
\otimes \mathscr{S}(\mathbb{R}^s) \otimes \mathbb{C}^N$ such that $D_V\varphi = 0$ and
$$ P_{t'}\varphi = w\,.$$
\end{itemize} 
\end{proposition}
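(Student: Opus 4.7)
The plan is to construct $T_{t,t'}^{(V)}$ via the Dyson series in the interaction picture, since the splitting $H_V(t)=H_0+\gamma^0 V(t)$ has the crucial feature that the perturbation $\gamma^0 V(t)$ is bounded on $L^2(\mathbb{R}^s,\mathbb{C}^N)$. Indeed, from the bound $\eqref{E_LRMoyalBounded}$ and its analogue for right multiplication, together with the factorization $c(x)=a(t)b(\underline{x})$, one has $\|\gamma^0 V(t)\|\le |a(t)|\,C_b$ for some constant depending only on Schwartz seminorms of $b$. Since $a\in C_0^\infty(\mathbb{R})$, the map $t\mapsto \gamma^0 V(t)$ is norm-continuous with compact support. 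Letting $T_t=e^{-iH_0 t}$ (which exists by essential self-adjointness of $H_0$ on $\mathscr{S}(\mathbb{R}^s,\mathbb{C}^N)$) and $\tilde V(t)=T_{-t}\gamma^0 V(t)T_t$, the Dyson series
\begin{equation*}
U(t,t')=\eins+\sum_{n\ge 1}(-i)^n\int_{t'}^{t}\!\!dt_1\cdots\int_{t'}^{t_{n-1}}\!\!dt_n\,\tilde V(t_1)\cdots\tilde V(t_n)
\end{equation*}
converges in operator norm, uniformly on compact $(t,t')$-intervals, and defines a strongly continuous unitary cocycle. Setting $T_{t,t'}^{(V)}=T_t U(t,t') T_{-t'}$ gives a strongly continuous unitary evolution with the cocycle property and the differential equation on $\mathscr{S}(\mathbb{R}^s,\mathbb{C}^N)$. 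Uniqueness follows by a standard Gronwall argument: any two such propagators applied to the same initial datum yield a difference satisfying the same integrated equation, which forces the difference to vanish.

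For the invariance of $\mathscr{S}(\mathbb{R}^s,\mathbb{C}^N)$, the key point is that both $T_t$ and $\gamma^0 V(t)$ preserve the Schwartz space with quantitative control of seminorms: $T_t$ does so because it is a Fourier multiplier by a symbol of tempered growth (the energy), and $\gamma^0 V(t)$ does so because both ${\sf L}_b$ and ${\sf R}_b$ preserve $\mathscr{S}$ continuously when $b\in\mathscr{S}$ (this follows from the joint continuity of $\star_{(q-1,p)}$ in the Fréchet topology, as recalled in Section~\ref{S_MoyalMinkST}). Hence each term in the Dyson series maps $\mathscr{S}$ into $\mathscr{S}$, and one estimates the $n$-th term's contribution to any given Schwartz seminorm $\|\cdot\|_k$ of $\tilde V(t_1)\cdots\tilde V(t_n)w$ by $C_k^n \|a\|_\infty^n (t-t')^n/n!$ times $\|w\|_{k'}$ for some higher seminorm $k'$, using that each application of $\tilde V(s)$ is continuous on $\mathscr{S}$ with seminorm bounds that are locally uniform in $s$. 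These estimates give convergence in the Fréchet topology, so $T_{t,t'}^{(V)}w\in\mathscr{S}(\mathbb{R}^s,\mathbb{C}^N)$ and all Schwartz seminorms are locally bounded in $(t,t')$.

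This is the main technical obstacle, and it also yields (b): joint $C^\infty$ smoothness. Smoothness in $\underline{x}$ is automatic for values in $\mathscr{S}$, with all spatial derivatives locally bounded in $(t,t')$ by the seminorm control just described. Smoothness in $t$ and $t'$ (and the possibility to mix $t$-, $t'$- and $\underline{x}$-derivatives) comes from iterated differentiation of the ODE:
\begin{equation*}
\tfrac{1}{i}\partial_t T_{t,t'}^{(V)}w=H_V(t)T_{t,t'}^{(V)}w,\qquad \tfrac{1}{i}\partial_{t'}T_{t,t'}^{(V)}w=-T_{t,t'}^{(V)}H_V(t')w,
\end{equation*}
together with the fact that $H_V(t)$ acts continuously on $\mathscr{S}$ and depends smoothly on $t$ through the factor $a(t)$; this allows one to express mixed higher derivatives as finite sums involving $\partial_t^j a$, spatial derivatives of $b$, and $T_{t,t'}^{(V)}$ applied to Schwartz vectors, each of which is jointly continuous in $(t,t',\underline{x})$ by an application of the dominated convergence theorem to the Dyson expansion.

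Finally, part (c) is a direct corollary. Given $w\in\mathscr{S}(\mathbb{R}^s,\mathbb{C}^N)$ and $t'\in\mathbb{R}$, define $\varphi(t,\underline{x}):=(T_{t,t'}^{(V)}w)(\underline{x})$; by (b) this lies in $C^\infty(\mathbb{R})\otimes\mathscr{S}(\mathbb{R}^s)\otimes\mathbb{C}^N$, and the evolution equation for $T_{t,t'}^{(V)}$ is exactly $\tfrac{1}{i}\partial_t P_t\varphi=H_V(t)P_t\varphi$, which, as recalled in the paragraph just before the statement of the proposition, is equivalent to $D_V\varphi=0$; the initial condition $P_{t'}\varphi=w$ holds by $T_{t',t'}^{(V)}=\eins$. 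For uniqueness, if $D_V\varphi=0$ with $P_{t'}\varphi=0$, then $t\mapsto P_t\varphi$ solves the homogeneous initial value problem for $H_V(t)$, and by the uniqueness part of (a) vanishes for all $t$, so $\varphi\equiv 0$.
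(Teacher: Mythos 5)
Your overall strategy is the same as the paper's: work in the interaction picture, build $T_{t,t'}^{(V)}$ as a Dyson series, reduce to showing that the series preserves $\mathscr{S}(\mathbb{R}^s,\mathbb{C}^N)$, and then obtain (b) and (c) from (a). Parts (b) and (c) are fine: your ODE-based bootstrapping for joint smoothness is a legitimate alternative to the paper's Sobolev-lemma argument, and your (c) is essentially identical to the paper's.

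However, the crucial claim in part (a) has a genuine gap. You assert that the $n$-th term satisfies $\|\tilde V(t_1)\cdots\tilde V(t_n)w\|_k \leq C_k^n \cdot (\text{stuff}) \cdot \|w\|_{k'}$ with $k'$ \emph{independent of $n$}, ``using that each application of $\tilde V(s)$ is continuous on $\mathscr{S}$ with seminorm bounds that are locally uniform in $s$.'' That justification is insufficient. Fréchet continuity of $\tilde V(s) = e^{isH_0}\gamma^0 V(s) e^{-isH_0}$, locally uniform in $s$, generically only gives $\|\tilde V(s)g\|_k \leq C_k\|g\|_{\phi(k)}$ with $\phi(k) > k$, and iterating then yields $\|w\|_{\phi^{(n)}(k)}$ — a seminorm index growing with $n$, which destroys Fréchet convergence of the Dyson series (the factor $1/n!$ does not help because $\|w\|_{k'(n)}$ can grow arbitrarily fast in $n$ for a generic Schwartz $w$). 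The source of degradation is precisely the one you gloss over: $e^{\pm isH_0}$ does \emph{not} commute with coordinate multiplication $x^\alpha$, so each pass through a factor of $e^{isH_0}$ converts $x^\alpha$-weights into higher mixed seminorms. What makes the estimate $n$-independent is a specific structural fact that must be invoked explicitly: (i) pure spatial derivatives $D^\beta$ commute with $e^{isH_0}$ and distribute over the Moyal product via Leibniz, yielding an estimate $\|D^\beta \boldsymbol{f}^{(n)}\|_{L^2} \leq C^n B_{|\beta|}^n \sup_{|\gamma|\leq|\beta|}\|D^\gamma f\|_{L^2}$ with the seminorm index on $f$ independent of $n$; (ii) the coordinate-multiplication identity \eqref{E_CoordMultMoyalProd} allows all $x^\alpha$-weights to be absorbed onto the fixed Schwartz function $b$ (or its derivatives) at a \emph{single} outermost step, after which only pure $D^\gamma$-estimates propagate inward. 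Without this decomposition — first prove the inductive $D^\beta$-only estimate, then handle $x^\alpha$ once — your claimed $k'$-independence of $n$ is unproved, and the Fréchet convergence of the Dyson series (which is the technical heart of the proposition) is not established.
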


\begin{proof}
{\it Part} (a).
We shall work in the interaction picture, i.e. we obtain $T^{(V)}_{t,t'}$ as
\begin{equation}\label{E_DysonInteractEqns}
T^{(V)}_{t,t'}=e^{itH_0}\tilde T^{(V)}_{t,t'}e^{-it'H_0},
\end{equation}
where $\tilde T^{(V)}_{t,t'}$ is the Dyson series for
\begin{equation*}
\tilde U(t)=e^{itH_0}\gamma^0V(t)e^{-itH_0},
\end{equation*}
meaning that
\begin{equation*}
\tilde T^{(V)}_{t,t'}=\sum_{n=0}^\infty \tilde T^{(V)}_n(t,t'),
\end{equation*}
where $\tilde T^{(V)}_n(t,t')$ is iteratively defined by
\begin{equation*}
\tilde T^{(V)}_0(t,t')=\eins,\quad \tilde T^{(V)}_{n}(t,t')=\frac{1}{i}\int_{t'}^t 
\tilde U(r)\tilde T^{(V)}_{n-1}(r,t')dr.
\end{equation*}
Since the operators $V(t)$ and $\gamma^0V(t)$ are bounded operators (with 
uniform bound in $t$) on $L^2(\bR^s,\bC^N)$,
and $H_0$ is essentially selfadjoint, one can rely on Theorem X.69 
in~\cite{ReedSimon2} to see that $T^{(V)}_{t,t'}$ is
a family of unitaries with the required properties, provided that 
$T^{(V)}_{t,t'}$ maps $\sS(\bR^s,\bC^N)$ into itself.
To show this, we note that (cf.~\cite{Thaller}, Theorem 1.2 and Appendix 1.D)
\begin{eqnarray*}
\lefteqn{(e^{itH_0}f)(\ulx)}\\
&&=\int e^{i\underline{k}\cdot\ulx}\left(\cos{(|\widehat{H}(\underline{k})|t)}-
(\gamma^0\gamma^kp_k+i\gamma^0m)
\frac{\sin{( |\widehat{H}(\underline{k}) |t)}}{|\widehat{H}(\underline{k})|}\right)\hat 
f(\ulp)\frac{d^s\ulp}{(2\pi)^s},
\end{eqnarray*}
where $\hat f$ is the Fourier transform of $f$, $\ulp=(p_1,\ldots,p_s)\in\bR^s$, 
and $|\widehat{H}(\underline{k})|=\sqrt{|\underline{k}|^2+m^2}$.
This shows that $e^{itH_0}f$ is in $\sS(\bR^s,\bC^N)$ ($t\in\bR$) for 
$f\in\sS(\bR^s,\bC^N)$ and that, moreover,
$e^{itH_0}f$ is $C^\infty$ in $t$ with respect to the $\sS$-topology. In the 
next step, we note that
\begin{equation*}
\tilde T^{(V)}_{n}(t'',t')=\left(\frac{1}{i}\right)^n\int_{t'}^{t''}\int_{t'}^{t_n}\cdots\int_{t'}^{t_2}
\tilde U(t_n)\cdots\tilde U(t_1)dt_1\cdots dt_n.
\end{equation*}
We set
\begin{eqnarray} \label{E_lcvi}
{\sf v}f&=&{\sf v}_\text{(i)}f=\gamma^0(b\star_{(q-1,p)}f+f\star_{(q-
1,p)}b),\quad V=V_\text{(i)}\\
\label{E_lcvii}
{\sf v}f&=&{\sf v}_\text{(ii)}f=\gamma^0(b\star_{(q-1,p)}f\star_{(q-
1,p)}b),\quad V=V_\text{(ii)}.
\end{eqnarray}
Then $V(t)f=a(t){\sf v}f$, and
\begin{equation*}
\tilde 
T^{(V)}_{n}(t'',t')f=\left(\frac{1}{i}\right)^n\int_{t'}^{t''}\int_{t'}^{t_n}\cdots\int_{t'}^{t_2}
a(t_n)\cdots a(t_1)\boldsymbol{f}^{(n)}(\boldsymbol{t}^{(n)})dt_1\cdots dt_n,
\end{equation*}
where
\begin{equation*}
\boldsymbol{f}^{(n)}(\boldsymbol{t}^{(n)})=e^{it_nH_0}{\sf v}e^{i(t_{n-1}-
t_n)H_0}{\sf v}\cdots e^{i(t_{1}-t_2)H_0}{\sf v}e^{-it_1H_0}f.
\end{equation*}
This implies that, given any pair of multi-indices $\alpha,\beta\in\bN_0^s$, we 
obtain an estimate of the form\footnote{For the remainder of this proof,
$D^\beta = (-i \partial/\partial x^1)^{\beta_1} \cdots (-i \partial/\partial x^s)^{\beta_s}$;
here, $D$ is not to be confused with the Dirac operator.}
\begin{equation}\label{E_L2NormxalphaDbetatildeTnf}
\left\|x^\alpha D^\beta\tilde T^{(V)}_{n}(t'',t')f\right\|_{L^2}\leq 
\frac{C_a(t'',t')^{2n}}{n!}
\max_{t_j\in [t'',t']}{\left\|x^\alpha 
D^\beta\boldsymbol{f}^{(n)}(\boldsymbol{t}^{(n)})\right\|_{L^2}}
\end{equation}
with the constant
\begin{equation*}
C_a(t'',t')=\max_{t\in\bR}{|a(t)||t''-t'|}.
\end{equation*}
Now we will show that there is a constant $C_{b,\alpha,\beta}(t'',t')$ and a 
constant $m(\alpha,\beta)$ such that
\begin{equation}\label{E_n_maxL2Normft}
\max_{t_j\in [t'',t']}{\left\|x^\alpha 
D^\beta\boldsymbol{f}^{(n)}(\boldsymbol{t}^{(n)})\right\|_{L^2}}\leq 
C_{b,\alpha,\beta}(t'',t')^n
\sup_{|\gamma|\leq 2m(\alpha,\beta)}{\|D^\gamma f\|_{L^2}}.
\end{equation}
The proof will be given by induction on $n$, and we will only treat the case 
${\sf v}={\sf v}_\text{(i)}$ (corresponding to
$V=V_\text{(i)}$) as the other case ${\sf v}={\sf v}_\text{(ii)}$ is completely 
analogous.
The inductive proof will only be needed for the special case $x^\alpha=1$ (i.e. 
all $\alpha_j=0$) which makes it more
transparent, and the result will be used in the proof of the general case. Let 
$\beta\in\bN_0^s$ be a multi-index and
define $B_{|\beta|}=\sup_{|\gamma|\leq |\beta|}{\|D^\gamma b\|_{L^2}}$.
We want to prove by induction that there is a constant $F > 0$ with
\begin{equation}\label{E_ast_L2NormDbetaft}
\left\|D^\beta\boldsymbol{f}^{(n)}(\boldsymbol{t}^{(n)})\right\|_{L^2}\leq 
2^n\cdot 2^{|\beta|n}F^nB_{|\beta|}^n
\sup_{|\gamma|\leq |\beta|}{\|D^\gamma f\|_{L^2}}.
\end{equation}
Let us show that this is correct for $n=1$:
\begin{eqnarray*}
\boldsymbol{f}^{(1)}(t_1)&=&e^{it_1H_0}{\sf v}e^{-it_1H_0}f\\
&=&e^{it_1H_0}\gamma^0\left(b\star_{(q-1,p)}(e^{-it_1H_0}f)+(e^{-
it_1H_0}f)\star_{(q-1,p)}b\right).
\end{eqnarray*}
The Leibniz rule for coordinate derivatives applies with respect to the Moyal 
product $\star=\star_{(q-1,p)}$:
\begin{equation}\label{E_LeibnizRuleForMoyal}
\frac{\partial}{\partial x^j}(h\star g)=\left(\frac{\partial}{\partial 
x^j}h\right)\star g+h\star \left(\frac{\partial}{\partial x^j}g\right),
\quad h,g\in\sS(\bR^s).
\end{equation}
Since the coordinate derivatives $\frac{\partial}{\partial x^j}$ commute with 
$e^{itH_0}$, we hence obtain
\begin{eqnarray}
D^\beta\boldsymbol{f}^{(1)}(t_1)&=&\sum_{k=1}^{2^{|\beta|}} e^{it_1H_0}\gamma^0
\left((D^{\beta'(k)}b)\star (D^{\beta''(k)}e^{-it_1H_0}f)\right.\nonumber\\
&&\left.+(D^{\beta'(k)}e^{-it_1H_0}f)\star 
(D^{\beta''(k)}b)\right),\label{E_astast_Dbetaft1}
\end{eqnarray}
where $\beta'(k)$ and $\beta''(k)$ are suitable multi-indices\footnote{The case 
$\beta'(k_1)=\beta'(k_2)$ and
$\beta''(k_1)=\beta''(k_2)$ for some $k_1\neq k_2$ typically occurs in our sum 
decomposition of multiple derivatives
of a product. Usually, this is written as a sum over fewer terms, occurring with 
a multiplicity expressed by binomial
coefficients.} with $\beta'_j(k)+\beta''_j(k)=\beta_j$.
Using the fact that $\|h\star g\|_{L^2}\leq F\|h\|_{L^2}\|g\|_{L^2}$ for 
$h,g\in\sS(\bR^s)$, with $F=(2\pi\theta)^{-p/2}$,
one deduces
\begin{equation}\label{E_astastast_L2NormDbetaft1}
\left\|D^\beta\boldsymbol{f}^{(1)}(t_1)\right\|_{L^2}\leq 
2\cdot 2^{|\beta|}FB_{|\beta|} \sup_{|\gamma|\leq |\beta|}{\|D^\gamma f\|_{L^2}}.
\end{equation}
In order to conclude that the validity of~(\ref{E_ast_L2NormDbetaft}) for some 
$n\in\bN$ implies the
relation~(\ref{E_ast_L2NormDbetaft}) with $n+1$ in place of $n$, we note that
\begin{eqnarray*}
\boldsymbol{f}^{(n+1)}(\boldsymbol{t}^{(n+1)})&=&e^{it_{n+1}H_0}ve^{-
it_{n+1}H_0}\boldsymbol{f}^{(n)}(\boldsymbol{t}^{(n)})\\
&=&e^{it_{n+1}H_0}\gamma^0\left(b\star(e^{-
it_{n+1}H_0}\boldsymbol{f}^{(n)}(\boldsymbol{t}^{(n)}))\right.\\
&&\left.+(e^{-it_{n+1}H_0}\boldsymbol{f}^{(n)}(\boldsymbol{t}^{(n)}))\star 
b\right).
\end{eqnarray*}
Hence, relation~(\ref{E_astast_Dbetaft1}) continues to hold under the 
simultaneous replacements
$\boldsymbol{f}^{(1)}(t_1)\mapsto 
\boldsymbol{f}^{(n+1)}(\boldsymbol{t}^{(n+1)})$,
$e^{\pm it_1H_0}\mapsto e^{\pm it_{n+1}H_0}$ and $f\mapsto 
\boldsymbol{f}^{(n)}(\boldsymbol{t}^{(n)})$.
Therefore,~(\ref{E_astastast_L2NormDbetaft1}) also holds when making these 
replacements, leading to the estimate
\begin{eqnarray*}
\left\|D^\beta\boldsymbol{f}^{(n+1)}(\boldsymbol{t}^{(n+1)})\right\|_{L^2}&\leq& 
2\cdot 2^{|\beta|}FB_{|\beta|}
\sup_{|\gamma|\leq |\beta|}{\|D^\gamma 
\boldsymbol{f}^{(n)}(\boldsymbol{t}^{(n)})\|_{L^2}}\\
&\leq& 2^{n+1}\cdot 2^{|\beta|(n+1)}F^{n+1}B_{|\beta|}^{n+1}
\sup_{|\gamma|\leq |\beta|}{\|D^\gamma f\|_{L^2}},
\end{eqnarray*}
where the induction hypothesis~(\ref{E_ast_L2NormDbetaft}) was used in the 
second inequality. This proves by induction
that~(\ref{E_ast_L2NormDbetaft}) holds for all $n\in\bN$.

Turning to the general case, the first observation is that, given multi-indices 
$\alpha,\beta\in\bN_0^s$ and a finite
real interval $[t'',t']$, there are constants $\Gamma_{\alpha\beta}(t'',t')>0$ 
and $m(\alpha,\beta)>0$ such that
\begin{equation*}
\left\|x^\alpha D^\beta e^{itH_0}\psi\right\|_{L^2}\leq 
\Gamma_{\alpha\beta}(t'',t')\sum_{|\rho|,|\delta|\leq m(\alpha,\beta)}{\|x^\rho 
D^\delta \psi\|_{L^2}}
\end{equation*}
holds for all $\psi\in\sS(\bR^s,\bC^N)$ and all $t\in[t'',t']$, where 
$\rho,\delta$ are multi-indices.
The second observation is that the action of multiplication by a coordinate 
function $x^j$ on a Moyal product can be
decomposed as follows (cf.~\cite{GV}): There are numbers $\varepsilon(j)$ which 
may take the values $0$, $1$ or $-1$, and
for each coordinate $x^j$ there is a coordinate $x^{\iota(j)}$, such that
\begin{eqnarray}
x^j(h\star g)&=&h\star(x^jg)+\frac{i\varepsilon(j)\theta}{2}\frac{\partial 
h}{\partial x^{\iota(j)}}\star g\nonumber\\
&=&(x^jh)\star g-\frac{i\varepsilon(j)\theta}{2}h\star\frac{\partial g}{\partial 
x^{\iota(j)}}\label{E_CoordMultMoyalProd}
\end{eqnarray}
for all $h,g\in\sS(\bR^s)$. Now we define:
\begin{eqnarray*}
M_{\alpha,\beta}&=&\sup_{|\gamma|,|\sigma|\leq 
m(\alpha,\beta)}{\|x^\gamma D^\sigma b\|_{L^2}}\\
C_{b,\alpha,\beta}(t'',t')&=&2\cdot 
2^{2m(\alpha,\beta)}F\Gamma_{\alpha\beta}(t'',t')m(\alpha,\beta)^{2}
M_{\alpha,\beta}\left(1+\frac{|\theta|}{2}\right)^{m(\alpha,\beta)}
\end{eqnarray*}
and we will show that~(\ref{E_n_maxL2Normft}) holds with these definitions for 
all $n\in\bN$. It holds that
\begin{eqnarray}
\lefteqn{\left\|x^\alpha D^\beta 
\boldsymbol{f}^{(n+1)}(\boldsymbol{t}^{(n+1)})\right\|_{L^2}}\nonumber\\
&=&\left\|x^\alpha D^\beta e^{it_{n+1}H_0}{\sf v}e^{-
it_{n+1}H_0}\boldsymbol{f}^{(n)}(\boldsymbol{t}^{(n)})\right\|_{L^2}\nonumber\\
&\leq &\Gamma_{\alpha\beta}(t'',t')\sum_{|\rho|,|\delta|\leq m(\alpha,\beta)}
{\|x^\rho D^\delta {\sf v}e^{-
it_{n+1}H_0}\boldsymbol{f}^{(n)}(\boldsymbol{t}^{(n)})\|_{L^2}}\label{E_L2NormxalphaDbetaftnp1}
\end{eqnarray}
for all $t_{n+1}\in [t'',t']$ and all $n\in\bN_0$ (with 
$\boldsymbol{f}^{(0)}(\boldsymbol{t}^{(0)})=f$). Now we
use~(\ref{E_LeibnizRuleForMoyal}) and~(\ref{E_CoordMultMoyalProd}) to conclude 
that we can write
\begin{eqnarray}
\lefteqn{x^\rho D^\delta {\sf v}e^{-
it_{n+1}H_0}\boldsymbol{f}^{(n)}(\boldsymbol{t}^{(n)})}\nonumber\\
&=&\sum_{l=1}^{2^{|\rho|}}\sum_{k=1}^{2^{|\delta|}} \left(\mu(l,k)\gamma^0
\left(D^{\rho'(l)}D^{\delta'(k)}e^{-
it_{n+1}H_0}\boldsymbol{f}^{(n)}(\boldsymbol{t}^{(n)})\right)
\star \left(x^{\rho''(l)}D^{\delta''(k)}b\right)\right.\nonumber\\
&&\left.+\nu(l,k)\gamma^0\left(x^{\rho'(l)}D^{\delta'(k)}b\right)\star 
\left(D^{\rho''(l)}D^{\delta''(k)}
e^{-
it_{n+1}H_0}\boldsymbol{f}^{(n)}(\boldsymbol{t}^{(n)})\right)\right),\label{E_rhoDdeltavefnt}
\end{eqnarray}
where $\mu(l,k),\nu(l,k)$ are complex numbers and 
$\rho'(l),\rho''(l),\delta'(k),\delta''(k)$ are suitable multi-indices,
where $\delta'_j(k)+\delta''_j(k)=\delta_j$; $|\rho'(l)|,|\rho''(l)|\leq |\rho|$, 
and
$|\mu(l,k)|,|\nu(l,k)|\leq \left(1+\frac{|\theta|}{2}\right)^{|\rho|}$.
Thus we obtain for the sum on the right hand side of~(\ref{E_rhoDdeltavefnt}) 
the $L^2$-norm bound
\[
2\cdot 2^{|\rho|}\cdot 2^{|\delta|}\left(1+\frac{|\theta|}{2}\right)^{|\rho|}F
\sup_{|\gamma|\leq |\rho|,\ |\sigma|\leq |\delta|}{\|x^\gamma D^\sigma b\|_{L^2}}
\cdot \sup_{|\gamma|\leq |\rho|+|\delta|}{\|D^\gamma f\|_{L^2}},
\]
using~(\ref{E_ast_L2NormDbetaft}) again; inserting this 
into~(\ref{E_L2NormxalphaDbetaftnp1}) yields
\begin{eqnarray*}
\lefteqn{\left\|x^\alpha D^\beta 
\boldsymbol{f}^{(n+1)}(\boldsymbol{t}^{(n+1)})\right\|_{L^2}}\\
&\leq &\Gamma_{\alpha\beta}(t'',t')m(\alpha,\beta)^{2}\cdot 2\cdot 
2^{2m(\alpha,\beta)}
\left(1+\frac{|\theta|}{2}\right)^{m(\alpha,\beta)}F\\
&&\cdot \sup_{|\gamma|,|\sigma|\leq m(\alpha,\beta)}{\|x^\gamma D^\sigma 
b\|_{L^2}}
\cdot \sup_{|\gamma|\leq 2m(\alpha,\beta)}{\|D^\gamma f\|_{L^2}}\\
&\leq &C_{b,\alpha,\beta}(t'',t')^{n+1}\sup_{|\gamma|\leq 
2m(\alpha,\beta)}{\|D^\gamma f\|_{L^2}}
\end{eqnarray*}
with the above definitions.
This proves that~(\ref{E_n_maxL2Normft})
holds for all $n\in\bN$.

In combination with~(\ref{E_L2NormxalphaDbetatildeTnf}), we have thus proved 
that for each given $f\in\sS(\bR^s,\bC^N)$,
the Dyson series $\sum_{n=0}^\infty \tilde T^{(V)}_{n}(t'',t')f$ converges in 
all Schwartz norms and thus yields
again an element in $\sS(\bR^s,\bC^N)$. Therefore, $T^{(V)}_{t,t'}$ also maps 
$\sS(\bR^s,\bC^N)$ into itself and thence has (as mentioned, by Thm.\ X.69 in
\cite{ReedSimon2})
the properties claimed in statement (a) of the Lemma.
\\[6pt]
{\it Part} (b).
The arguments showing the claimed property are quite standard in view of the 
estimates given to establish part (a), so we will mainly sketch them.
Let $\mu$ be any $\mathbb{C}$-valued $C_0^\infty$-function on $\mathbb{R} \times
\mathbb{R}$ and denote by $Y$ the function
$$ (t,t',\underline{x}) \mapsto \mu(t,t') T_{t,t'}^{(V)}w(\underline{x}) \equiv
Y(t,t',\underline{x})\,.$$
Since $(t,t') \mapsto T_{t,t'}^{(V)}w \in L^2(\mathbb{R}^s,\mathbb{C}^N)$ is
continuous, $Y$ is in $L^2(\mathbb{R} \times \mathbb{R} \times \mathbb{R}^s,\mathbb{C}^N)$.
Thus we need only show that, if $\Delta_{s +2}$ denotes the Laplacian in
$s + 2$ dimensions,
$ (1 -\Delta_{s + 2})^J Y $ is again in $L^2(\mathbb{R} \times \mathbb{R} \times \mathbb{R}^s,
\mathbb{C}^N)$ for all $J \in \mathbb{N}$; the claimed statement on smoothness then follows by Sobolev's Lemma
(cf.\ Thm.\ IX.24 in \cite{ReedSimon2}). In turn, the required property follows from the
fact that $T_{t,t'}^{(V)}$ maps $\mathscr{S}(\mathbb{R}^s,\mathbb{R}^N)$ into itself
and that, as established in the proof of (a) or following immediately thereof, 
\begin{eqnarray*}
 {\rm (i)} & & \frac{\partial}{\partial t} T_{t,t'}^{(V)} w = i H_V(t) T_{t,t'}^{(V)} w\,, \\
    & & \frac{\partial}{\partial t'}T_{t,t'}^{(V)} w = -i T_{t,t'}^{(V)}H_V(t')w\,, \\
 {\rm (ii)} & & x^\alpha D^\beta H_V(t) x^{\alpha'}D^{\beta'} \ \text{is a continuous operator on}
\\ & & \mathscr{S}(\mathbb{R}^s,\mathbb{C}^N)\ \text{uniformly in}\ t\ \text{ranging over compact intervals.}
\end{eqnarray*}
{\it Part} (c).
It follows form parts (a) and (b) that there exits for any given Cauchy-datum $w \in 
\mathscr{S}(\mathbb{R}^s,\mathbb{C}^N)$ a solution $\varphi \in C^\infty(\mathbb{R}) \otimes \mathscr{S}(\mathbb{R}^s) \otimes \mathbb{C}^N$
of $D_V\varphi = 0$ with $P_{t'}\varphi = w$.

Recalling the definition 
$$ (v,w)_{\mathcal{D}} = \int_{\mathbb{R}^s} \delta_{AB}\overline{v}^A(\ulx)w^b(\ulx)\,d^s\ulx $$
for $v,w \in L^2(\mathbb{R}^s,\mathbb{C}^N)$, one finds
$$ \frac{d}{dt} (P_t \varphi,P_t \psi)_{\mathcal{D}} = (iH_V(t)P_t\varphi,P_t\psi)_{\mathcal{D}}
 + (P_t\varphi,iH_V(t)P_t\psi)_{\mathcal{D}} = 0$$
for all $\varphi,\psi \in C^\infty(\mathbb{R}) \otimes \mathscr{S}(\mathbb{R}^s) \otimes \mathbb{C}^N$
which are solutions of the equations $D_V\varphi = 0$ and $D_V\psi = 0$. Hence, in particular, 
if for two solutions $\varphi$ und $\psi$ there holds $(P_{t'}(\varphi - \psi),P_{t'}(\varphi - \psi))_{\mathcal{D}} = 0$
for some real $t'$, then it follows that  $(P_{t}(\varphi - \psi),P_{t}(\varphi - \psi))_{\mathcal{D}} = 0$ for all real $t$.
This shows that, if $\varphi$ and $\psi$ have the same Cauchy-datum on some Cauchy-hyperplane $\Sigma_{t'}$, then
actually $\varphi = \psi$. 
${}$ \hfill $\Box$    
 \end{proof}
On $C^\infty_0(\bR)\otimes \sS(\bR^s)\otimes \bC^N$ we can introduce the 
sesquilinear form
\begin{eqnarray*}
\langle f,h\rangle&=&\int_{\bR^n}\gamma_{0AB}(\bar 
f^B\star_{(q,p)}h^A)(x)d^nx\nonumber\\
&=&\int_{\bR^n}\gamma_{0AB}\bar f^B(x)h^A(x)d^nx,
\end{eqnarray*}
where the last equality follows from the tracial property of the Moyal product 
for $q=0$ (Lemma 2.1 (v) in~\cite{GGISV}) and its
obvious generalization to arbitrary $(q,p)$ due to the trivial case $p=0$ and 
the tensor product structure~(\ref{E_qpMoyalTensProd}).
Therefore we still have
\begin{equation*}
\langle Cf,Ch\rangle=-\langle h,f\rangle\quad (f,h\in C^\infty_0(\bR)\otimes 
\sS(\bR^s)\otimes \bC^N).
\end{equation*}
We recall the definitions $\Sigma_t=\{x=(x^0,x^1,\ldots,x^s)\in\bR^n:x^0=t\}$, 
$\cD_t=L^2(\Sigma_t,\bC^N)$,
\begin{eqnarray*}
(v,w)_\cD&=&\int_{\Sigma_t}\delta_{AB}(\bar 
v^A\star_{(q,p)}w^B)(\ulx)d^s\ulx\nonumber\\
&=&\int_{\Sigma_t}\bar v^A(\ulx)\delta_{AB}w^B(\ulx)d^s\ulx\quad (v,w\in\cD_t).
\end{eqnarray*}
and the property $(Cv,Cw)_\cD=(w,v)_\cD$ of a conjugation $C$ induced on each 
$\cD_t$ by the charge conjugation $C$ (same symbol).

For a subset $G$ of $n$ dimensional Moyal-deformed Minkowski spacetime the sets 
$J^\pm(G)$ and the notion of hyperbolicity are
defined in exactly the same way as in section~\ref{S_DiracField}.

Now we transfer the results collected in Proposition~\ref{Prop1} to our new 
setting. For this purpose the following definition is needed.
\begin{definition}
{\rm 
Let $K$ be a non-empty  subset of $\mathbb{R}^n$. Then let
$$ \kappa_-(K) = \inf\{x^0 : x = (x^0,\underline{x}) \in K\}\,, \ \ \
\kappa_+(K) = \sup\{x^0 : x = (x^0,\underline{x}) \in K\}\,,$$
and define
\begin{eqnarray*}
 \cT^+(K) & = & \{(x^0,\underline{x}) \in \mathbb{R}^n: x^0 \ge \kappa_-(K)\}\,,\\
\cT^-(K) & = & \{(x^0,\underline{x}) \in \mathbb{R}^n: x^0 \le \kappa_+(K)\}\,.
\end{eqnarray*} }

\end{definition}
\begin{center}
\begin{picture}(300,50)
\thicklines
\put(50,25.5){\oval(50,20)}
\put(200,24.5){\oval(50,20)}
\thinlines
\put(0,15){\line(1,0){100}}
\put(150,35){\line(1,0){100}}
\put(45,20){$\bf K$}
\put(195,20){$\bf K$}
\put(0,15){\line(1,3){8}}\put(10,15){\line(1,3){8}}\put(20,15){\line(1,3){8}}\put(30,15){\line(1,3){8}}
\put(40,15){\line(1,3){8}}\put(50,15){\line(1,3){8}}\put(60,15){\line(1,3){8}}\put(70,15){\line(1,3){8}}
\put(80,15){\line(1,3){8}}\put(90,15){\line(1,3){8}}
\put(150,35){\line(1,-3){8}}\put(160,35){\line(1,-3){8}}\put(170,35){\line(1,-3){8}}\put(180,35){\line(1,-3){8}}
\put(190,35){\line(1,-3){8}}\put(200,35){\line(1,-3){8}}\put(210,35){\line(1,-3){8}}\put(220,35){\line(1,-3){8}}
\put(230,35){\line(1,-3){8}}\put(240,35){\line(1,-3){8}}
\put(77,25){$\cT^+(K)$}
\put(227,15){$\cT^-(K)$}
\end{picture}
${}$\\
{\small \bf Figure 1: Sketch of the regions $\cT^\pm(K)$}
\end{center}

\begin{proposition}\label{Prop1forMoyal}
\hfill
\begin{enumerate}
\item
$\langle D_Vf,h\rangle=\langle f,D_Vh\rangle\quad(f,h\in C^\infty_0(\bR)\otimes 
\sS(\bR^s)\otimes \bC^N)$
\item\label{Prop1forMoyal_b}
There is a unique pair of continuous linear maps
\[
R^\pm_V:C^\infty_0(\bR)\otimes \sS(\bR^s)\otimes \bC^N\rightarrow 
C^\infty(\bR)\otimes \sS(\bR^s)\otimes \bC^N
\]
having the properties
\begin{eqnarray*}
&&\quad\quad D_VR^\pm_Vf=f=R^\pm_VD_Vf\text{ and }\nonumber\\
&&\supp R^\pm_Vf \subset \cT^\pm(\supp f)\quad(f\in 
C^\infty_0(\bR)\otimes \sS(\bR^s)\otimes \bC^N).
\end{eqnarray*}
\item
$CR^\pm_V=R^\pm_VC$
\item
Writing $R_V=R^+_V-R^-_V$, the form
\begin{equation*}
(f,h)_V=\langle f,iR_Vh\rangle
\end{equation*}
is a sesquilinear form on $C^\infty_0(\bR)\otimes \sS(\bR^s)\otimes \bC^N$, and 
$C$ is a conjugation for this form:
\begin{equation*}
(Cf,Ch)_V=(h,f)_V=\overline{(f,h)_V}\quad (f,h\in C^\infty_0(\bR)\otimes 
\sS(\bR^s)\otimes \bC^N).
\end{equation*}
\item\label{I_e_Prop1forMoyal}
For each $t\in\bR$ it holds that
\begin{equation*}
(f,h)_V=(P_tR_Vf,P_tR_Vh)_{\cD},\quad (f,h\in C^\infty_0(\bR)\otimes 
\sS(\bR^s)\otimes \bC^N),
\end{equation*}
where $P_t:C^\infty(\bR)\otimes \sS(\bR^s)\otimes \bC^N\rightarrow 
\sS(\Sigma_t,\bC^N)$ is the map given by
\begin{equation*}
P_t:\varphi\mapsto \varphi(t,\cdot)
\end{equation*}
for $\varphi:(x^0,\ulx)\mapsto \varphi(x^0,\ulx)$ in $C^\infty(\bR)\otimes 
\sS(\bR^s)\otimes \bC^N$, $\ulx=(x^1,\ldots,x^s)$. Hence, $(\cdot,\cdot)_V$
is positive-semidefinite on $C^\infty_0(\bR)\otimes \sS(\bR^s)\otimes \bC^N$.
\item\label{I_f_Prop1forMoyal}
Let $E_V$ be the subspace of all $f\in C^\infty_0(\bR)\otimes \sS(\bR^s)\otimes 
\bC^N$ so that $(f,f)_V=0$, and let
$\cK_V$ be the Hilbert space arising as completion of
$(C^\infty_0(\bR)\otimes \sS(\bR^s)\otimes \bC^N)/E_V$ with respect to the 
scalar product induced by $(\cdot,\cdot)_V$ (which
will be denoted by the same symbol). The quotient map
$C^\infty_0(\bR)\otimes \sS(\bR^s)\otimes \bC^N\rightarrow 
(C^\infty_0(\bR)\otimes \sS(\bR^s)\otimes \bC^N)/E_V$ will
be written
\begin{equation*}
f\mapsto [f]_V.
\end{equation*}
Then for each $t\in\bR$, the map
\begin{equation*}
Q_{V,t}:[f]_V\mapsto P_tR_Vf
\end{equation*}
extends to a unitary map from $\cK_V$ onto $\cD_t$.
\item\label{I_g_Prop1forMoyal}
Let $G$ be an open time-slice of $n$ dimensional Moyal-deformed Minkowski 
spacetime, i.e. $G=\{(x^0,x^1,\ldots,x^s):
\lambda_1<x^0<\lambda_2\}$ for some real numbers (or infinite) 
$\lambda_1<\lambda_2$, and suppose that $V_1$ and $V_2$ are potentials
of the form described by~(\ref{E_DefVMoyal}),~(\ref{E_DefcMoyal}), and that 
$V_1=V_2$ on $G$. Then
\begin{equation*}
R^\pm_{V_1}f=R^\pm_{V_2}f\text{ on }G\text{ for all }f\in 
C^\infty_0((\lambda_1,\lambda_2))\otimes \sS(\bR^s)\otimes \bC^N.
\end{equation*}
\end{enumerate}
\end{proposition}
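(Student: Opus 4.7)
The overall strategy is to build $R_V^{\pm}$ directly from the unitary Cauchy evolution $T^{(V)}_{t,t'}$ supplied by Proposition~\ref{L_ExUniqInitValProbMoyal}, reducing every assertion to properties of $T^{(V)}_{t,t'}$ together with integration by parts on the spatial hyperplanes. Part~(a) is a direct computation: the free Dirac part is symmetric by the usual Schwartz-class integration by parts in the spatial variables, while Lemma~\ref{Lemma_HermPropMoyal}, combined with the reality of $c$ (so that $c^{*}=c$), handles the Moyal potential term $V_{\rm (i)}$ or $V_{\rm (ii)}$. For part~(b) I would define, via Duhamel,
$$
(R_V^{+}f)(t,\cdot) = -i\int_{-\infty}^{t} T^{(V)}_{t,s}\,\gamma^{0} f(s,\cdot)\,ds, \quad (R_V^{-}f)(t,\cdot) = +i\int_{t}^{+\infty} T^{(V)}_{t,s}\,\gamma^{0} f(s,\cdot)\,ds,
$$
with the overall sign fixed by the conversion from $D_V\varphi=0$ to the Hamiltonian form. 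Since $f(s,\cdot)\equiv 0$ for $s<\kappa_{-}(\supp f)$, the first integral vanishes on $\{x^0<\kappa_{-}(\supp f)\}$, yielding $\supp R_V^{+}f\subset\cT^{+}(\supp f)$; the statement for $R_V^{-}f$ is analogous. Joint $C^{\infty}$-regularity and preservation of the Schwartz class in space come from Proposition~\ref{L_ExUniqInitValProbMoyal}(a),(b), and differentiating under the integral gives $D_V R_V^{\pm}f = f$. Uniqueness follows because any difference of two such candidates is a solution of $D_V\varphi = 0$ with vanishing Cauchy datum at sufficiently early (resp.\ late) times, hence identically zero by Proposition~\ref{L_ExUniqInitValProbMoyal}(c). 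Part~(c) is then immediate from $CT^{(V)}_{t,t'} = T^{(V)}_{t,t'}C$ --- a consequence of $CH_V(t) = -H_V(t)C$, which in turn follows from~\eqref{E_LeftRightMoyalAndC} and the reality of $c$ --- together with $C\gamma^{0} = -\gamma^{0}C$ and the antilinearity of $C$ as it passes the explicit factor of $\pm i$.

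Parts~(d) and~(e) rest on the conservation of $(P_t R_V f, P_t R_V h)_{\cD}$ in $t$. The key intermediate identity $\langle R_V f, h\rangle = -\langle f, R_V h\rangle$ comes from the symmetry in~(a) together with the observation that on any finite time strip containing $\supp f$ and $\supp h$ the relevant pairings $R_V^{\pm}f$ and $R_V^{\mp}h$ are Schwartz in $\ulx$, justifying integration by parts without boundary contributions. Part~(d) then reduces to~\eqref{E_CSkewForSesqu} and~(c) by the same manipulations as in the proof of Proposition~\ref{Prop1}(d). For part~(e) I would show $\frac{d}{dt}(P_tR_Vf, P_tR_Vh)_{\cD}=0$ by passing to the Cauchy-data picture and using the symmetry of $H_V(t)$ on $\cD_t$, then evaluate at some $t>\kappa_{+}(\supp f)$ where $P_tR_Vf = P_tR_V^{+}f$, and identify the resulting boundary term with $\langle f, iR_V h\rangle$ via the explicit Duhamel formula.

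Part~(f) is isometric by~(e); the only non-trivial point is surjectivity of $Q_{V,t}$. I would adapt the partition-of-unity argument in the proof of Proposition~\ref{Prop1}(f): produce a solution $\varphi$ of $D_V\varphi = 0$ with $P_t\varphi = w$ via Proposition~\ref{L_ExUniqInitValProbMoyal}(c), pick cutoffs $\chi_{\pm}$ depending only on $x^{0}$ which partition unity adapted to $\Sigma_t$, and set $f = D_V(\chi_{+}\varphi)$. The step where the Moyal geometry genuinely enters --- and which is the main thing to verify --- is that multiplication by a function of $x^{0}$ alone commutes with $\star_{(q,p)}$: this is forced by the assumption~\eqref{E_Moyalmatrix} that the first row and column of $M$ vanish, so the time coordinate is excluded from the deformation. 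Consequently $[D_V,\chi_{\pm}] = -i\gamma^{0}\chi'_{\pm}(x^{0})$ is a multiplication operator with compact support in time, hence $f\in C^{\infty}_0(\bR)\otimes\sS(\bR^s)\otimes\bC^N$, and the remainder of the classical argument (showing $R_V f = \varphi$ from the support properties of $R_V^{\pm}$ and uniqueness) goes through verbatim. Part~(g) is a short uniqueness argument: on the time slab $G$ the functions $R_{V_1}^{\pm}f$ and $R_{V_2}^{\pm}f$ solve the same Dirac equation (since $V_1 = V_2$ on $G$) and share vanishing Cauchy data at a hyperplane below $\kappa_{-}(\supp f)$ (where both sides vanish by part~(b)), so Proposition~\ref{L_ExUniqInitValProbMoyal}(c) forces them to coincide on $G$.

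The main obstacle, compared with Proposition~\ref{Prop1}, is conceptual rather than technical: because the spatial Moyal product is genuinely non-local in $\ulx$, the sharp causal support $\supp R_V^{\pm}f \subset J^{\pm}(\supp f)$ of the commutative case has to be weakened to the time-slab support in $\cT^{\pm}(\supp f)$, which the Duhamel construction supplies automatically. It is precisely the commutativity of time with the deformation --- the vanishing of the first row and column of $M$ in~\eqref{E_Moyalmatrix} --- which keeps the Cauchy problem well-posed in the time direction and lets the partition-of-unity argument of~(f) go through unchanged, thereby allowing the entire line of reasoning from Proposition~\ref{Prop1} to survive in the non-commutative setting.
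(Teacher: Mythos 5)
Your proposal follows the paper's proof essentially step by step: the Duhamel construction of $R^\pm_V$ from the Cauchy evolution $T^{(V)}_{t,t'}$ of Proposition~\ref{L_ExUniqInitValProbMoyal}, the observation that the support of $R^\pm_V f$ lands in the time-slabs $\cT^\pm(\supp f)$, the reduction of (a) and (d) to Lemma~\ref{Lemma_HermPropMoyal} and the Dimock-type identity $\langle R_Vh,f\rangle=-\langle h,R_Vf\rangle$, and the identification of the crucial modification in (f) --- that the partition of unity must depend only on $x^0$, which works because the vanishing first row and column of $M$ in \eqref{E_Moyalmatrix} make multiplication by functions of $x^0$ commute with $\star_{(q,p)}$ --- are all exactly the paper's strategy. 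The only (inessential) deviations are in (c), where you compute $CR_V^\pm=R_V^\pm C$ directly from the Duhamel formula and $CT^{(V)}_{t,t'}=T^{(V)}_{t,t'}C$ rather than invoking uniqueness and $CD_V=D_VC$, and in (g), where you run the uniqueness argument directly through the well-posed Cauchy problem on the slab rather than through the intrinsic fundamental solutions of the restricted operator; both routes are standard and equivalent.
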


\begin{proof}[Sketch of proof]
\hfill
\begin{enumerate}
\item
Clearly the only difference compared to Proposition~\ref{Prop1}~(\ref{I_a_Prop1}) 
is the partial claim
$\langle Vf,h\rangle=\langle f,Vh\rangle$. To show this, calculate
(only for the case $V=V_\text{(i)}$; the other one is completely analogous)
\begin{eqnarray*}
\langle 
Vf,h\rangle&=&\int_{\bR^n}\gamma_{0AB}\overline{(Vf)^B}(x)h^A(x)d^nx=\gamma_{0AB
}((Vf)^B,h^A)_{L^2}\\
&=&\gamma_{0AB}(c\star_{(q,p)}f^B+f^B\star_{(q,p)}c,h^A)_{L^2}\\
&=&\gamma_{0AB}(f^B,c\star_{(q,p)}h^A+h^A\star_{(q,p)}c)_{L^2}\\
&=&\int_{\bR^n}\gamma_{0AB}\bar f^B(x)(Vh)^A(x)d^nx=\langle f,Vh\rangle,
\end{eqnarray*}
since $c$ is real-valued, using Lemma~\ref{Lemma_HermPropMoyal}.
\item
To prove this, the fundamental solutions will be constructed explicitly.
Using the notation $f_{t'}(\,.\,) = f(t',\,.\,)$, define for $f \in 
C_0^\infty(\mathbb{R}) \otimes \mathscr{S}(\mathbb{R}^s) \otimes \mathbb{C}^N$, 
  \[
(R^\pm_Vf)(t,\ulx)=\pm i\gamma^0\int  \theta(\pm(t - t'))T^{(V)}_{t',t}f_{t'}(\ulx)\,dt'\,,
   \]
where $\theta$ is the Heaviside step function.
Note that the integral is well-defined since $f_{t'}(\ulx)$ has compact support in $t'$.
By Lemma \ref{L_ExUniqInitValProbMoyal} it follows that 
$(t,\ulx) \mapsto (R_V^\pm f)(t,\ulx)$ is $C^\infty$, and of Schwartz type with respect to
$\ulx$. Using standard arguments, and exploiting the properties of $T^{(V)}_{t',t}$ given in
Lemma \ref{L_ExUniqInitValProbMoyal}, one proves that
$D_V R^\pm_V f = f = R^\pm_V D_V f$. The next step consists in showing that 
${\rm supp}(R_V^\pm f) \subset \cT^\pm({\rm supp}\,f)$. To this end, suppose that
$(x^0,\ulx) \notin \cT^+({\rm supp}\,f)$. Then $x^0 < \kappa_-({\rm supp}\,f)$ and therefore
$$ \theta(x^0 - t')T_{t',t}^{(V)}f_{t'}(\ulx) = 0$$
for all values of $t,t'$ and $\ulx$. To see this, note that if $t' \ge x^0$, then
$\theta(x^0 - t') = 0$, and if $t' < x^0 < \kappa_-({\rm supp}\,f)$, then
$f_{t'}(\,.\,) = 0$ by the definition of $\kappa_-({\rm supp}\,f)$. Hence, it holds that
$(R^+_Vf)(x^0,\ulx) = 0$ if $(x^0,\ulx) \notin \cT^+({\rm supp}\,f)$, implying that
${\rm supp}(R^+_V f) \subset \cT^+({\rm supp}\,f)$. The inclusion
${\rm supp}(R_V^-f) \subset \cT^-({\rm supp}\,f)$ can be shown in an analogous manner.
The uniqueness property of the fundamental solutions follows by a standard argument owing
to the well-posedness of the Cauchy-problem for the Dirac-equation $D_V\varphi = 0$.
\item
This is a consequence of the uniqueness of the $R^\pm_V$ together with 
$CD_V=D_VC$.
\item
Analogous to Proposition~\ref{Prop1}; the crucial point is the validity of
$\langle R_Vh,f\rangle =-\langle h,R_Vf\rangle $, for which the argument is
again similar to the proof of Thm.\ 2.1 in \cite{Dimock}.
\item
The argument is the same as in Proposition~\ref{Prop1}.
\item
The proof is identical to the corresponding statement (g) of Proposition~\ref{Prop1}.
The modification lies in the generalized class
of Cauchy data. The choice of a partition of unity
$\chi_\pm$ depending only on the time-coordinate
$x^0$ is needed at this point.
\item
Apart from the modified assumption on the subset $G$, providing an adjusted 
time-direction behaviour for the
non-commutative case, this can obviously proved the same way as 
Proposition~\ref{Prop1},~(\ref{I_g_Prop1}).\qed
\end{enumerate}
\end{proof}

Analogous to Section~\ref{S_DiracField} the self-dual CAR-algebra $\fF(\cK_V,C)$ 
is generated by the $\bC$-linear
``abstract field operators'' $\Psi(f)=\Psi_V(f)=B_V([f]_V)$, for
$f\in C^\infty_0(\bR)\otimes \sS(\bR^s)\otimes \bC^N$, obeying the relations
\begin{eqnarray*}
\Psi(f)^*&=&\Psi(Cf),\nonumber\\
\{\Psi(f)^*,\Psi(h)\}&=&2(f,h)_V\eins,\nonumber\\
\Psi(D_Vf)&=&0.
\end{eqnarray*}
Note that because of the trivial action of $V$ with respect to the first 
coordinate (the time) it holds that
$D_Vf\in C^\infty_0(\bR)\otimes \sS(\bR^s)\otimes \bC^N$.

Again this construction can be carried out for ``local subspaces'' of $\cK_V$ as 
well (cf.\ Section~\ref{S_DiracField}),
and we get $\fF(\cK_V^G,C)$ generated by $B_V^G([f]_V^G)$,
but this time only for an open time-slice $G$ of $n$ dimensional Moyal-deformed 
Minkowski spacetime and no longer
for arbitrary hyperbolic subsets.
Being mainly a consequence of Proposition~\ref{Prop1}~(\ref{I_g_Prop1}), 
Lemma~\ref{Lemma2} can be transferred
almost unchanged.

\begin{lemma}\label{Lemma2forMoyal}
Suppose that $G$ is a hyperbolic neighbourhood of a Cauchy hyperplane in $n$ 
dimensional Moyal-deformed Minkowski spacetime,
of the form as in Proposition~\ref{Prop1forMoyal}~(\ref{I_g_Prop1forMoyal}). 
Moreover,
suppose that $V_1$ and $V_2$ are two potentials of 
type~(\ref{E_DefVMoyal}),(\ref{E_DefcMoyal}), which coincide on the region $G$. 
Then
\begin{enumerate}
\item
The map
\begin{equation*}
u^G_{V_1,V_2}:[f]^G_{V_1}\mapsto [f]_{V_2},\quad f\in 
C^\infty_0((\lambda_1,\lambda_2))\otimes \sS(\bR^s)\otimes \bC^N
\end{equation*}
extends to a unitary between $\cK^G_{V_1}$ and $\cK_{V_2}$ commuting with the 
charge conjugation $C$.
\item
There is a $*$-algebra isomorphism
\begin{equation*}
\alpha^G_{V_1,V_2}:\fF(\cK^G_{V_1},C)\rightarrow \fF(\cK_{V_2},C)
\end{equation*}
induced by
\begin{equation*}
\alpha^G_{V_1,V_2}\left(B^G_{V_1}([f]^G_{V_1})\right)=B_{V_2}([f]_{V_2}),\quad 
f\in C^\infty_0((\lambda_1,\lambda_2))\otimes \sS(\bR^s)\otimes \bC^N.
\end{equation*}
\end{enumerate}
\end{lemma}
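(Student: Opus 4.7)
The plan is to mimic the structure of the proof of Lemma \ref{Lemma2}, making sure that the specific features of the Moyal setup (in particular the restricted class $C^\infty_0(\bR)\otimes\sS(\bR^s)\otimes\bC^N$ and the time-slice shape of the admissible region $G$ from Proposition \ref{Prop1forMoyal}\,(\ref{I_g_Prop1forMoyal})) are used correctly. For part (a), I would first establish the isometry. Pick any $t\in(\lambda_1,\lambda_2)$ and take $f,h\in C^\infty_0((\lambda_1,\lambda_2))\otimes\sS(\bR^s)\otimes\bC^N$, which is the test-space underlying both $\cK^G_{V_1}$ and the image of the natural embedding into $\cK_{V_2}$. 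By Proposition \ref{Prop1forMoyal}\,(\ref{I_g_Prop1forMoyal}), $R_{V_1}^\pm f=R_{V_2}^\pm f$ on $G$ (so in particular on $\Sigma_t$), and hence $P_tR_{V_1}f=P_tR_{V_2}f$. Applying Proposition \ref{Prop1forMoyal}\,(\ref{I_e_Prop1forMoyal}) at this $t$, the sesquilinear forms agree:
\[
(f,h)^G_{V_1}=(P_tR_{V_1}f,P_tR_{V_1}h)_\cD=(P_tR_{V_2}f,P_tR_{V_2}h)_\cD=(f,h)_{V_2}\,,
\]
so $u^G_{V_1,V_2}$ is well-defined and isometric on equivalence classes, and extends by continuity to the Hilbert-space completions. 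Commutativity with $C$ is immediate from $CR_{V_j}^\pm=R_{V_j}^\pm C$.

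The non-trivial step is surjectivity, and this is where the main obstacle lies: I must produce, given any $h\in C^\infty_0(\bR)\otimes\sS(\bR^s)\otimes\bC^N$, some $f\in C^\infty_0((\lambda_1,\lambda_2))\otimes\sS(\bR^s)\otimes\bC^N$ with $[f]_{V_2}=[h]_{V_2}$, i.e.\ $R_{V_2}f=R_{V_2}h$. I would adapt the partition-of-unity construction from the proof of Proposition \ref{Prop1forMoyal}\,(\ref{I_f_Prop1forMoyal}). Fix $t_0\in(\lambda_1,\lambda_2)$ and pick $\epsilon>0$ so small that $[t_0-\epsilon,t_0+\epsilon]\subset(\lambda_1,\lambda_2)$. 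Let $\varphi=R_{V_2}h$, which by Proposition \ref{Prop1forMoyal}\,(\ref{Prop1forMoyal_b}) is in $C^\infty(\bR)\otimes\sS(\bR^s)\otimes\bC^N$ and solves $D_{V_2}\varphi=h$. Choose a smooth partition of unity $\chi_+,\chi_-$ depending only on $x^0$, with $\chi_-=1$ for $x^0<t_0-\epsilon$ and $\chi_+=1$ for $x^0>t_0+\epsilon$; this $x^0$-dependence only is crucial so that multiplication by $\chi_\pm$ preserves Schwartz behaviour in the spatial variables.

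Setting $f:=D_{V_2}(\chi_+\varphi)=-D_{V_2}(\chi_-\varphi)$ (which exploits that on $\operatorname{supp}\chi_\pm\cap\operatorname{supp} h^c$ the solution property is fine, and on $\{t_0-\epsilon\le x^0\le t_0+\epsilon\}\subset G$ one may use $V_1=V_2$), one checks that $f$ has time-support in $[t_0-\epsilon,t_0+\epsilon]\subset(\lambda_1,\lambda_2)$ and is Schwartz in $\ulx$, hence lies in $C^\infty_0((\lambda_1,\lambda_2))\otimes\sS(\bR^s)\otimes\bC^N$. Then $R_{V_2}f-\varphi$ is a $C^\infty$-solution of $D_{V_2}=0$ that vanishes for $x^0<t_0-\epsilon$ (since there $\chi_+\varphi=0$ and, by the support property of $R_{V_2}^\pm$, both $R_{V_2}^\pm f$ vanish as well, using the uniqueness argument from Proposition \ref{Prop1forMoyal}\,(\ref{I_f_Prop1forMoyal})). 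By well-posedness of the Cauchy problem (Proposition \ref{Prop1forMoyal}), this forces $R_{V_2}f=\varphi=R_{V_2}h$ throughout $\bR^n$, as required. This establishes that $u^G_{V_1,V_2}$ has dense, hence full, range.

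Part (b) is then a direct corollary. Since $u^G_{V_1,V_2}:\cK^G_{V_1}\to\cK_{V_2}$ is a unitary intertwining the charge conjugations $C$ on both sides, the general functoriality of the self-dual CAR construction (see \cite{Araki2,Araki,BratteliRobinson2}) yields a unique $*$-isomorphism $\alpha^G_{V_1,V_2}:\fF(\cK^G_{V_1},C)\to\fF(\cK_{V_2},C)$ extending $B^G_{V_1}([f]^G_{V_1})\mapsto B_{V_2}(u^G_{V_1,V_2}[f]^G_{V_1})=B_{V_2}([f]_{V_2})$. The main delicate point in the whole argument, as flagged, is to carry out the partition-of-unity construction within the Schwartz class in the spatial variables; this is precisely why $G$ is restricted to be a time-slice and why the cut-off functions $\chi_\pm$ are chosen to depend on $x^0$ only.
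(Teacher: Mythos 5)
Your overall strategy — transfer the proof of Lemma~\ref{Lemma2} to the Moyal setting, using the time-only partition of unity to stay within the Schwartz class in the spatial variables — matches what the paper actually does (the paper only says the lemma ``can be transferred almost unchanged'' from Lemma~\ref{Lemma2}, whose surjectivity part it cites to \cite{BGP}). The isometry argument and the deduction of part (b) from the unitarity of $u^G_{V_1,V_2}$ commuting with $C$ are fine. However, your fleshed-out surjectivity argument contains two concrete errors.

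First, you assert that $\varphi=R_{V_2}h$ ``solves $D_{V_2}\varphi=h$''. This is false: since $D_{V_2}R^\pm_{V_2}h=h$, one has
\begin{equation*}
D_{V_2}R_{V_2}h = D_{V_2}R^+_{V_2}h - D_{V_2}R^-_{V_2}h = h-h = 0\,,
\end{equation*}
so $\varphi$ is an exact solution of $D_{V_2}\varphi=0$, everywhere. The identity $D_{V_2}(\chi_+\varphi)=-D_{V_2}(\chi_-\varphi)$ that you use next is a trivial consequence of this, not something requiring the parenthetical workaround about $\supp\chi_\pm\cap\supp h^c$ or about $V_1=V_2$ on $G$; as written, that parenthetical does not make sense and indicates you are trying to patch a false premise.

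Second, the justification of why $R_{V_2}f-\varphi$ vanishes for $x^0<t_0-\epsilon$ is wrong. You claim that there ``both $R_{V_2}^\pm f$ vanish as well''. That is correct for $R^+_{V_2}f$ (whose support lies in $\cT^+(\supp f)=\{x^0\ge t_0-\epsilon\}$), but false for $R^-_{V_2}f$, whose support $\cT^-(\supp f)=\{x^0\le t_0+\epsilon\}$ contains the entire region $\{x^0<t_0-\epsilon\}$. If both terms vanished there one would get $R_{V_2}f=0$ for $x^0<t_0-\epsilon$, hence $R_{V_2}f-\varphi=-\varphi\neq 0$, the opposite of what you want. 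The correct statement is that $-R^-_{V_2}f=\chi_-\varphi$ on that region (and, in fact, everywhere): both $-R^-_{V_2}f$ and $\chi_-\varphi$ satisfy $D_{V_2}u=-f=D_{V_2}(\chi_-\varphi)$ and vanish in the far future, so they coincide by uniqueness of the Cauchy problem (Proposition~\ref{L_ExUniqInitValProbMoyal}(c)). Then for $x^0<t_0-\epsilon$, $R_{V_2}f=R^+_{V_2}f-R^-_{V_2}f=0+\chi_-\varphi=\varphi$, and the remaining uniqueness step gives $R_{V_2}f=\varphi$ on all of $\mathbb{R}^n$. With these two corrections your argument goes through.
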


Since our potential operator $V$ (see~(\ref{E_DefVMoyal}),(\ref{E_DefcMoyal})) 
was chosen to be a compactly supported multiplication
operator with respect to the time coordinate, we can maintain exactly the same 
geometrical setting as in Section~\ref{S_DiracField}
involving the same time-slice $\{(x^0,x^1,\ldots,x^s):\lambda_-<x^0<\lambda_+\}$ 
for some real numbers $\lambda_-<\lambda_+$ and the
same regions $G_+,G_-$ being hyperbolic neighbourhoods of the Cauchy hyperplanes 
$\Sigma_+,\Sigma_-$. As a result we again arrive at an
automorphism
\begin{eqnarray}
&&\beta_V:\fF(\cK_0,C)\rightarrow \fF(\cK_0,C),\nonumber\\
&&\beta_V=\alpha_{0,-}\circ \alpha_{V,-}^{-1}\circ \alpha_{V,+}\circ 
\alpha_{0,+}^{-1}.\label{E_DefBetaVforMoyal}
\end{eqnarray}

As in Sec.~\ref{S_DiracField} we can again define $T_{sc}^{(V)}$ as the scattering 
transformation on
$\mathcal{D}_0 \simeq L^2(\Sigma_0,\mathbb{C}^N)$ ($\Sigma_0 \simeq 
\mathbb{R}^s$), the space of Cauchy data for the
Dirac equation at coordinate-time $t = 0$, by setting
$$ T_{sc}^{(V)} = T_t^{-1} \circ T_{t,t'}^{(V)} \circ T_{t'} $$
for $t> \lambda_+$, $t' < \lambda_-$ (recall that the interval $[\lambda_-
,\lambda_+]$ is the time-support of
the potential term $V$).
As before in Sec.~\ref{S_DiracField}, $T_t$ denotes the ``free'' evolution of the Dirac equation 
without potential
term, coinciding with $
\left. T_{t,0}^{(V)}\right|_{V = 0}$. In consequence one obtains, exactly as
in eqn.~\eqref{E_DeftauScMor}, an induced automorphism $\tau_{sc}^{(V)}$ on the CAR-algebra 
$\fF(\cD_0,C)$, given by
$$ \tau^{(V)}_{sc}(B_{\cD_0}(v)) = B_{\cD_0}(T_{sc}^{(V)}v)\,, \quad v \in 
\cD_0\,,$$
where the $B_{\cD_0}(v)$ are the generators of $\fF(\cD_0,C)$. Again, there is a 
canonical identification
between the CAR algebras $\fF(\cK_0,C)$ and $\fF(\cD_0,C)$,
$$ \varrho(B_0([f]_0)) = B_{\cD_0}(Q_0([f]_0))\,.$$
\par
In the next section we will study the problem of unitary implementability of 
$\beta_V$ in the
Fock-vacuum-representation of $\fF(\cK_0,C)$, and the following Lemma, which
is the counterpart of Lemma~\ref{L_betaV_tauscV_intertw} for the case of potential $V$ defined as
in~(\ref{E_DefVMoyal}) and~(\ref{E_DefVMoyalAlt}),
guarantees that unitary implementability of $\tau_{sc}^{(V)}$ in the vacuum 
representation is just
the equivalent problem.

\begin{lemma}\label{L_betaV_tauscV_intertwMoyal}
The morphism $\beta_V$ of $\fF(\cK_0,C)$ defined in~(\ref{E_DefBetaVforMoyal}) 
and the scatte\-ring morphism $\tau^{(V)}_\text{sc}$
(defined like~(\ref{E_DeftauScMor})) describing the potential scattering of the 
quantized Dirac field at the level of the Cauchy-data
CAR-algebra $\fF(\cD_0,C)$ (with $\cD_0=L^2(\Sigma_0,d^sx)$) are intertwined by 
the CAR-algebra isomorphism
$\varrho:\fF(\cK_0,C)\rightarrow \fF(\cD_0,C)$ defined
in~(\ref{E_varrhoDef}), i.e. it holds that
\begin{equation} \label{E_rhoInterwRel}
\varrho\circ \beta_V=\tau^{(V)}_\text{sc}\circ \varrho.
\end{equation}
\end{lemma}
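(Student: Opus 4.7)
The plan is to follow exactly the strategy used to prove Lemma~\ref{L_betaV_tauscV_intertw}, now invoking the Moyal-adapted results of Proposition~\ref{Prop1forMoyal} and Lemma~\ref{Lemma2forMoyal}. As in the commutative case, one has $\beta_V(B_0([f]_0)) = B_0(U_V[f]_0)$ where $U_V = u_{0,-} \circ u_{V,-}^{-1} \circ u_{V,+} \circ u_{0,+}^{-1}$ and $u_{0,\pm}, u_{V,\pm}$ are the one-particle unitaries provided by Lemma~\ref{Lemma2forMoyal}. Because $\varrho$ and $\tau^{(V)}_{sc}$ are induced at the one-particle level by the unitaries $Q_0$ and $T^{(V)}_{sc}$ respectively, the identity \eqref{E_rhoInterwRel} reduces to the single-particle equation
$$ Q_0 \circ U_V = T^{(V)}_{sc} \circ Q_0 \,. $$

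First I would trace an element $[f]_0 \in \cK_0$ through the chain defining $U_V[f]_0$: starting from $[f]_0$, $u_{0,+}^{-1}$ produces $[f^{G_+}]^{G_+}_0$, where $f^{G_+} \in C_0^\infty((\lambda_+ + 1/2,\infty)) \otimes \sS(\bR^s) \otimes \bC^N$ is any choice with $R_0(f - f^{G_+}) = 0$; then $u_{V,+}$ yields $[f^{G_+}]_V$; then $u_{V,-}^{-1}$ yields $[f^{G_-}]^{G_-}_0$, where $f^{G_-} \in C_0^\infty((-\infty,\lambda_- - 1/2)) \otimes \sS(\bR^s) \otimes \bC^N$ satisfies $R_V(f^{G_+} - f^{G_-}) = 0$; finally $u_{0,-}$ yields $[f^{G_-}]_0$. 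The existence of $f^{G_\pm}$ with the required support follows as in the proof of Proposition~\ref{Prop1forMoyal}(f), applying a partition of unity depending only on $x^0$ to the (smooth, Schwartz in space) solution of the Dirac equation with prescribed Cauchy data supplied by Lemma~\ref{L_ExUniqInitValProbMoyal}.

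Second, I would trace $[f]_0$ through the analogous chain associated with $Q_0^{-1} \circ T^{(V)}_{sc} \circ Q_0$, mirroring diagram~\eqref{E_fSuccUnitar2}: one successively maps $[f]_0 \mapsto P_0 R_0 f \mapsto P_{t'} R_0 f = P_{t'} R_0 h^{G_+}$ (choosing $h^{G_+}$ with $R_0 h^{G_+} = R_0 f$), which equals $P_{t'} R_V h^{G_+}$ since $V=0$ on $G_+$ (Proposition~\ref{Prop1forMoyal}(\ref{I_g_Prop1forMoyal})), then $\mapsto P_t R_V h^{G_+} = P_t R_V h^{G_-}$ (choosing $h^{G_-}$ with $R_V h^{G_-} = R_V h^{G_+}$), which equals $P_t R_0 h^{G_-}$ again by Proposition~\ref{Prop1forMoyal}(\ref{I_g_Prop1forMoyal}), then $\mapsto P_0 R_0 h^{G_-} \mapsto [h^{G_-}]_0$. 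The time-evolution steps in this chain rely on the identity $P_t R_V \varphi = T^{(V)}_{t,t'} P_{t'} R_V \varphi$, which is an immediate consequence of Lemma~\ref{L_ExUniqInitValProbMoyal} together with the well-posedness statement in Proposition~\ref{Prop1forMoyal}(\ref{Prop1forMoyal_b}), since $R_V \varphi$ is the unique smooth solution of $D_V u = 0$ with the corresponding Cauchy data.

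Comparing the two chains, the defining conditions imposed on $h^{G_\pm}$ are identical to those imposed on $f^{G_\pm}$, so one may simply choose $h^{G_\pm} = f^{G_\pm}$, yielding $Q_0 U_V [f]_0 = T^{(V)}_{sc} Q_0 [f]_0$ and hence the lemma. The main obstacle, and the only place where the Moyal case genuinely differs from the commutative argument, is ensuring that the partition-of-unity construction producing $f^{G_\pm}$ stays within the enlarged test-function space $C_0^\infty(\bR) \otimes \sS(\bR^s) \otimes \bC^N$ rather than leaving it; this forces the partition to depend only on $x^0$, which is precisely the observation already made and exploited in the proof of Proposition~\ref{Prop1forMoyal}(f). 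Once this is granted, the remaining steps are verbatim translations of the commutative argument and require no further work.
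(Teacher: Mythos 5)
Your proposal is correct and takes precisely the approach the paper intends. The paper in fact states Lemma~\ref{L_betaV_tauscV_intertwMoyal} without writing out a proof, merely noting that it is the "counterpart" of Lemma~\ref{L_betaV_tauscV_intertw}; your write-up is a faithful unfolding of that remark, reducing \eqref{E_rhoInterwRel} to the one-particle identity $Q_0 \circ U_V = T^{(V)}_{sc}\circ Q_0$, tracing both chains exactly as in diagrams~\eqref{E_fSuccUnitar} and~\eqref{E_fSuccUnitar2}, and correctly isolating the one genuinely new technical point in the Moyal setting — namely that the construction of $f^{G_\pm}$ must remain inside $C_0^\infty(\bR)\otimes\sS(\bR^s)\otimes\bC^N$, which forces the partition of unity $\chi_\pm$ to depend only on $x^0$, as arranged in the proofs of Proposition~\ref{Prop1forMoyal}~(f) and Lemma~\ref{Lemma2forMoyal}.
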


\section{Moyal-deformed Minkowski spacetime: Scattering of the Dirac field in 
the vacuum representation
and implementability of the scattering transformation}\label{S_MoyalMink_ImplScatt}

In the present section we will prove unitary implementability of the scattering 
transformation
$\beta_V$ on $\fF(\cK_0,C)$ for the ``Moyal-Minkowski-potentials'' $V$ defined 
in~(\ref{E_DefVMoyal}) and~(\ref{E_DefVMoyalAlt}),
in the Fock-vacuum representation 
$(\cH^\text{vac},\pi^\text{vac},\Omega^\text{vac})$ of the vacuum
state $\omega^{\rm vac}$ on $\fF(\cK_0,C)$. Owing to Lemma~\ref{L_betaV_tauscV_intertwMoyal}, this is 
equivalent to the 
problem of unitary implementability of the scattering transformation 
$\tau_{sc}^{(V)}$ on 
$\fF(\cD_0,C)$ in the Fock-vacuum representation 
$(\cH^{p_+},\pi^{p_+},\Omega^{p_+})$ of
$\omega^{p_+}$ (where $\omega^{\rm vac} = \omega^{p_+} \circ \varrho$), the pure, 
quasifree
ground state on $\fF(\cD_0,C)$ with respect to the time-evolution induced by the 
Hamiltonian 
$H_0$ of~(\ref{E_Hzero}) on the domain $\sS(\bR^s,\bC^N)\subset L^2(\bR^s,\bC^N)$. 
Recall that
$p_+$ is the spectral projection of $H_0$ corresponding to the spectral interval 
$[m,\infty)$, and that
the conjugation $C$ intertwines $p_+$ and $1 - p_+$ and hence $p_+$ is a basis 
projection on
$(\cD_0,C)$ according to \cite{Araki}. 
A well-established criterion for unitary implementability of $\tau^{(V)}_{sc}$ 
in the
Fock-vacuum representation is that $[p_+,T_{sc}^{(V)}]$ is Hilbert-Schmidt
\cite{Araki,ShaleStinespring}. If and only if this is the case, then there is a unitary operator
$S_{\tau_{sc}^{(V)}}$ on $\mathcal{H}^{p_+}$ such that 
\begin{equation} \label{E_Stau}
S_{\tau^{(V)}_\text{sc}} \pi^{p_+}(B_{\cD_0}(v)) S_{\tau^{(V)}_\text{sc}}^{-1} =
 \pi^{p_+}\left( \tau_\text{sc}^{(V)}(B_{\cD_0}(v)) \right) = 
\pi^{p_+}(B_{\cD_0}(T_\text{sc}^{(V)}v))\,,
\quad v \in \cD_0\,,
\end{equation}
which is just what it means to say that $\tau_{sc}^{(V)}$ is unitarily 
implementable.
The condition that $[p_+,T_{sc}^{(v)}]$ is Hilbert-Schmidt is equivalent to the 
condition
that $[\varepsilon,T^{(V)}_{sc}]$ is Hilbert-Schmidt as an operator on
$L^2(\mathbb{R}^s,\mathbb{C}^N)$, where 
$\varepsilon = {\rm sign}(H_0) = H_0/|H_0|$ is the sign function of $H_0$ in the
sense of the functional calculus, since $p_+=(\eins+\varepsilon)/2$. In an interesting work, Langmann and 
Mickelsson
\cite{LangmannMickelsson} have shown that certain conditions on the potential 
term
$V(t)$ (cf.\ eqn.\ \eqref{E_Hvee}) are sufficient to conclude that 
$[\varepsilon,T_{sc}^{(V)}]$ is Hilbert-Schmidt. Their argument is interesting 
as it
involves a non-local regularization of the interacting dynamics which 
nevertheless leads to 
the same scattering transformation $T_{sc}^{(V)}$. We refer to
\cite{LangmannMickelsson} for details and present only the relevant conditions,
adapted to our notation.

Let the interaction potential $W(t) = \gamma^0 V(t)$ in the Hamiltonian 
$H_V(t)$ of eqn.\ \eqref{E_Hvee} have the following properties:
\begin{itemize}
\item[(I)] $W(t)$ is a bounded operator on $L^2(\mathbb{R}^s,\mathbb{C}^N)$ 
for each $t \in \mathbb{R}$, such that $t \mapsto W(t)$ is
$C^\infty$. 
\item[(II)]
There is a core for $H_0$, contained in the $C^\infty$-domain of $H_0$, which
is left invariant by all $W(t)$ and $(\partial_t)^kW(t)$ $(k \in \mathbb{N})$
and by all $C^\infty$ functions of $H_0$ which, together with all their 
derivatives,
are polynomially bounded.
\item[(III)] 
There is some $\nu \in \mathbb{N}_0$ so that $|H_0|^{-\nu}W(t)$ and
$|H_0|^{-\nu} (\partial_t)^k W(t)$, $k= 1,\ldots,\nu,$ are Hilbert-Schmidt 
operators for
all $t$.
\item[(IV)]
Defining $\delta_{|H_0|}(A) = [ |H_0|,A]$, it holds that 
$\delta^n_{|H_0|}(W(t))$ and $\delta^n_{|H_0|}((\partial_t)^kW(t))$, $k = 
1,\ldots,\nu$,
are bounded operators on $L^2(\mathbb{R}^s,\mathbb{C}^N)$ for all $n \in \mathbb{N}$
$(t \in \mathbb{R})$.
\item[(V)]
$|H_0|^{-\nu}\delta^n_{|H_0|}(W(t))$ and $|H_0|^{-
\nu}\delta^n_{|H_0|}((\partial_t)^kW(t))$, $k = 1,\ldots,\nu$, are Hilbert-
Schmidt operators for all $n \in \mathbb{N}$ $(t \in \mathbb{R})$.
\item[(VI)]
$W(t) = 0$ if $t < \lambda_-$ and if $t> \lambda_+$ for some 
real numbers $\lambda_- < \lambda_+$.
\end{itemize}
We cite the result relevant for our purposes.
\begin{theorem} {\bf (Langmann and Mickelsson \cite{LangmannMickelsson})}
If the interaction term $W(t) = \gamma^0V(t)$ in $H_V(t)$ (cf.\ \eqref{E_Hvee})
satisfies the conditions ${\rm (I)}\ldots {\rm (VI)}$, then 
$[\varepsilon,T_{sc}^{(V)}]$ is Hilbert-Schmidt, and hence 
$\tau^{(V)}_{sc}$ is implementable in the vacuum-representation
$(\mathcal{H}^{p_+},\pi^{p_+},\Omega^{p_+})$ of the Dirac field.
\end{theorem}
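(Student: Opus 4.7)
The plan is to use the Shale--Stinespring criterion recalled just above the statement to reduce the theorem to showing that $[\varepsilon, T_{sc}^{(V)}]$ is Hilbert--Schmidt on $L^2(\mathbb{R}^s,\mathbb{C}^N)$. I would work in the interaction picture, where, by the Dyson representation,
$$ T_{sc}^{(V)} = \mathrm{T}\exp\!\left(-i\int_{\lambda_-}^{\lambda_+}\tilde{W}(t)\,dt\right), \qquad \tilde{W}(t) = e^{itH_0}W(t)e^{-itH_0}.$$
Condition (VI) makes the interval of integration finite, and (I) together with unitarity of $e^{itH_0}$ guarantees operator-norm convergence of this series.

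The core of the argument is estimating the Hilbert--Schmidt norm of the $\varepsilon$-off-diagonal part of the series. Because $\varepsilon=\mathrm{sgn}(H_0)$ is not smooth in the spectral parameter, $[\varepsilon,W(t)]$ cannot be expanded directly through a Helffer--Sj\"ostrand formula, and this is where the non-local regularization of Langmann and Mickelsson enters the picture. The idea is to introduce a dressing operator built from a spectral cut-off in $|H_0|$, replacing $\tilde{W}(t)$ by a dressed interaction $\tilde{W}_{\mathrm{reg}}(t)$ with two key features: first, its off-diagonal block $p_{+}\tilde{W}_{\mathrm{reg}}(t)p_{-}+p_{-}\tilde{W}_{\mathrm{reg}}(t)p_{+}$ is Hilbert--Schmidt uniformly in $t$, obtained by absorbing a factor of $|H_0|^{-\nu}$ against the weighted Hilbert--Schmidt estimates of conditions (III) and (V); second, the dressed S-matrix agrees with the bare one modulo a Hilbert--Schmidt correction. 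The latter is proved by a Duhamel estimate in which the algebraic manipulations involving $|H_0|^{\nu}$ are legitimized on the smooth invariant core supplied by (II), and the error terms are controlled by the iterated-commutator bounds (IV) and their Hilbert--Schmidt refinements (V).

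With the regularization in place, I would compute $[\varepsilon,\tilde{T}_{sc,\mathrm{reg}}^{(V)}]$ by expanding in the Dyson series and applying the derivation identity
$$ [\varepsilon,A_1\cdots A_n]=\sum_{j=1}^{n}A_1\cdots[\varepsilon,A_j]\cdots A_n.$$
Since the Hilbert--Schmidt class is a two-sided ideal, each summand has Hilbert--Schmidt norm bounded by $M^{n-1}\,\|[\varepsilon,\tilde{W}_{\mathrm{reg}}(t_j)]\|_{\mathrm{HS}}$ with $M=\sup_{t}\|\tilde{W}_{\mathrm{reg}}(t)\|_{\mathrm{op}}<\infty$ by (I). Integration over the time-ordered simplex produces a convergent majorant of order $K(\lambda_{+}-\lambda_{-})\exp\!\bigl(M(\lambda_{+}-\lambda_{-})\bigr)$, so that $[\varepsilon,T_{sc}^{(V)}]$ is Hilbert--Schmidt, and the Shale--Stinespring theorem then delivers the unitary implementer $S_{\tau^{(V)}_{sc}}$ satisfying~(\ref{E_Stau}).

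The hard part is the regularization step, namely showing that the Hilbert--Schmidt difference between bare and dressed S-matrices can in fact be propagated through all orders of the perturbation series with uniform control in the number of vertices. Precisely for this propagation one needs the full package of assumptions (I)--(V): boundedness with $C^{\infty}$ time-dependence, a smooth invariant core for an $|H_0|$-calculus, Hilbert--Schmidt bounds on $|H_0|^{-\nu}W(t)$ and its time derivatives up to order $\nu$, and both boundedness and Hilbert--Schmidt versions of the iterated commutators $\delta^{n}_{|H_0|}$. Heuristically these conditions amount to demanding that $W(t)$ behave as a pseudodifferential operator of sufficiently negative order relative to $|H_0|$; once that calculus is in place everything else in the argument is essentially bookkeeping.
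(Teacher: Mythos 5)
The paper does not prove this theorem. The sentence immediately preceding it reads ``We cite the result relevant for our purposes,'' and the theorem header explicitly attributes it to Langmann and Mickelsson \cite{LangmannMickelsson}. There is therefore no in-paper proof for your argument to be compared against; the statement is used as a black box whose hypotheses (I)--(VI) are then verified for the specific Moyal potentials in the paper's Proposition~\ref{E_Smatrix}.

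As a reconstruction of the Langmann--Mickelsson argument itself, your sketch captures the right shape (Shale--Stinespring reduction, Dyson expansion, dressing by a spectral cut-off, then the derivation identity together with the two-sided-ideal property of Hilbert--Schmidt operators), but by your own admission you defer the actual content. The derivation identity
$[\varepsilon, A_1 \cdots A_n] = \sum_j A_1 \cdots [\varepsilon, A_j] \cdots A_n$
only helps once each $[\varepsilon,\tilde W_{\mathrm{reg}}(t_j)]$ is known to be Hilbert--Schmidt with a bound uniform in $t_j$, and conditions (I)--(VI) do \emph{not} give this for the bare $\tilde W(t)$: they only control $|H_0|^{-\nu}$-weighted versions of $W(t)$, its time derivatives, and its iterated $|H_0|$-commutators. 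The whole work of the Langmann--Mickelsson proof is the construction of a dressing transformation that converts commutators with $\varepsilon$ into commutators with $|H_0|$ (handled by (IV) and (V)) and absorbs the compensating $|H_0|^{-\nu}$ factors supplied by (III) and (V), while showing that the dressed and bare S-matrices differ only by a Hilbert--Schmidt factor; the domain issues in this $|H_0|$-calculus are exactly what (II) is for. Your outline names this step, flags it as ``the hard part,'' and then declares the rest ``bookkeeping,'' which means the argument you have actually written is a plan rather than a proof. Since the paper itself treats the theorem as a citation, this is not a defect in the sense of deviating from the paper, but you should not present the above as a proof of the cited theorem; the task the paper actually undertakes is the verification of (I)--(VI), carried out in Proposition~\ref{E_Smatrix}.
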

Consequently, what we will now set out to demonstrate is
\begin{proposition} \label{E_Smatrix}
Let $V(t)$ be any of the $V_{\rm (i)}(t)$ or $V_{\rm (ii)}(t)$ defined in
\eqref{E_Veeone} and \eqref{E_Veetwo}, with $a \in 
C_0^\infty(\mathbb{R},\mathbb{R})$
and $b \in \mathscr{S}(\mathbb{R}^s,\mathbb{R})$. Then $W(t) = \gamma^0V(t)$
fulfills the criteria ${\rm (I)}\ldots {\rm (VI)}$ above. Therefore,
$\tau_{sc}^{(V)}$ is unitarily implementable in the vacuum representation, so 
that
there is a unitary $S_{\tau_{sc}^{(V)}}$ on $\mathcal{H}^{p_+}$ such that
\eqref{E_Stau} holds.
\end{proposition}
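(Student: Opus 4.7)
The plan is to verify each of the six Langmann--Mickelsson conditions (I)--(VI) for $W(t)=\gamma^0V(t)$.

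Conditions (I) and (VI) are essentially immediate. We have $V(t)=a(t){\sf v}$ with ${\sf v}={\sf v}_{\rm (i)}$ or ${\sf v}_{\rm (ii)}$, bounded on $L^2(\bR^s,\bC^N)$ by the Moyal multiplication estimate~\eqref{E_LRMoyalBounded}. The smoothness of $t\mapsto W(t)$ in operator norm and its compact time support follow from $a\in C_0^\infty(\bR,\bR)$. For condition (II) the natural invariant core is $\sS(\bR^s,\bC^N)$: it is a core for $H_0$ by the explicit Fourier representation used in the proof of Proposition~\ref{L_ExUniqInitValProbMoyal}, is preserved by $W(t)$ and each $(\partial_t)^kW(t)=a^{(k)}(t)\gamma^0{\sf v}$ since the Moyal product of two Schwartz functions is again Schwartz (Section~\ref{S_MoyalMinkST}), and is preserved by $f(H_0)$ for any smooth polynomially bounded $f$ with polynomially bounded derivatives, because in the Fourier picture $f(H_0)$ acts as a matrix-valued multiplication operator of polynomial growth.

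The substantive work is in conditions (III)--(V). The structural starting point is the Clifford identity $H_0^2=-\Delta_s+m^2$, giving $|H_0|=\sqrt{-\Delta_s+m^2}$, a first-order elliptic pseudodifferential operator with smooth, strictly positive symbol $(|\ulk|^2+m^2)^{1/2}$. Because the restricted Moyal matrix~\eqref{E_Moyalmatrix} is zero in the first row/column and symplectic only in the $p$ non-commutative directions, a direct kernel computation shows that for $b\in\sS(\bR^s)$ the operator ${\sf L}_b$ acts fiberwise in the $q-1$ commutative spatial variables: its kernel equals $\delta(x_q-y_q)\,K^{(p)}_{b(y_q,\cdot)}(x_p,y_p)$, with $K^{(p)}_{b(y_q,\cdot)}$ the pure Moyal-plane kernel of left multiplication by $b(y_q,\cdot)\in\sS(\bR^p)$, which satisfies $\|K^{(p)}_{b(y_q,\cdot)}\|_{L^2(\bR^p\times\bR^p)}\le C\|b(y_q,\cdot)\|_{L^2(\bR^p)}$ (a standard Moyal-plane fact, cf.~\cite{GV,GGISV}). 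Convolving with the Bessel kernel $G_\nu$ of $(-\Delta_s+m^2)^{-\nu/2}$ and using Young's inequality in the non-commutative variables together with the decay of $G_\nu$ in the commutative variables (for $\nu$ sufficiently large, $G_\nu$ is $L^1$), one obtains
\[
\bigl\||H_0|^{-\nu}{\sf L}_b\bigr\|_{\rm HS}^2 \le C_\nu\,\|b\|_{\sS,N}^2
\]
for some Schwartz seminorm $\|\cdot\|_{\sS,N}$ and $\nu$ large. The same estimate holds for ${\sf R}_b$ (via the symmetry \eqref{E_LeftRightMoyalAndC} or an analogous direct computation) and for the composite ${\sf L}_b{\sf R}_b$ arising in case $V_{\rm (ii)}$. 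Multiplying by the bounded factor $a(t)\gamma^0$, and replacing $a$ by $a^{(k)}$, establishes (III).

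The main obstacle is condition (V), the analogous Hilbert--Schmidt estimate for the iterated commutator $\delta^n_{|H_0|}(W(t))$. The strategy is to prove inductively that $\delta^n_{|H_0|}({\sf L}_b)$ decomposes into a finite sum of terms of the form $|H_0|^{\sigma}{\sf L}_{\tilde b}|H_0|^{\sigma'}$ with $|\sigma|+|\sigma'|\le n$ and Schwartz $\tilde b$, plus lower-order remainders. This follows from the pseudodifferential calculus applied to $|H_0|$: since its symbol is smooth and bounded from below, and since the kernel of ${\sf L}_b$ is Schwartz in the non-commutative directions and smoothly Schwartz-dependent on the commutative ones, each commutator with $|H_0|$ preserves this structure up to finite-order corrections. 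Applying the bound of (III) to each resulting term with $\nu$ chosen sufficiently large relative to $n$ establishes (V). Condition (IV), mere boundedness of $\delta^n_{|H_0|}(W(t))$, then follows \emph{a fortiori}. Having verified (I)--(VI), the Langmann--Mickelsson theorem delivers the unitary implementer $S_{\tau^{(V)}_{sc}}$ of~\eqref{E_Stau} in the vacuum representation.
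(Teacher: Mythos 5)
Your high-level strategy is the same as the paper's: check the Langmann--Mickelsson conditions (I)--(VI) and invoke their theorem, reducing by the algebraic remarks to the single operator ${\sf L}_b$. Your treatment of (I), (II) and (VI) agrees with the paper's. But the substantive conditions (III)--(V) are where your argument breaks down, and these are the crux of the proposition.

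For (III), the asserted Hilbert--Schmidt bound $\|K^{(p)}_{b(y_q,\cdot)}\|_{L^2(\bR^p\times\bR^p)}\le C\|b(y_q,\cdot)\|_{L^2(\bR^p)}$ for the pure Moyal-plane left-multiplication kernel is false. That inequality would make ${\sf L}_{b(y_q,\cdot)}$ a Hilbert--Schmidt operator on $L^2(\bR^p)$; but the modulus of the Moyal kernel is, up to a constant, $|\hat{b}(M^{-1}(y_p-x_p))|$, which depends only on the difference $y_p-x_p$ and therefore has divergent $L^2(\bR^p\times\bR^p)$ integral. (In the standard Weyl identification $L^2(\bR^p)\cong L^2(\bR^{p/2})\otimes L^2(\bR^{p/2})$, the operator is of the form $b\otimes\eins$ and is not even compact.) Only the operator-norm bound~\eqref{E_LRMoyalBounded} is available. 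Your position-space convolution argument therefore fails: estimating moduli of kernels discards the phase oscillations, but it is precisely those phases that make $|H_0|^{-\nu}{\sf L}_b$ Hilbert--Schmidt.

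For (IV) and (V), the proposed inductive decomposition of $\delta^n_{|H_0|}({\sf L}_b)$ into terms $|H_0|^\sigma{\sf L}_{\tilde b}|H_0|^{\sigma'}$ with $|\sigma|+|\sigma'|\le n$ is vacuous: the binomial expansion
\[
\delta^n_{|H_0|}({\sf L}_b)=\sum_{j=0}^n\binom{n}{j}(-1)^{n-j}\,|H_0|^j\,{\sf L}_b\,|H_0|^{n-j}
\]
already has that form but yields no control, since every summand is unbounded once $n\ge1$. No negative powers of $|H_0|$ are gained, and the appeal to pseudodifferential calculus is not justified: ${\sf L}_b$ is not a classical pseudodifferential operator, because its Fourier kernel carries the oscillating factor $e^{i\ulu\cdot\underline{M}\ulk}$, which grows linearly in $\ulk$.

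The idea you are missing, and the one the paper actually uses, is to work entirely in momentum space, where the commutator kernels are explicit:
\[
F\,\delta^n_{|H_0|}({\sf L}_b)\,F^{-1}\hat{g}(\ulk)=\frac{1}{(2\pi)^{s/2}}\int_{\bR^s}\big(|\widehat{H}(\ulk)|-|\widehat{H}(\ulu)|\big)^n\,\hat{b}(\ulk-\ulu)\,e^{i\ulu\cdot\underline{M}\ulk}\,\hat{g}(\ulu)\,d^s\ulu\,,
\]
combined with the elementary Lipschitz-type estimate $\big||\widehat{H}(\ulk)|-|\widehat{H}(\ulu)|\big|^n\le\alpha|\ulk-\ulu|^{2n}+\beta$ (recall $|\widehat{H}(\ulk)|=(|\ulk|^2+m^2)^{1/2}$). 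This absorbs $(|\widehat{H}(\ulk)|-|\widehat{H}(\ulu)|)^n$ into a Schwartz function of $\ulk-\ulu$, namely a combination of $\widehat{(-\Delta)^nb}$ and $\hat{b}$, which gives the boundedness (IV). Inserting $|\widehat{H}(\ulk)|^{-\nu}$ then makes the kernel modulus $|\hat{b}(\ulk-\ulu)|\,(|\ulk|^2+m^2)^{-\nu/2}$ (or its $\Delta^n b$-variant) square-integrable in $(\ulk,\ulu)$ for $\nu$ large, giving (III) and (V). No position-space kernel manipulation is needed.
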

{\bf Proof}\ \ Observing that (cf.\ \eqref{E_Veeone},\eqref{E_Veetwo})
$$ W(t)f = \tilde{a}(t){\sf v}f \quad (f \in L^2(\mathbb{R}^s,\mathbb{R}^N))$$ 
with $\tilde{a}(t) = a(t)$ for $V=V_{( \rm i)}$ and $\tilde{a}(t) = a(t)^2$ for
$V = V_{({\rm ii})}$,
the time-dependence of $W(t)$ is trivial in the context of conditions
(I)$\ldots$(VI), and they need only be checked for the time-independent
operator ${\sf v}$, which is
\begin{eqnarray*}
{\sf v}f & = & {\sf v}_{(i)}f = \gamma^0({\sf L}_b f + {\sf R}_b f) \ \ \text{or} 
\\
{\sf v}f & = & {\sf v}_{(ii)}f = \gamma^0({\sf L}_b{\sf R}_b f)\,.
\end{eqnarray*}
We note also that the multiplication with $\gamma^0$ is irrelevant for
checking the conditions (I)$\ldots$(VI) since $\gamma^0$ commutes with
$|H_0|$. Thus, the conditions need only be checked for 
${\sf L}_b$, ${\sf R}_b$ and ${\sf L}_b{\sf R}_b$. A quick inspection shows that
the conditions are algebraic in the sense that, if they hold for ${\sf L}_b$ and
${\sf R}_b$, then they hold also for the operator product ${\sf L}_b{\sf R}_b$.
As will become clear from the Fourier-representations of ${\sf L}_b$ and ${\sf 
R}_b$
(see below), checking the conditions for ${\sf R}_b$ is completely analogous to 
the
case of ${\sf L}_b$, so it is sufficient to show that the conditions are 
fulfilled for the
operator ${\sf L}_b$.

Let, for $g \in L^2(\mathbb{R}^s,\mathbb{C})$, the Fourier-transform be defined 
by
\begin{equation} \label{E_FT}
(F g)(\underline{k}) = \hat{g}(\underline{k}) =
\frac{1}{(2\pi )^{s/2}} \int_{\mathbb{R}^s} g(\underline{y}){\rm e}^{-
i\underline{y} \cdot
\underline{k}}\,d^s\underline{y}\,\text{;}
\end{equation}
this definition is extended componentwise to elements in 
$L^2(\mathbb{R}^s,\mathbb{C}^N)$.
It is easy to see that
\begin{eqnarray} \label{E_FLbFin}
F {\sf L}_b F^{-1} \hat{g} (\underline{k}) & = &
 \frac{1}{(2\pi)^{s/2}} \int_{\mathbb{R}^s} \hat{b}(\underline{k} -\underline{u})
{\rm e}^{i\underline{u} \cdot
\underline{M}\underline{k}} \hat{g}(\underline{u})\,d^s\underline{u}\,,\\
\label{E_FRbFin} 
F {\sf R}_b F^{-1} \hat{g} (\underline{k}) & = &
 \frac{1}{(2\pi)^{s/2}} \int_{\mathbb{R}^s} \hat{b}(\underline{k} -\underline{u})
{\rm e}^{-i\underline{u} \cdot
\underline{M}\underline{k}} \hat{g}(\underline{u})\,d^s\underline{u}\,,
\end{eqnarray}
where $\underline{M}$ is the block-entry $M_{(q +p -1) \times (q + p - 1)}$ in 
the
matrix \eqref{E_Moyalmatrix}. Moreover, one finds
\begin{eqnarray} \label{E_FHFin}
F H_0 F^{-1} \hat{g}(\underline{k}) & = &\widehat{H}(\underline{k}) 
\hat{g}(\underline{k})\,,\\ \nonumber
F |H_0 |^{\kappa} F^{-1} \hat{g}(\underline{k}) & = 
&|\widehat{H}(\underline{k})|^{\kappa} \hat{g}(\underline{k})\, \quad (\kappa 
\in \mathbb{R}),
\end{eqnarray}
with
\begin{equation} \label{E_Hhat}
 \widehat{H}(\underline{k}) = -i\gamma^0 \gamma^j\underline{k}_j + \gamma^0 
m\,, \quad
| \widehat{H}(\underline{k})| = ( |\underline{k}|^2 + m^2)^{1/2} \quad 
(\underline{k} \in
\mathbb{R}^s)\,.
\end{equation}
This implies 
\begin{equation} \label{E_kerofdn}
F \delta_{|H_0|}^n({\sf L}_b)F^{-1} \hat{g}(\underline{k})
= 
\frac{1}{(2\pi)^{s/2}} \int_{\mathbb{R}^s} ( |\widehat{H}_0(\underline{k})| -
|\widehat{H}_0(\underline{u})|)^n \hat{b}(\underline{k} - \underline{u})
 {\rm e}^{i \underline{u} \cdot \underline{M}\underline{k}} 
\hat{g}(\underline{u})\,
 d^s\underline{u}
\end{equation} 
for all $n \in \mathbb{N}$ and all $\hat{g} \in 
\mathscr{S}(\mathbb{R}^s,\mathbb{C}^N)$,
and similarly 
\begin{eqnarray} 
\lefteqn{F |H_0|^{-\nu}\delta_{|H_0|}^n({\sf L}_b)F^{-1} \hat{g}(\underline{k})}\nonumber\\\label{E_kerofdnnu}
&=& 
\frac{1}{(2\pi)^{s/2}} \int_{\mathbb{R}^s} 
\frac{( |\widehat{H}_0(\underline{k})| -
|\widehat{H}_0(\underline{u})|)^n}{|\widehat{H}_0(\underline{k})|^\nu} 
\hat{b}(\underline{k} - \underline{u})
 {\rm e}^{i \underline{u} \cdot \underline{M}\underline{k}} 
\hat{g}(\underline{u})\,
 d^s\underline{u}\,.
\end{eqnarray} 
The discussion in Sec.\ 4 shows that ${\sf L}_b$ is bounded. Moreover, we see 
from 
the Fourier-representation of $H_0$ that 
$\mathscr{S}(\mathbb{R}^s,\mathbb{C}^N)$ is a core
with the properties demanded in (II). In view of what we observed previously,
(IV) is proved once we have shown that
$\delta^n_{|H_0|}({\sf L}_b)$ is a bounded operator for all $n \in \mathbb{N}$. 
We use that,
given $n$,
there are constants $\alpha,\beta >0$ such that 
\begin{equation} \label{E_bound1}
|\,|\widehat{H}_0(\underline{k})| - |\widehat{H}_0(\underline{u})|\,|^n
\le \alpha|\underline{k} - \underline{u}|^{2n} + \beta \quad 
(\underline{k},\underline{u} \in
\mathbb{R}^s)\,.
\end{equation}
Now the integral kernel in \eqref{E_kerofdn} is actually a matrix, and owing to 
\eqref{E_bound1},
each of its entries has a modulus which can be bounded by
\begin{equation}
\alpha |\widehat{(-\Delta^n b)}(\underline{k} - \underline{u})| + \beta 
|b(\underline{k} - \underline{u})|\,,
\end{equation}
where $\Delta$ denotes the Laplace operator.
This shows that $F \delta_{|H_0|}^n({\sf L}_b)F^{-1}$ has an operator norm which 
can be
dominated by a constant times
\begin{equation}
\sup_{\underline{u}}\,\left( \int_{\mathbb{R}^s} [\alpha |\widehat{(-\Delta^n 
b)}(\underline{k} - \underline{u})| + \beta |b(\underline{k} - 
\underline{u})|]^2
 \,d^s\underline{k}\right)^{1/2}
\end{equation}
which is finite since $b$ is of Schwartz type. It remains to check conditions 
(III) and (V). To this end, we observe that the integral
kernel in \eqref{E_kerofdnnu} is a matrix where each entry has a modulus which,
for given $n$ and $\nu$, can be
estimated by a constant times
\begin{equation}
\frac{1}{(|\underline{k}|^2 + m^2)^{\nu/2}}[\alpha |\widehat{-\Delta^n 
b}(\underline{k} - \underline{u})| + \beta |\widehat{b}(\underline{k} - 
\underline{u})|]\,.
\end{equation}
One can obviously choose $\nu$ large enough so that this expression is, for each 
$n$, 
square integrable over $(\underline{k},\underline{u}) \in \mathbb{R}^s \times 
\mathbb{R}^s$,
using again that $b$ is a Schwartz function.
This shows that a number $\nu$ can be chosen large enough so that
 $F |H_0|^{-\nu}\delta_{|H_0|}^n({\sf L}_b)F^{-1}$
is Hilbert-Schmidt for all $n$ (including $n=0$, ie.\ $F |H_0|^{-\nu}{\sf 
L}_b F^{-1}$).

Finally we remark that condition (VI) is clearly fulfilled since it was assumed 
that
$a \in C_0^\infty(\mathbb{R},\mathbb{R})$.
${}$ \hfill $\Box$
\\[10pt] 
We will also show in this chapter that the generator of the S-matrix $S_{\tau_{sc}^{(V)}}$ with
respect to variations of $V$ exists as an essentially selfadjoint operator in the sense of
derivations. Using the fact that $S_V^M$ and $S_{\tau_{sc}^{(V)}}$ are intertwined by a unitary
establishing the equivalence between $\pi^{\rm vac}$ and
$\pi^{p_+} \circ \varrho$, this will allow the conclusion that also the generator
of the S-matrix $S_V^M$ with respect to variations of $V$ exists as an essentially selfadjoint operator on
a suitable domain. 

In preparing the proof of the assertion we aim to establish, we need an auxiliary result.
\begin{proposition} \label{Prop_dT}
Let $V$ be any of the operators $V_{(0)}$, $V_{(\rm i)}$ or $V_{(\rm ii)}$, where
$$ V_{(0)}f (x)  =  c(x) f(x) \quad (f \in \mathscr{S}(\mathbb{R}^n,\mathbb{C}^N),\ x \in \mathbb{R}^n)\,,$$
and $V_{(\rm i)}$ and $V_{(\rm ii)}$ are defined as \eqref{E_DefVMoyal} and
\eqref{E_DefVMoyalAlt}, with $c(x) = a(x^0)b(\underline{x})$ for 
any $a \in C_0^\infty(\mathbb{R},\mathbb{R})$ and $b \in \mathscr{S}(\mathbb{R}^s,\mathbb{R})$, 
$x = (x^0,\underline{x}) \in \mathbb{R}^{1 + s} = \mathbb{R}^n$.
\\[4pt]
Then 
\begin{itemize}
\item[{\rm (a)}]
Defining
\begin{equation}
dT^{(V)}_{sc} v  =  -i\left. \frac{d}{d\lambda}\right|_{\lambda = 0} T_{sc}^{(\lambda V)}v \quad (v \in \mathscr{S}(\mathbb{R}^s,\mathbb{C}^N))\,,
\end{equation}
it holds that 
\begin{equation}
dT^{(V)}_{sc} v(\underline{x}) =  -\int_{-\infty}^{\infty} \tilde{a}(t){\rm e}^{iH_0t} {\sf v} {\rm e}^{-iH_0t}v (\underline{x})\,dt
\quad (\underline{x} \in \mathbb{R}^s)\,,
\end{equation}
where ${\sf v}$ is either of the operators ${\sf v}_{(0)}$, ${\sf v}_{(\rm i)}$ or
${\sf v}_{(\rm ii)}$, with ${\sf v}_{(\rm i)}$ and ${\sf v}_{(\rm ii)}$ defined in \eqref{E_lcvi} and \eqref{E_lcvii} and 
$$ {\sf v}_{(0)} v(\underline{x}) = \gamma^0 b(\underline{x})v(\underline{x}) \,,$$
and with $\tilde{a}(t) = a(t)$ in the cases $V = V_{(0)},V_{(\rm i)}$, while
$\tilde{a}(t) = a(t)^2$ in case $V = V_{(\rm ii)}$.
The operator $dT^{(V)}_{sc}$ is bounded and selfadjoint on $L^2(\mathbb{R}^s,\mathbb{C}^N)$.
\item[{\rm (b)}]
The commutator
$$ [dT_{sc}^{(V)},p_+] $$
is a Hilbert-Schmidt operator on $L^2(\mathbb{R}^s,\mathbb{C}^N)$. Here, $p_+$ denotes again the spectral projection of
$H_0$ corresponding to the spectral interval $[m,\infty)$.
\end{itemize}
\end{proposition}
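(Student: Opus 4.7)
The plan is to obtain (a) directly from the Dyson-series representation $T_{sc}^{(V)} = \tilde T^{(V)}_{t,t'}$ established in the proof of Proposition \ref{L_ExUniqInitValProbMoyal}, and to obtain (b) by passing to momentum space and exploiting both the rapid decay of $\hat{\tilde a}$ and the mass gap $E(\underline{k}) + E(\underline{u}) \ge 2m > 0$, where $E(\underline{k}) = \sqrt{|\underline{k}|^2 + m^2}$.

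For (a), I will use that for $t > \lambda_+$ and $t' < \lambda_-$ one has $T_{sc}^{(V)} = \sum_{n \ge 0} \tilde T^{(V)}_n(t,t')$, with $\tilde T^{(V)}_n$ homogeneous of degree $n$ in $V$. The operator-norm bound underlying the proof of that proposition (specialized to $\alpha = \beta = 0$ and using only the boundedness of ${\sf v}$) gives $\|\tilde T^{(\lambda V)}_n(t,t')\|_{\rm op} \le (C|\lambda|)^n/n!$, which permits termwise differentiation at $\lambda = 0$ and identifies $dT^{(V)}_{sc} = -i\tilde T^{(V)}_1(t,t')$. Substituting $\gamma^0 V(r) = \tilde a(r){\sf v}$ and extending the integration to $\mathbb{R}$ (since $\tilde a$ vanishes outside $[\lambda_-,\lambda_+]$) produces the stated formula. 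Boundedness is then immediate from the boundedness of ${\sf v}$, unitarity of $e^{\pm iH_0 t}$, and integrability of $\tilde a$; selfadjointness follows because $b$ is real, so Lemma \ref{Lemma_HermPropMoyal} yields ${\sf L}_b^* = {\sf L}_b$ and ${\sf R}_b^* = {\sf R}_b$, both commuting with $\gamma^0$ (which acts only on spinor indices), so that ${\sf v}$ is selfadjoint, and averaging against real-valued $\tilde a$ preserves selfadjointness.

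For (b), since $p_+$ commutes with $e^{\pm iH_0 t}$ we have
$$ [dT^{(V)}_{sc}, p_+] = -\int \tilde a(t)\, e^{iH_0 t}\, [{\sf v}, p_+] \, e^{-iH_0 t}\, dt\,. $$
A direct attempt to show $[{\sf v}, p_+]$ itself is Hilbert--Schmidt fails: writing ${\sf v} = \gamma^0 A$ with $A = {\sf L}_b + {\sf R}_b$ or ${\sf L}_b {\sf R}_b$ (commuting with $\gamma^0$), the decomposition $[{\sf v}, p_+] = \gamma^0 [A, p_+] + [\gamma^0, p_+]A$ contains a term whose momentum kernel has a matrix factor $[\gamma^0, \hat p_+(\underline{k})]$ which does not vanish at infinity in $\underline{k}$, so $[{\sf v}, p_+]$ is bounded but not Hilbert--Schmidt. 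The $t$-averaging is therefore essential. Passing to momentum space and inserting $e^{iH_0 t} = e^{itE(\underline{k})}\hat p_+(\underline{k}) + e^{-itE(\underline{k})}\hat p_-(\underline{k})$ together with the analogous expansion of $e^{-iH_0 t}$, the kernel of the integrand splits into four contributions indexed by $\sigma, \sigma' \in \{+,-\}$, each of the form $e^{it(\sigma E(\underline{k}) - \sigma' E(\underline{u}))}\hat p_\sigma(\underline{k}) \hat M(\underline{k},\underline{u}) \hat p_{\sigma'}(\underline{u})$, where $\hat M$ is the momentum kernel of $[{\sf v}, p_+]$.

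The crucial algebraic observation is that the two \emph{diagonal} pieces ($\sigma = \sigma'$) vanish identically, since $\hat p_\sigma(\underline{k}) \bigl(\hat A \hat p_+(\underline{u}) - \hat p_+(\underline{k}) \hat A\bigr) \hat p_\sigma(\underline{u}) = 0$ for any kernel $\hat A$ by $\hat p_+^2 = \hat p_+$ (for $\sigma = +$) and $\hat p_\pm \hat p_\mp = 0$ (for $\sigma = -$). Only the two off-diagonal pieces remain, carrying oscillations $e^{\pm it(E(\underline{k}) + E(\underline{u}))}$. Integration against $\tilde a$ replaces these by Schwartz-class Fourier values $\hat{\tilde a}(\mp(E(\underline{k}) + E(\underline{u})))$, which, owing to $\tilde a \in C^\infty_0(\mathbb{R})$ (or $a^2 \in C^\infty_0$) together with $E(\underline{k}) + E(\underline{u}) \ge 2m$ and its unboundedness at infinity, decay faster than any inverse polynomial in $|\underline{k}| + |\underline{u}|$. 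Combined with the uniform boundedness of the matrix factors $\hat p_\sigma(\underline{k}) \gamma^0 \hat p_{-\sigma}(\underline{u})$ and the Schwartz dominance $|\hat b(\underline{k} - \underline{u})|$ of the scalar prefactor in the kernel of ${\sf v}$ (with an analogous convolution-Schwartz estimate for ${\sf v}_{(\mathrm{ii})}$), the momentum kernel of $[dT^{(V)}_{sc}, p_+]$ is pointwise dominated by $C_N |\hat b(\underline{k} - \underline{u})|(1 + E(\underline{k}) + E(\underline{u}))^{-N}$, which is square-integrable on $\mathbb{R}^s \times \mathbb{R}^s$ for $N$ sufficiently large, yielding the Hilbert--Schmidt property. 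The hard part will be to recognize that the naive commutator bound fails, and then to combine the diagonal-vanishing algebra with the rapid Fourier decay furnished by the mass gap.
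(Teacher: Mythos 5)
Your proposal is correct and follows essentially the same route as the paper: for (a) the first-order term of the interaction-picture Dyson series, and for (b) passing to momentum space, exploiting that only the $p_+\,\cdot\,p_-$ (and $p_-\,\cdot\,p_+$) blocks survive so that the oscillation is $e^{\pm it(E(\underline{k})+E(\underline{u}))}$, and then using the rapid Fourier decay of $\tilde a$ evaluated at the energy sum $\ge 2m$. Your explicit observation that the diagonal blocks of $e^{iH_0t}[{\sf v},p_+]e^{-iH_0t}$ vanish is just the unpacked form of the paper's reduction to showing $p_+\,dT^{(V)}_{sc}\,p_-$ is Hilbert--Schmidt, and your additional appeal to the Schwartz decay of $\hat b$ is harmless but not strictly needed, since the Schwartz decay of $F\tilde a$ in $E(\underline{k})+E(\underline{u})$ already dominates both variables.
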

Before we give the proof of that Proposition (see towards the end of this Section), we explain how this result allows it to conclude the statements
made in Sec.\ 5, which re-appear below in the eqns.\
\eqref{E_Gamling}, \eqref{E_88a} and \eqref{E_89a}.

Recall that $\cH^{p_+} = \mathcal{F}_+(p_+L^2(\mathbb{R}^s,\mathbb{C}^N))$ is the Fermionic Fock-space with one-particle space
$p_+L^2(\mathbb{R}^s,\mathbb{C}^N)$. For $v \in L^2(\mathbb{R}^s,\mathbb{C}^N)$, define the field operators
$$ \underline{\psi}(v) = A(p_+ C v) + A^+(p_+v) \, , \quad v \in \mathcal{D}_0 \equiv L^2(\mathbb{R}^s,\mathbb{C}^N) \,,$$
where $A$ and $A^+$ denote the Fermionic annihilation and creation operators. In other words,
\begin{equation} \label{E_underlinepsi}
\underline{\psi}(v) = \pi^{p_+}(B_{\mathcal{D}_0}(v)) \,.
\end{equation}
By $\underline{\mathcal{F}}$ we denote the $*$-algebra generated by all $\underline{\psi}(v)$ and the unit operator, and
we set $\underline{\mathcal{W}} = \underline{\mathcal{F}}\Omega^{p_+}$. Now consider an orthonormal basis $\{\chi^+_j\}_{j \in \mathbb{N}}$
of $p_{+} L^2(\mathbb{R}^s,\mathbb{C}^N)$;
then $\chi_j^- = C \chi_j^+$ is an orthonormal basis of $p_{-} L^2(\mathbb{R}^s,\mathbb{C}^N)$. 
If $[dT_{sc}^{(V)},p_+]$ is Hilbert-Schmidt, we can form the operator
$$ :{\sf G}(dT_{sc}^{(V)}): = \lim_{k \to \infty} {\sf G}_k(dT_{sc}^{(V)}) - (\Omega^{p_+},{\sf G}_k(dT_{sc}^{(V)})\Omega^{p_+}) $$
upon defining
$$ {\sf G}_k(dT_{sc}^{(V)}) = \sum_{j = 1}^k \left( \underline{\psi}(dT_{sc}^{(V)} \chi^+_j)^*\underline{\psi}(\chi^+_j)
 + \underline{\psi}(dT^{(V)}_{sc}\chi^-_j)^*\underline{\psi}(\chi^-_j) \right)\,.$$
According to Sec.\ 10 in \cite{Thaller} (cf.\ also \cite{CareyRuij}),
$:{\sf G}(dT_{sc}^{(V)}):$ defines an essentially selfadjoint operator on $\underline{\mathcal{W}}$ (to see the hermiticity,
use that $C dT^{(V)}_{sc} = - dT^{(V)}_{sc} C$), and it holds that
\begin{equation}
[:{\sf G}(dT_{sc}^{(V)}):,\underline{\psi}(v)] = \underline{\psi}(dT_{sc}^{(V)}v)
\end{equation}
for all $v \in L^2(\mathbb{R}^s,\mathbb{C}^N)$. 
Moreover, $:{\sf G}(dT_{sc}^{(V)}):$ is actually independent of the choice of $\{\chi^\pm_j\}_{j \in
\mathbb{N}}$.
(Note that in the notation of \cite{Thaller} and 
\cite{CareyRuij}, $:{\sf G}(dT_{sc}^{(V)}):$ would be written $: dT_{sc}^{(V)} \underline{\psi}^*\underline{\psi} :$.)
On the other hand we have, by eqn.\ \eqref{E_Stau} and owing to \eqref{E_underlinepsi}, the relation
\begin{equation} \label{E_Gamling}
 \left. \frac{d}{d\lambda} \right|_{\lambda = 0}
 S_{\tau_{sc}^{(\lambda V)}} \underline{\psi}(v) S_{\tau_{sc}^{(\lambda V)}}^{-1} =
  \underline{\psi} \left( \left. \frac{d}{d\lambda} \right|_{\lambda = 0} T_{sc}^{(\lambda V)}v\right)
 = \underline{\psi}(idT_{sc}^{(V)}v) 
\end{equation}
for all $v \in L^2(\mathbb{R}^s,\mathbb{C}^N)$,
resulting in
\begin{equation} \label{E_88a}
 \left. \frac{d}{d\lambda} \right|_{\lambda = 0}
 S_{\tau_{sc}^{(\lambda V)}} \underline{\psi}(v) S_{\tau_{sc}^{(\lambda V)}}^{-1}
 = [i:{\sf G}(dT_{sc}^{(V)}):,\underline{\psi}(v)] 
 \end{equation}
for all $v \in L^2(\mathbb{R}^s,\mathbb{C}^N)$.

In view of $\omega^{\rm vac} = \omega^{p_+} \circ \varrho$ with the morphism $\varrho$ in \eqref{E_varrhoDef} and
\eqref{E_rhoInterwRel}, there is a canonical unitary operator $\Larry : \mathcal{H}^{\rm vac} \to \mathcal{H}^{p_+}$ so that
\begin{equation} \label{Larrys}
 \Larry \pi^{\rm vac}(A) \Larry^{-1} = \pi^{p_+} \circ \varrho(A) \quad (A \in \mathfrak{F}(\mathcal{K}_0,C))\ \quad \text{and} \ \
 \Larry \Omega^{\rm vac} = \Omega^{p_+}\,.
\end{equation}
It is easy to see that $\Larry \mathcal{W} = \underline{\mathcal{W}}$ where $\mathcal{W}$ has been introduced as domain
for $:\bm\psi^+\bm\psi:(c)$ in Sec.\ 5. Furthermore, defining
$$ S_V^M = \Larry^{-1} S_{\tau_{sc}^{(V)}} \Larry\,,$$
and using \eqref{E_rhoInterwRel}, one can see that \eqref{E_Stau}, which was proven in
Prop.\ \ref{E_Smatrix}, is equivalent to
$$ S_V^M \pi^{\rm vac}(A) (S_V^M)^{-1} = \pi^{\rm vac}(\beta_V(A)) \quad (A \in \mathfrak{F}(\mathcal{K}_0,C))\,.$$
Setting
 $$ \Phi(c) = \Larry^{-1} :{\sf G}(dT^{(V)}_{sc}): \Larry \,,$$
it holds that $\Phi(c)$ is essentially selfadjoint on $\mathcal{W}$, and by eqn.\ \eqref{E_Gamling},
there holds
\begin{equation} \label{E_89a}
 \left. \frac{d}{d \lambda} \right|_{\lambda = 0} S_{\lambda V}^M \bm\psi(f) S_{\lambda V}^M{}^{-1} = [i\Phi(c),\bm\psi(f)]
= \bm\psi(VR_0 f) 
\end{equation}
for all $f \in C_0^\infty(\mathbb{R}) \otimes \mathscr{S}(\mathbb{R}^s) \otimes \mathbb{C}^N$. Thus we have established the
statements announced in Sec.\ 5.

A further remark is in order here. We do not directly establish the relation
$-i\left. \frac{d}{d \lambda} \right|_{\lambda = 0} S_{\lambda V}^M = \Phi(c)$. Notice that
the unitary $S_{\lambda V}^M$ implementing the
scattering transformation is not uniquely determined, but only determined up to a phase,      i.e.\ if $\tilde{S}_{\lambda V}^M$ is another choice of unitary implementer of the 
scattering matrix, then $\tilde{S}_{\lambda V}^M (S_{\lambda V}^M)^{-1} = {\rm e}^{ir(\lambda)}$
with some real-valued function $r(\lambda)$. 
However, if $\lambda \mapsto S_{\lambda V}^M$
is indeed differentiable at $\lambda = 0$ (upon
a suitable choice of $\lambda \mapsto r(\lambda)$), then it follows that its derivative with respect to $\lambda$ at $\lambda = 0$ in fact equals the above defined $\Phi(c)$ up to an
additive multiple of the unit operator. This is
a consequence of \eqref{E_89a} together with the
fact that the $*$-algebra generated by $\eins$ and the $\bm\psi(f)$ acts irreducibly in the vacuum representation. The additive constant may be compensated for by a re-definition of the
phase function $r(\lambda)$.     
${}$\\[10pt]
{\bf Proof of Proposition \ref{Prop_dT}}\\
In order to show that $[dT_{sc}^{(V)},p_+]$ is Hilbert-Schmidt, it is sufficient to
prove that $p_+ dT_{sc}^{(V)} p_-$ is Hilbert-Schmidt. It holds that
$$ p_+ dT_{sc}^{(V)} p_- = \int_{-\infty}^{\infty} \tilde{a}(t) {\rm e}^{iH_0t} p_+
 {\sf v} p_- {\rm e}^{-iH_0t}\,dt\,,$$
 implying
\begin{eqnarray*}
 & & \hspace*{-0.5cm}
F p_+ dT_{sc}^{(V)} p_- F^{-1}\\ &=& \int_{-\infty}^{\infty} \tilde{a}(t) F {\rm e}^{-iH_0t} p_+
F^{-1} F {\sf v}  F^{-1} F p_- {\rm e}^{iH_0t}F^{-1}\,dt\,,
\end{eqnarray*}
where $F$ is the Fourier-transform as in \eqref{E_FT}. In view of the Fourier-expressions for
$H_0$ (cf.\ \eqref{E_FHFin},\eqref{E_Hhat}), one has
$$
F {\rm e}^{iH_0t}p_+ F^{-1} \hat{g}(\underline{k}) = 
{\rm e}^{i\hat{H}_0(\underline{k})t} \hat{p}_+(\underline{k}) \hat{g}(\underline{k}) $$
where $\hat{p}_+(\underline{k})$ is an $N \times N$-matrix valued projection.
By the properties of $H_0$ and the resulting $\hat{H}_0(\underline{k})$,
it follows that there is a smooth family of unitary matrices
${\rm U}(\underline{k})$ diagonalizing $\hat{H}_0(\underline{k})$ and with the property
that
$$ {\rm U}(\underline{k}) \hat{p}_+(\underline{k}) {\rm U}(\underline{k})^{-1}
 = \left( \begin{array}{cc}
           \eins & 0 \\
           0 & 0 \end{array} \right) \equiv P_+ \,, \quad
  {\rm U}(\underline{k}) \hat{p}_-(\underline{k}) {\rm U}(\underline{k})^{-1}
 = \left( \begin{array}{cc}
           0 & 0 \\
           0 & \eins \end{array} \right) \equiv P_- \,, $$
where $\eins$ denotes the $N/2 \times N/2$-unit matrix (recall that $N$ is even). Since
${\rm U}(\underline{k}) \hat{H}_0(\underline{k}) {\rm U}(\underline{k})^{-1}$ is
diagonal and $\hat{H}_0(\underline{k})^*\hat{H}_0(\underline{k}) = |\hat{H}_0(\underline{k})|^2
\eins_{N \times N}$ is a multiple of the $N \times N$-unit matrix, the eigenvalues of
${\rm U}(\underline{k})\hat{H}_0(\underline{k}){\rm U}(\underline{k})^{-1}$ are $\pm |\hat{H}_0(\underline{k})|$, and one obtains
\begin{eqnarray*}
{\rm U}(\underline{k}) {\rm e}^{i\hat{H}_0(\underline{k})t}
 \hat{p}_+(\underline{k}) {\rm U}(\underline{k})^{-1} & = & {\rm e}^{-i|\hat{H}_0(\underline{k})|t} P_+\,, \\
{\rm U}(\underline{k}) {\rm e}^{-i\hat{H}_0(\underline{k})t}
 \hat{p}_-(\underline{k}) {\rm U}(\underline{k})^{-1} & = & {\rm e}^{-i|\hat{H}_0(\underline{k})|t} P_-\,.  
\end{eqnarray*}   
Using the Fourier-representations of ${\sf R}_b$ and ${\sf L}_b$ given in
\eqref{E_FLbFin} and \eqref{E_FRbFin}, it furthermore follows
that 
$$ F {\sf v} F^{-1}\hat{g}(\underline{k}) = \int \hat{{\sf v}}(\underline{k},\underline{\ell})
 \hat{g}(\underline{\ell})\,d^s\underline{\ell} \quad (g \in L^2(\mathbb{R}^s,\mathbb{C}^N)) $$
with a smooth, bounded, $N \times N$-matrix valued function $\hat{\sf v}(\underline{k},\underline{\ell})$. Taking together all these observations, we find 
\begin{eqnarray*}
 & & \hspace*{-0.5cm}
F p_+ dT_{sc}^{(V)} p_- F^{-1}\hat{g}(\underline{k})\\ &=& \int_{-\infty}^{\infty}
\int_{\mathbb{R}^s} \tilde{a}(t) {\rm e}^{-it(|\hat{H}_0(\underline{k})| +
|\hat{H}_0(\underline{\ell})|)} \hat{p}_+(\underline{k}) \hat{\sf v}(\underline{k},
\underline{\ell})\hat{p}_-(\underline{\ell})\hat{g}(\underline{\ell}) \,d^s\underline{\ell}
\,dt \\
&  = & \sqrt{2\pi}\int_{\mathbb{R}^s} (F\tilde{a})(|\hat{H}_0(\underline{k})| +
|\hat{H}_0(\underline{\ell})|) \hat{p}_+(\underline{k}) \hat{\sf v}(\underline{k},
\underline{\ell})\hat{p}_-(\underline{\ell})\hat{g}(\underline{\ell}) \,d^s\underline{\ell}\,.
\end{eqnarray*}
The Fourier-transform $(F \tilde{a})$ of $\tilde{a}$ is in the Schwartz-class while the
modulus of $\hat{p}_+(\underline{k}) \hat{\sf v}(\underline{k},
\underline{\ell})\hat{p}_-(\underline{\ell})$ is continuous and uniformly bounded; therefore
the integral kernel of the last integral is clearly $L^2$ in the $(\underline{k},\underline{\ell})$ 
variables, proving that 
 $F p_+ dT_{sc}^{(V)} p_- F^{-1}$ and hence $p_+ dT_{sc}^{(V)} p_-$ is Hilbert-Schmidt.
${}$ \hfill $\Box$

\section{Bogoliubov's formula}\label{S_BogForm}

In this section, we will derive the expressions for $d/d\lambda|_{\lambda =0} \beta_{\lambda V}$ that we
alluded to in Sec.\ 5 (cf.\ eqns.\ \eqref{E_Scatder} and \eqref{E_derivationM}).
We proceed using the geometrical setting from the last part of Section~\ref{S_DiracMoyalSpecial}
(which has also been investigated already in the commutative case in Section~\ref{S_DiracField}).
Under consideration is, respectively, one of the following ``potential'' operators:
\begin{eqnarray}
\text{(0) }(Vf)^A(x)&=&(V_\text{(0)}f)^A(x)=c(x)f^A(x) \\ 
\text{(i) }(Vf)^A(x)&=&(V_\text{(i)}f)^A(x)=(c\star_{(q,p)}f^A)(x)+(f^A\star_{(q,p)}c)(x)\\ 
\text{(ii) }(Vf)^A(x)&=&(V_\text{(ii)}f)^A(x)=(c\star_{(q,p)}f^A\star_{(q,p)}c)(x), 
\end{eqnarray}
where $c\in C^\infty_0(\bR,\bR)\otimes \sS(\bR^s,\bR)$ is a function of the form
\begin{equation}
c(x)=a(t)b(\ulx),
\end{equation}
with $a\in C^\infty_0(\bR,\bR)$, $b\in\sS(\bR^s,\bR)$, $t=x^0$, $\ulx=(x^1,\ldots,x^s)$. These operators act on
$f\in C^\infty_0(\bR)\otimes \sS(\bR^s)\otimes \bC^N$ and $(q+p)\times (q+p)$ is the dimension of the matrix $M$, which
still shall be of the form
\[
M=
\left[ 
\begin{array}{c|cc}
0&\cdots&0\\
\hline 
\vdots&M_{(q+p-1)\times (q+p-1)}&\\
0&&\\
\end{array}
\right]_{(q+p)\times(q+p)}
\]
These potentials act non-trivially only inside the time-slice
$\{(x^0,x^1,\ldots,x^s):\lambda_-<x^0<\lambda_+\}$ for some real numbers $\lambda_-<\lambda_+$. We recall
the definitions of the regions $G_+,G_-$,
\begin{eqnarray*}
G_+&=&\{(x^0,x^1,\ldots,x^s):x^0>\lambda_++\frac{1}{2}\}\\
G_-&=&\{(x^0,x^1,\ldots,x^s):x^0<\lambda_--\frac{1}{2}\},
\end{eqnarray*}
forming hyperbolic neighbourhoods of the Cauchy hyperplanes 
\begin{eqnarray*}
\Sigma_+&=&\{(x^0,x^1,\ldots,x^s):x^0=\lambda_++1\},\\
\Sigma_-&=&\{(x^0,x^1,\ldots,x^s):x^0=\lambda_--1\}
\end{eqnarray*}
respectively.

With these assumptions, we obtain the following result, the proof of which makes use
of Proposition~\ref{Prop1}, Lemma~\ref{Lemma2} (commutative case) and
Proposition~\ref{Prop1forMoyal}, Lemma~\ref{Lemma2forMoyal} (non-commutative case).

\begin{theorem}
 It holds that
\[
\left.\frac{d}{d\lambda}\right|_{\lambda=0}\beta_{\lambda V}(\Psi_0(f))=
\begin{cases}
\Psi_0(cR_0f)\\
\Psi_0(c\star_{(q,p)}R_0f+(R_0f)\star_{(q,p)}c)\\
\Psi_0(c\star_{(q,p)}(R_0f)\star_{(q,p)}c),
\end{cases}
\]
$\lambda$ being a real parameter, for the respective choices of the operator
{\rm $V = V_\text{(0)},V_\text{(i)},V_\text{(ii)}$}.
\end{theorem}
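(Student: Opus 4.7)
The plan is to reduce the identity to a statement on Cauchy data using the intertwining relation between $\beta_{\lambda V}$ and the scattering evolution $T^{(\lambda V)}_{sc}$, and then to read off $W$ from the first-order term of the Dyson series. First I would write $\beta_{\lambda V}(\Psi_0(f)) = \Psi_0(U_{\lambda V}[f]_0)$, where $U_{\lambda V}$ is the unitary on $\cK_0$ implementing $\beta_{\lambda V}$. By Lemma~\ref{L_betaV_tauscV_intertw} in case~(0) and Lemma~\ref{L_betaV_tauscV_intertwMoyal} in cases~(i),(ii), one has $Q_0 \circ U_{\lambda V} = T^{(\lambda V)}_{sc} \circ Q_0$, with $Q_0$ independent of~$\lambda$. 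Hence
\[
\left.\frac{d}{d\lambda}\right|_{\lambda = 0} \beta_{\lambda V}(\Psi_0(f)) = \Psi_0\!\left(Q_0^{-1}\,\mathcal{S}\,P_0 R_0 f\right), \qquad \mathcal{S} := \left.\frac{d}{d\lambda}\right|_{\lambda = 0} T^{(\lambda V)}_{sc}.
\]
Thus it suffices to exhibit, for each choice of potential, a spinor field $W \in C_0^\infty(\bR)\otimes \sS(\bR^s)\otimes\bC^N$ such that $P_0 R_0 W = \mathcal{S}\cdot P_0 R_0 f$, for then $Q_0^{-1}\mathcal{S}P_0 R_0 f = [W]_0$.

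Next I would invoke the Dyson expansion from the proof of Proposition~\ref{L_ExUniqInitValProbMoyal} (and Proposition~\ref{Prop_dT}) to obtain the first-order term explicitly: $\mathcal{S} v$ is an integral over $t$ of $\tilde a(t)\,e^{itH_0}\,{\sf v}\,e^{-itH_0}\, v$, where $\tilde a(t)$ equals $a(t)$ in cases~(0),(i) and $a(t)^2$ in case~(ii), and ${\sf v}$ is ${\sf v}_{(0)} = \gamma^0 b$, ${\sf v}_{({\rm i})} = \gamma^0({\sf L}_b + {\sf R}_b)$, or ${\sf v}_{({\rm ii})} = \gamma^0 {\sf L}_b {\sf R}_b$ respectively. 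On the other side, the explicit integral representation established in Proposition~\ref{Prop1forMoyal}(\ref{Prop1forMoyal_b}) gives
\[
P_0 R_0 W(\ulx) = i\gamma^0 \int_{-\infty}^\infty e^{it'H_0}\, W_{t'}(\ulx)\,dt',
\]
and since $R_0 f$ solves the free Dirac equation one has $(R_0 f)_{t'} = e^{it'H_0}P_0 R_0 f$. Substituting the candidate $W$ of the theorem and using this last identity re-expresses $P_0 R_0 W$ as an integral involving the operator ${\sf v}$ sandwiched between free evolutions acting on $P_0 R_0 f$, which up to the conventional factors of $\gamma^0$ and $i$ agrees with $\mathcal{S}\, P_0 R_0 f$.

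The decisive point for cases~(i) and~(ii) is the very specific shape of the Moyal matrix $M$ in~\eqref{E_Moyalmatrix}, which forces the time coordinate to be central. By \eqref{E_qpMoyalTensProd} the Moyal product factorizes as $\star_{(q,p)} = (\cdot)\otimes \star_{(q-1,p)}$, so that with $c(x) = a(x^0)b(\ulx)$ one has
\[
c \star_{(q,p)} R_0 f = a(x^0)\,\bigl(b \star_{(q-1,p)}(R_0 f)_{x^0}\bigr), \qquad (R_0 f) \star_{(q,p)} c = a(x^0)\,\bigl((R_0 f)_{x^0}\star_{(q-1,p)}b\bigr),
\]
and analogously for case~(ii). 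Thus the time-dependence factors cleanly out of the spatial Moyal product, so that on the time-slices $\Sigma_{t'}$ the multiplication by $c$ is exactly the operator ${\sf v}$ (up to $\gamma^0$ and $\tilde a$) appearing in the Dyson expansion. The identification $P_0 R_0 W = \mathcal{S}\, P_0 R_0 f$ then proceeds in all three cases by the same algebraic matching.

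The main obstacle is the careful bookkeeping of the Dirac-matrix factor $\gamma^0$ and of the signs in the time-evolution operators inside the Dyson integrals and inside the integral representation of $P_0 R_0$, since $\gamma^0$ and $H_0$ do not anti-commute when $m>0$. This matching must be carried out at the level of the integrated expressions on the Cauchy-data space $\cD_0$, invoking Lemma~\ref{Lemma_HermPropMoyal} to shift left/right Moyal multiplications past the $L^2$ inner product in the Moyal cases, and using that $\sS(\bR^s,\bC^N)$ is invariant under both $e^{\pm itH_0}$ and under left/right/sandwich Moyal multiplication by $b$, as established in Proposition~\ref{L_ExUniqInitValProbMoyal}(a).
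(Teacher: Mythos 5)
Your proposal takes a genuinely different route from the paper. The paper works entirely in the covariant picture, following the BFV strategy: it makes explicit choices $f^{G_\pm}$ of representatives, obtains $U_{\lambda V}[f]_0 = [D_{\lambda V}\chi^{\rm ret}_- R_{\lambda V}D_0\chi^{\rm ret}_+ R_0 f]_0$, and then differentiates this composite formula at $\lambda = 0$ using $\delta R_{\lambda V}^- = -R_0^- (\delta D_{\lambda V}) R_0^-$ and support arguments for the advanced/retarded solutions. You instead transport the problem to the Cauchy-data space via $Q_0$, compute $\mathcal{S} = \left.\tfrac{d}{d\lambda}\right|_0 T_{sc}^{(\lambda V)}$ from the first-order Dyson term, and then must match $\mathcal{S}\,P_0R_0 f$ with $P_0 R_0(VR_0 f)$. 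Both routes are legitimate; yours is essentially elementary first-order perturbation theory in the Hamiltonian picture, while the paper's is more geometric and adapts directly to curved backgrounds. Your route also makes the logical connection to Proposition \ref{Prop_dT} and Eqn.\ \eqref{E_89a} more transparent, since those are formulated on Cauchy data.

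The one place your sketch stops short is exactly where you flag the "main obstacle": the matching of $P_0 R_0 (V R_0 f)$ against $\mathcal{S}\,P_0 R_0 f$. You worry that this needs delicate handling of $\gamma^0$ versus $H_0$ because they do not anti-commute when $m>0$. In fact no such relation is needed, and this step goes through verbatim. Writing $\mathcal{S}$ by variation of parameters rather than through the Dyson series directly,
\begin{equation*}
\mathcal{S} = \left.\frac{d}{d\lambda}\right|_{\lambda=0} T^{-1}_t T^{(\lambda V)}_{t,t'} T_{t'}
= i\int_{-\infty}^{\infty} T^{(0)}_{0,r}\,\gamma^0 V(r)\, T^{(0)}_{r,0}\,dr\,,
\end{equation*}
while the kernel representation of $R_0$ as in Proposition \ref{Prop1forMoyal}(\ref{Prop1forMoyal_b}) (with the $\gamma^0$ placed next to $W_{t'}$ rather than outside the integral, which is where it has to sit for $D_0 R^\pm_0 W = W$ to hold) gives
\begin{equation*}
P_0 R_0 W = i\int_{-\infty}^{\infty} T^{(0)}_{0,t'}\,\gamma^0\, W_{t'}\,dt'\,.
\end{equation*}
Substituting $W_{t'} = V(t')(R_0 f)_{t'} = V(t')T^{(0)}_{t',0} P_0 R_0 f$ yields exactly the same integrand: the $\gamma^0$ and the two free evolutions appear in the same positions on both sides, so no commutation of $\gamma^0$ past $H_0$ is required. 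Note also that the formula you quote, $P_0R_0W = i\gamma^0\int e^{it'H_0}W_{t'}\,dt'$, has $\gamma^0$ outside the integral and the sign of the exponent reversed relative to what one derives from $D_0 R_0^\pm = \eins$; if you carry the $\gamma^0$ inside (next to $W_{t'}$) the bookkeeping becomes automatic. With that correction your argument closes, and the reduction of case (ii) from case (i) via the factorisation $\star_{(q,p)} = (\cdot)\otimes\star_{(q-1,p)}$ is exactly right.
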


\begin{proof}
The priority of this proof lies on the non-commutative cases. The commutative case can be carried out along the same
lines. We recall the automorphism $\beta_{\lambda V}$ from~(\ref{E_DefBetaVforMoyal})
\begin{eqnarray}
&&\beta_{\lambda V}:\fF(\cK_0,C)\rightarrow \fF(\cK_0,C),\nonumber\\
&&\beta_{\lambda V}=\alpha_{0,-}\circ \alpha_{{\lambda V},-}^{-1}\circ \alpha_{{\lambda V},+}\circ \alpha_{0,+}^{-1},\nonumber
\end{eqnarray}
together with
\begin{equation*}
\beta_{\lambda V}(B_0([f]_0))=B_0(U_{\lambda V}[f]_0),
\end{equation*}
where $U_{\lambda V}$ is the unitary given by
\begin{equation*}
U_{\lambda V}=u_{0,-}\circ u_{{\lambda V},-}^{-1}\circ u_{{\lambda V},+}\circ u_{0,+}^{-1},
\end{equation*}
and where, similarly as for the isomorphisms above, we have used the abbreviations
\begin{equation*}
u_{0,\pm}=u^{G_\pm}_{0,0},\quad u_{{\lambda V},\pm}=u^{G_\pm}_{0,{\lambda V}}.
\end{equation*}
These equations arise from Lemma~\ref{Lemma2forMoyal}.
The  proof relies now on exactly the same strategy as the one for the very similar Theorem 4.3 of~\cite{BFV}.
With that in mind we start by finding more explicit expressions for the inverses in the chain of mappings
\begin{equation*}
U_{\lambda V}:\xymatrix{
[f]_0 \ar@{|->}[r]^{u_{0,+}^{-1}}& [f^{G_+}]^{G_+}_0 \ar@{|->}[r]^{u_{{\lambda V},+}}& [f^{G_+}]_{\lambda V}
\ar@{|->}[r]^{u_{{\lambda V},-}^{-1}}& [f^{G_-}]^{G_-}_0 \ar@{|->}[r]^{u_{0,-}}& [f^{G_-}]_0
},
\end{equation*}
where $f^{G_+}$ is any element in $C^\infty_0((\lambda_+,\infty))\otimes \sS(\bR^s)\otimes \bC^N$
such that $R_0(f-f^{G_+})=0$, and $f^{G_-}$ is any element in
$C^\infty_0((-\infty,\lambda_-))\otimes \sS(\bR^s)\otimes \bC^N$ such that $R_{\lambda V}(f^{G_+}-f^{G_-})=0$.
According to~\cite{BFV}, we choose
\begin{equation}\label{E_Def_fGpm}
f^{G_+}=-D_0\chi^\text{ret}_+R_0f,\quad
f^{G_-}=-D_{\lambda V}\chi^\text{ret}_-R_{\lambda V}f^{G_+},
\end{equation}
where $\chi^\text{ret}_\pm$ are defined as follows: It has been demanded that the open regions $G_\pm$ contain
Cauchy hyperplanes $\Sigma_\pm$. Then there are two pairs of further Cauchy surfaces in $G_\pm$, namely $\Sigma_\pm^\text{adv}$
in the timelike future of $\Sigma_\pm$ and $\Sigma_\pm^\text{ret}$ in the timelike past of $\Sigma_\pm$.
Thus
\begin{equation*}
\cO_\pm=[J^-(\Sigma_\pm^\text{adv})\cap J^+(\Sigma_\pm^\text{ret})]^\circ 
\end{equation*}
are open neighbourhoods of $\Sigma_\pm$ and $\overline{\cO_\pm}\subseteq G_\pm$. Now a partition of unity
$\{\chi^\text{adv}_\pm,\chi^\text{ret}_\pm\}$ is introduced on $\bR^n$ with $\chi^\text{adv}_\pm=0$ on
$J^-(\Sigma_\pm^\text{ret})$ and $\chi^\text{ret}_\pm=0$ on $J^+(\Sigma_\pm^\text{adv})$.

The crucial point lies in having to ensure that $f^{G_\pm}$ both lie in the the domain of
$R_{\lambda V}$ in view of
 the weakened support properties of the fundamental solutions $R^\pm_V$
(cf. Proposition~\ref{Prop1forMoyal}(\ref{Prop1forMoyal_b})) compared to the  situation in~\cite{BFV}.

Obviously it holds 
\begin{equation}\label{E_T_RelCauchyEvolNk_P_Dchi}
D_{\lambda V}\chi^\text{adv}_-R_{\lambda V} f
=-D_{\lambda V}\chi^\text{ret}_-R_{\lambda V}f.
\end{equation}
since $D_{\lambda V}R_{\lambda V}f=0$ and $\chi^\text{adv}_-+\chi^\text{ret}_-=1$.
The left hand side of~(\ref{E_T_RelCauchyEvolNk_P_Dchi}) vanishes on $J^-(\Sigma^\text{ret}_-)$
and the right hand side on $J^+(\Sigma^\text{adv}_-)$. 
Thus we can conclude from \eqref{E_T_RelCauchyEvolNk_P_Dchi} that both
$D_{\lambda V}\chi^\text{adv}_-R_{\lambda V} f$ and $-D_{\lambda V}\chi^\text{ret}_-R_{\lambda V}f$ lie in
$C_0^\infty(\mathbb{R}) \otimes \mathscr{S}(\mathbb{R}^s)$ and, hence, in the domain
of $R_{\lambda' V}$ for any $\lambda'$. This shows immediately that $f^{G_+}$ lies in the domain
of any $R_{\lambda' V}$, and, iterating the argument, the same holds for $f^{G_-}$.

Putting the definitions of~(\ref{E_Def_fGpm}) into the chain of mappings composing $U_{\lambda V}$ results in
\begin{equation*}
U_{\lambda V}[f]_0=[D_{\lambda V}\chi^\text{ret}_-R_{\lambda V}D_0\chi^\text{ret}_+R_0f]_0.
\end{equation*}
In the following we would like to abbreviate formally ``$\delta=\left.\frac{d}{d\lambda}\right|_{\lambda=0}$''.
We calculate
\begin{eqnarray*}
\delta U_{\lambda V}[f]_0&=&\delta [D_{\lambda V}\chi^\text{ret}_-R_{\lambda V}D_0\chi^\text{ret}_+R_0f]_0\\
&=&-[\delta D_{\lambda V}\chi^\text{ret}_-R_0f]_0+[D_0\chi^\text{ret}_-(\delta R_{\lambda V})D_0\chi^\text{ret}_+R_0f]_0,
\end{eqnarray*}
since $R_0D_0\chi^\text{ret}_+\varphi=-\varphi$. It is easy to see that $\delta D_{\lambda V}$ and
$\chi^\text{ret}_-$ have disjoint supports, and thus
\[
\delta U_{\lambda V}[f]_0=[D_0\chi^\text{ret}_-\delta R_{\lambda V}D_0\chi^\text{ret}_+R_0f]_0.
\]
$R_{\lambda V}=R_{\lambda V}^+-R_{\lambda V}^-$ implies
\[
\chi^\text{ret}_-R_{\lambda V}D_0\chi^\text{ret}_+\varphi
=\chi^\text{ret}_-R_{\lambda V}^+D_0\chi^\text{ret}_+\varphi-\chi^\text{ret}_-R_{\lambda V}^-D_0\chi^\text{ret}_+\varphi,
\]
whereof the first term on the right hand side vanishes, since $\supp \chi^\text{ret}_-\subseteq J^-(G_-)$,
$\supp R_{\lambda V}^+D_0\chi^\text{ret}_+\subseteq \overline{\cT^+(G_+)}$ and
$\overline{\cT^+(G_+)}\cap J^-(G_-)=\emptyset$. Hence
\[
\delta U_{\lambda V}[f]_0
=[-D_0\chi^\text{ret}_-\delta R_{\lambda V}^-D_0\chi^\text{ret}_+R_0f]_0.
\]
And this equals
\[
[D_0\chi^\text{ret}_-R_0^-\delta D_{\lambda V}R_0^-D_0\chi^\text{ret}_+R_0f]_0,
\]
because of the following deduction:
\begin{eqnarray*}
R_{\lambda V}^-D_{\lambda V}=\eins &\Rightarrow& (\delta R_{\lambda V}^-)D_0+R_0^-(\delta D_{\lambda V})=0\\
&\Rightarrow& \delta R_{\lambda V}^-D_0R_0^-=-R_0^-\delta D_{\lambda V}R_0^-\\
&\Rightarrow& \delta R_{\lambda V}^-=-R_0^-\delta D_{\lambda V}R_0^-.
\end{eqnarray*}
Support arguments lead to $\chi^\text{ret}_-R_0^+\delta D_{\lambda V}=0$ and
$\delta D_{\lambda V}R_0^+D_0\chi^\text{ret}_+\varphi=0$ and therefore
\[
\delta U_{\lambda V}[f]_0
=[D_0\chi^\text{ret}_-R_0\delta D_{\lambda V}R_0D_0\chi^\text{ret}_+R_0f]_0
=[\delta D_{\lambda V}R_0f]_0,
\]
since $R_0D_0\chi^\text{ret}_\pm=-\eins$.\\
Obviously $\delta D_{\lambda V}=V$, which is just one of the three choices of a potential operator.
Hence
\begin{eqnarray*}
\delta \beta_{\lambda V}(\Psi_0(f))
&=&\delta \beta _{\lambda V}(B_0([f]_0))
=\delta B_0(U_{\lambda V}[f]_0)\\
&=&B_0([VR_0f]_0)
=\Psi_0(VR_0f).
\end{eqnarray*}
\qed
\end{proof}

\section{Conclusion and outlook}

We have given a brief sketch of how a simple quantum
field theory on Moyal-Minkowski spacetime can be constructed in such a way as providing a model for the
construction of quantum field theories on more general
Lorentzian non-commutative spacetimes in a setting inspired by spectral geometry. For this quantum field
theory --- which is the quantized Dirac field ---
we have seen that a construction of observable field
operators labelled by elements of the deformed function algebra of
Moyal-Minkowski space can be derived, via Bogoliubov's
formula, from the S-matrix describing scattering of the
usual Dirac field on Minkowski spacetime by a non-commutative scalar potential. Again, it is feasible
that this procedure can be generalized to more general
Lorentzian non-commutative spacetimes.

However, it is certainly inappropriate, at this stage, to
judge the ge\-ne\-ra\-lity of the method. Moyal-Minkowski spacetime with commutative time is a very simple and
very special non-commutative geometry whose physical relevance is not compelling, to say the least. On the other hand, due to the quite unusual and counter-intuitive
properties of non-commutative geometries, one surely
needs examples as one's guidance towards developing
physical theories in non-commutative geometries, such as
quantum field theory. This clearly shows a dilemma: 
The examples for non-commutative geometries that are
manageable are probably un-physical and may therefore do a very poor job as a guidance when attempting to find some central principles, while without such principles, it is hard to judge which non-commutative geometries are related to
physics. Nevertheless, one can probably do better, and
try and investigate our method of construction of quantum
field theories and their observables for non-commutative
spacetimes that have a greater physical appeal, such as
developed in \cite{DFR} and \cite{WaldmannBahns}, for
example. Even if this appears to be a considerably 
more difficult task, we think it is worth an attempt.   

\appendix

\section{The Action of the Wick square}
In this Appendix we will prove that the Wick-square acts as a derivation on the
Dirac field operators in the same way as the derivative of the S-matrix with respect to
the scalar scattering potential. The assumptions, wherever not spelled out in detail, are
those stated in Sec.\ 5.
\begin{proposition}
Let $e_\mu$ and $\eta_\mu$ ($\mu = 1,\ldots,L$) be elements in $\mathbb{C}^N$, chosen
such that $\sum_{\mu = 1}^L \overline{e_\mu}{}^A \eta_{\mu}^B = \frac{1}{4}\gamma_0{}^{AB}$. Define
for $q_1, q_2 \in C_0^\infty(\mathbb{R}^n,\mathbb{R})$,
$$ \bm \psi^+\bm \psi(q_1 \otimes q_2) = \sum_{\mu = 1}^L \bm \psi
(q_1 e_\mu)^* \bm \psi(q_2 \eta_\mu) $$
and whence,
$$ :\bm \psi^+ \bm \psi :(c) = \lim_{\epsilon \to 0} \ \bm \psi^+\bm \psi(F_\epsilon) -
   (\Omega^{\rm vac},\bm \psi^+\bm \psi(F_\epsilon)\Omega^{\rm vac}) \eins $$
   on $\mathcal{W}$
with $F_\epsilon(x,y) = q_1(x)q_2(y) j_\epsilon(x-y)$ and $c(x) = q_1(x)q_2(x)$ $(x,y \in \mathbb{R}^N)$, where
$\lim_{\epsilon \to 0}\int q(x)j_\epsilon(x -y)\,d^nx = q(y)$ $(q \in C_0^\infty(\mathbb{R}^n))$. 
\\[6pt]
Then 
$$ [:\bm \psi^+\bm \psi:(c),\bm \psi(f)] = -i\bm \psi(c R_0 f) \quad (f \in C_0^\infty(\mathbb{R}) \otimes 
\mathscr{S}(\mathbb{R}^s)\otimes \mathbb{C}^N) \,.$$
Moreover, $:\bm \psi^+\bm \psi:(c)$ is independent of the choice of families $e_\mu , \eta_\mu \in \mathbb{C}^N$
fulfilling $\sum_\mu \overline{e_\mu}{}^A \eta_\mu^B = \frac{1}{4}\gamma_0{}^{AB}$.
\end{proposition}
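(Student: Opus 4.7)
The plan is to establish the commutator identity at the algebraic (pre-limit) level for tensor-product test functions, extend by bilinearity, take $\epsilon \to 0$, and deduce independence of $(e_\mu,\eta_\mu)$ from the commutator identity together with irreducibility of the vacuum representation.

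The starting point is the algebraic identity $[XY,Z] = X\{Y,Z\} - \{X,Z\}Y$. Combining it with $\bm\psi(f)^* = \bm\psi(Cf)$ and the self-dual CAR $\{\bm\psi(g)^*,\bm\psi(f)\} = 2(g,f)_0 \eins$ (hence $\{\bm\psi(h),\bm\psi(f)\} = 2(Ch,f)_0 \eins$) one obtains
\begin{equation*}
  [\bm\psi(g)^*\bm\psi(h),\bm\psi(f)] = 2(Ch,f)_0 \bm\psi(Cg) - 2(g,f)_0 \bm\psi(h)\,.
\end{equation*}
Setting $g = q_1 e_\mu$, $h = q_2 \eta_\mu$, summing over $\mu$, and using the hypothesised real-bilinear extension of $\bm\psi^+\bm\psi$ to $F \in C_0^\infty(\mathbb{R}^n \times \mathbb{R}^n,\mathbb{R})$, this becomes a distributional identity for $[\bm\psi^+\bm\psi(F),\bm\psi(f)]$ in which the sum over $\mu$ is performed pointwise inside the integrals defining the two summands.

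Applying the contraction rule $\sum_\mu \overline{e_\mu^A}\eta_\mu^B = \tfrac14 \gamma_0{}^{AB}$ and the identity $\gamma_{0AB}\gamma_0{}^{BC} = \delta_A^C$ (i.e.\ $\gamma_0^2 = \eins$), and unpacking $(g,f)_0 = i\int \gamma_{0AB}\overline{g^B}(R_0 f)^A\, d^n x$ with the $q_j$ real, the second summand collapses to $-\tfrac{i}{2}\iint F(x,y)(R_0 f)^A(x) \bm\psi_A(y)\, d^n x\, d^n y$. The first summand $2(Ch,f)_0\bm\psi(Cg)$ is treated in parallel: one pushes $C$ through $R_0$ via $CR_0 = R_0 C$ and through $\gamma_0$ via the Clifford relation $C\gamma_\mu = -\gamma_\mu C$, exploiting $C^2 = \eins$ and the Hermiticity of $\gamma_0$ together with the identity $\langle Cu,v\rangle = -\langle Cv,u\rangle$ (a direct consequence of $\langle Cf,Ch\rangle = -\langle h,f\rangle$) to bring it into a form to which the same contraction rule applies, yielding $-\tfrac{i}{2}\iint F(x,y)(R_0 f)^A(y)\bm\psi_A(x)\,d^n x\, d^n y$. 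Specialising to $F = F_\epsilon(x,y) = q_1(x)q_2(y)j_\epsilon(x-y)$ and letting $\epsilon \to 0$, $j_\epsilon$ forces the integration onto the diagonal $x = y$; each contribution collapses to $-\tfrac{i}{2}\bm\psi(c R_0 f)$ with $c(x) = q_1(x)q_2(x)$, and their sum is $-i\bm\psi(cR_0 f)$. Since the vacuum-expectation subtraction is a scalar multiple of $\eins$ it drops out of the commutator, so passing to the limit on $\mathcal{W}$ (the existence of which is built into the definition of $:\bm\psi^+\bm\psi:(c)$) delivers the claimed identity.

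For independence of $(e_\mu,\eta_\mu)$, let $(e'_\nu,\eta'_\nu)$ be a second family satisfying the same contraction rule and denote by $A_\epsilon,\,A'_\epsilon$ the corresponding unregularized operators $\bm\psi^+\bm\psi(F_\epsilon)$. By the computation above $[A_\epsilon - A'_\epsilon,\bm\psi(f)] = 0$ for every admissible $f$, so $A_\epsilon - A'_\epsilon$ lies in the commutant of $\pi^{\rm vac}(\fF(\cK_0,C))$ in $\cH^{\rm vac}$; irreducibility of the Fock vacuum representation associated with the basis projection $p_+$ forces $A_\epsilon - A'_\epsilon = c_\epsilon \eins$ with $c_\epsilon = (\Omega^{\rm vac},(A_\epsilon - A'_\epsilon)\Omega^{\rm vac})$, which is exactly what the Wick subtraction removes; the two regularized operators therefore agree for every $\epsilon$ and so do their limits. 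The main technical obstacle I anticipate is the index bookkeeping of the first summand above: casting $2(Cq_2\eta_\mu,f)_0\bm\psi(Cq_1 e_\mu)$ into a form in which the contraction rule $\sum_\mu \overline{e_\mu^A}\eta_\mu^B = \tfrac14 \gamma_0{}^{AB}$ (rather than some apparently different combination involving $C$) produces the index reduction requires careful use of the Clifford relation $C\gamma_\mu = -\gamma_\mu C$, of $C^2 = \eins$ (equivalently $\overline C = C^{-1}$ for the charge-conjugation matrix), and of the Hermiticity of $\gamma_0$; only then do the signs and factors of $\tfrac12$ from the two summands conspire to deliver exactly $-i\bm\psi(cR_0 f)$, with the remaining analytic point — convergence of the double integrals on $\mathcal{W}$ as $\epsilon \to 0$ — following from the smoothness of $R_0 f$ and $q_1,q_2$ and the standard mollifier property $\int h(x) j_\epsilon(x-y)\, d^n x \to h(y)$, since the singular $x = y$ part has been absorbed by the Wick subtraction.
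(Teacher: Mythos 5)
Your proposal is correct and follows essentially the same route as the paper's own proof: expand the commutator via the CAR relations to obtain $2(Ch,f)_0\bm\psi(Cg) - 2(g,f)_0\bm\psi(h)$ (the paper records this as a sesquilinear-form identity between matrix elements on $\mathcal{W}$, which is the same computation stated weakly), apply the contraction rule $\sum_\mu \overline{e_\mu}{}^A\eta_\mu^B = \tfrac14\gamma_0{}^{AB}$ together with $\gamma_0^2=\eins$, $C^2=\eins$, $CR_0=R_0C$ and $\langle Cf,Ch\rangle = -\langle h,f\rangle$ to collapse each term, specialize to $F_\epsilon$ and let $\epsilon\to 0$, and derive independence of the auxiliary spinor families from the commutator identity together with irreducibility and the vacuum normalization. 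The only cosmetic difference is that you argue independence at fixed $\epsilon$ while the paper argues it after the limit; both rest on the same irreducibility plus vacuum-subtraction reasoning.
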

{\bf Proof} \ \ Using the relations of the generators of the CAR algebra, it follows that 
\begin{eqnarray} \label{E_ap1}
 & & {} \hspace*{-1cm} \sum_\mu (\chi_1, [\bm \psi(q_1 e_\mu)^*\bm \psi(q_2 \eta_\mu), \psi(f)] \chi_2) \\
 & & {} \hspace*{-0.5cm} = 2 \sum_\mu \left\{ (\chi_1,\bm \psi (C q_1 e_\mu)\chi_2) (Cq_2 \eta_\mu,f)_{0} -
                     (\chi_1,\bm \psi(q_2 \eta_\mu)\chi_2) (q_1 e_\mu,f)_{0} \right\} \nonumber
\end{eqnarray}                     
holds for all vectors $\chi_1,\chi_2$ in the dense domain $\mathcal{W} \subset \mathcal{H}^{\rm vac}$ and for all
$f \in C_0^\infty(\mathbb{R}) \otimes 
\mathscr{S}(\mathbb{R}^s) \otimes \mathbb{C}^N$.
We recall here the definition
$$ (f,h)_{0} = i \int \gamma_{0AB} \overline{f}{}^B(x)(R_0 h)^A(x)\, d^nx\,$$
(for $V=0$, see Prop.~\ref{Prop1forMoyal} or respectively~\ref{Prop1} and eqn.~(\ref{E_SesqLinFormLRangle})). One can show
either directly, or by falling back onto general arguments \cite{FreHer}, that for each pair of vectors
$\chi_1,\chi_2 \in \mathcal{W}$ there are smooth functions $\xi_A$ on $\mathbb{R}^n$ ($A = 1,\ldots,N$) such that
\begin{equation} \label{E_ap2_first}
(\chi_1,\bm \psi(f) \chi_2) = \int \xi_A(y) f^A(y)\,d^n y \quad (f \in C_0^\infty(\mathbb{R}) \otimes 
\mathscr{S}(\mathbb{R}^s) \otimes \mathbb{C}^N) 
\,.
\end{equation}
With this notation, the right hand side of \eqref{E_ap1} assumes the form
\begin{eqnarray} \label{E_ap2}
 & & {}\hspace*{-1cm}  2i \sum_{\mu} \iint \overline{q}_1(y) q_2(x) \xi_A(y) (C e_\mu)^A(y) \gamma_{0BD}
          \overline{C\eta_\mu}{}^B(R_0f)^D(x)\,d^nx\,d^ny \\
 & & {}\hspace*{-0.5cm} - 2i \sum_{\mu} \iint \overline{q}_1(x) q_2(y) \xi_A(y)\eta_\mu^A
 \gamma_{0 D'B'} \overline{e_\mu}{}^{B'}(R_0f)^{D'}(x) \,d^nx\, d^ny \nonumber 
\end{eqnarray}
Using the defining property $\sum_{\mu} \overline{e_\mu}{}^A\eta_\mu^B = \frac{1}{4}\gamma_0{}^{AB}$ and $\gamma_0{}^2 = \eins$,
the second integral of \eqref{E_ap2} simplifies to
\begin{equation} \label{E_ap3}
-\frac{i}{2}  \iint \overline{q}_1(x) q_2(y) \xi_A(y)(R_0f)^A(x)\,d^nx \,d^ny\,.
\end{equation}
In a similar manner, using also the relations $C^2 = \eins$ and~\eqref{E_CSkewForSesqu}, one can check that the first integral in
\eqref{E_ap2} simplifies to
\begin{equation} \label{E_ap4}
-\frac{i}{2} \iint \overline{q}_1(y) q_2(x) \xi_A(y)(R_0f)^A(x)\,d^nx \,d^ny\,,
\end{equation}          
so that \eqref{E_ap2} becomes equal to
\begin{equation}\label{E_ap5}
-\frac{i}{2} \iint (q_1(y)q_2(x) + q_2(y)q_1(x))\xi_A(y)(R_0f)^A(x)\,d^nx \, d^ny\,,
\end{equation}
observing that $q_1$ and $q_2$ are real-valued.
Replacing here $q_1 \otimes q_2$ by $F_\epsilon$ and taking the limit $\epsilon \to 0$ turns the last expression
into
\begin{equation} \label{E_ap6}
-i\int  \xi_A(x) c(x)(R_0f)^A(x)\, d^nx\,.
\end{equation}
On using \eqref{E_ap2_first}, we have therefore proved the first claim of the Proposition.

To see the independence of the definition of $:\bm\psi^+\bm\psi:(c)$ of the mentioned choices,
we note that the commutator formula for $:\bm\psi^+\bm\psi:(c)$, together with the fact that the
$*$-algebra generated by $\eins$ and all the $\bm\psi(f)$ acts irreducibly, fixes
$:\bm\psi^+\bm\psi:(c)$ up to addition of a multiple of the unit operator $\eins$. On the other
hand, by construction we have $(\Omega^{\rm vac},:\bm\psi^+\bm\psi:(c)\Omega^{\rm vac}) = 0$,
so that the scalar multiple in question must vanish in the vacuum state, which implies that it is
zero. This demonstrates the claimed independence of the definition of $:\bm\psi^+\bm\psi:(c)$ of
the possible choices for $e_\mu$ and $\eta_\mu$.  
${}$ \hfill $\Box$ 
${}$\\[30pt]
{\bf Acknowledgments}
The second named author would like to thank Mario Paschke for many discussions on the topics
presented here and for continuing collaboration on Lorentzian spectral geometry of which 
some ideas have been sketched in the text. Thanks are also extended to Sergio Doplicher,
Klaus Fredenhagen and Raimar Wulkenhaar for discussions and comments. The first named author
gratefully acknowledges financial support by the DFG.

\end{document}